\newcounter{thm}
\newcounter{theorem}
\theoremstyle{definition}
\newtheorem{defin}[thm]{Definition}
\newtheorem{prop}[thm]{Proposition}
\newtheorem{llemma}[thm]{Lemma}
\newtheorem{theorem}[thm]{Theorem}
\newtheorem{corollary}[thm]{Corollary}
\newtheorem{example}[thm]{Example}
\newtheorem{remark}[thm]{Remark}
\newcommand{\N}[0]{{\mathbb{N}}}
\newcommand{\La}[0]{{\mathcal{L}}}
\newcommand{\Ff}[0]{{\mathfrak{F}}}
\newcommand{\Gg}[0]{{\mathfrak{G}}}
\newcommand{\carrow}[3]{#3 \xleftarrow{#2} #1 }
\newcommand{\Pa}[0]{{\mathcal{P}}}
\newcommand{\Pirn}[0]{\Pi Z_n}
\newcommand{\Rn}[0]{Z_n}
\newcommand{\Rnm}[0]{Z_{n-1}}
\newcommand{\Trace}[0]{\operatorname{Traces}}
\newcommand{\height}[0]{\text{height}}
\newcommand{\HArr}[0]{{\operatorname{HArr}}}
\newcommand{\Arr}[0]{{\operatorname{Arr}}}
\newcommand{\Obj}[0]{{\operatorname{Obj}}}
\newcommand{\freedisth}[0]{{H_\Sigma^D}}
\newcommand{\mine}[1]{}
\title{Wreath Products of Distributive Forest Algebras}
\author{Michael Hahn}
\affiliation{%
  \institution{Stanford University}
}
\email{mhahn2@stanford.edu}
\author{Andreas Krebs}
\affiliation{%
  \institution{University of T{\"u}bingen}
}
\email{mail@krebs-net.de}
\author{Howard Straubing}
\affiliation{%
  \institution{Boston College}
}
\email{howard.straubing@bc.edu}
\begin{document}

\begin{abstract}

It is an open problem whether definability in Propositional Dynamic Logic (PDL) on forests is decidable.
Based on an algebraic characterization by Boja{\'n}czyk, {\it et. al.,} (2012) in terms of forest algebras, Straubing (2013) described an approach to PDL based on a $k$-fold iterated distributive law.
A proof that all languages satisfying such a $k$-fold iterated distributive law are in PDL would  settle decidability of PDL.
We solve this problem in the case $k = 2$: All languages recognized by forest algebras satisfying a 2-fold iterated distributive law are in PDL.
Furthermore, we show that this class is decidable.
This provides a novel nontrivial decidable subclass of PDL, and demonstrates the viability of the proposed approach to deciding PDL in general.
\end{abstract}

\maketitle


\section{Introduction}

\subsection{Motivation}

A much-studied problem in the theory of automata is that of determining whether a given regular language $L$ can be defined by a formula of some logic--in other words, to give an effective characterization of the precise expressive power of the logic.  For automata over words, there is by now a large collection of such results, giving effective tests for definability in many temporal and predicate logics. 

For tree automata, the situation is quite different: the problems of effectively deciding expressibility in $CTL,$ $CTL^*,$ first-order logic with ancestor, and propositional dynamic logic ($PDL$) remain open to this day.

In  \cite{bojanczyk-wreath-2012} Boja{\'n}czyk, {\it et. al.,} proposed to attack this problem by adapting the algebraic tools that have proved so successful in the case of word languages.  They proved (working in the setting of languages of finite unordered forests) that the languages definable in each of the four logics cited above can be characterized as those recognized by iterated wreath products of forest algebras, where the factors in the wreath product all belong to a particular decidable variety of algebras.  For example, languages in $PDL,$   which are the focus of the present paper, are exactly those recognized by wreath products of forest algebras, each of which has an idempotent and commutative horizontal monoid, and which satisfies a distributive law.  (See below for precise definitions).

Straubing, in ~\cite{straubing-new-2013} detailed a possible approach to $PDL$ by noting that forest algebras that divide an iterated wreath product of $k$ distributive algebras satisfy a kind of order $k$ generalized distributive law (analogous to solvable groups, which satisfy an order $k$ commutative law, for some $k>0$).  Determining whether a given forest algebra satisfies such a generalized distributive law for some $k$ is a decidable problem.  So if one could prove that every such generalized distributive forest algebra divides a wreath product of distributive algebras, the question of definability in $PDL$ would be settled.

Here, we solve this problem in the case $k=2$.  More precisely, we show that every 2-distributive finite forest algebra divides a wreath product of four distributive algebras, and that further, 2-distributivity is itself a decidable property. Thus we have identified a decidable nontrivial subclass of $PDL,$ and demonstrated the viability of the proposed approach for deciding $PDL$ in general.

$PDL$ contains the logics $CTL$ and $CTL^*$, the latter being the bisimulation-invariant part of first-order logic on trees.
The graded equivalent of $PDL$, also known as Chain Logic, fully contains first-order logic with ancestor.
$PDL$ and Chain Logic are the largest among the tree logics that have been considered in \cite{bojanczyk-wreath-2012} and related work on finding effective characterizations.
Indeed, $PDL$ could be seen as the `largest' nontrivial bisimulation-invariant  logic on finite trees: It seems that no nontrivial logic class has been found between $PDL$ and the bisimulation-invariant Boolean Formula Value problem, which is not definable in any of these logics \cite{potthoff-first-order-1995, straubing-new-2013}.
Strikingly, for all of these logics, decidability is still open despite several attempts, and a family of decidability results obtained for smaller fragments of these logics (e.g., \cite{bojanczyk-effective-2008,bojanczyk-piecewise-2012,place-deciding-2010,benedikt-regular-2009}).
Furthermore, all of these logics were characterized in \cite{bojanczyk-wreath-2012} in terms of iterated wreath products of certain forest algebras that satisfy a distributive law.
While the representations of $CTL$ and $CTL^*$ place restrictions on these algebras, $PDL$ is characterized by products of arbitrary distributive forest algebras.

\subsection{Outline of the paper}

In Section~\ref{sec:background} we recall the basic definitions of forests and forest algebras.
In Section~\ref{sec:wreath-pdl}, we review the algebraic characterization of Propositional Dynamic Logic (PDL) in terms of wreath products of distributive forest algebras.
In Section~\ref{sec:2-dis}, we discuss 2-distributive forest algebras, the main object of study in this paper.
In Section~\ref{sec:categories}, we review a generalization of the classical Derived Category Theorem to the setting of forest algebras, recently introduced by \cite{straubing-forest-2018}.
In Section~\ref{sec:main-result}, we prove the main result: Languages recognized by 2-distributive forest algebras are in PDL. We will discuss the two contributions on which this result rests, namely a separation theorem and a Local-Global theorem.
We discuss the role of our results in Section~\ref{sec:discussion}.

\section{Background on Finite Forests and Forest Algebras}\label{sec:background}

\begin{defin}[Forest Algebras]\label{def:forest-algebra}
A tuple $(H,V)$ is called a \emph{forest algebra} if the following conditions hold:
\begin{enumerate}
\item $H$ is a monoid, whose operation is written $+$, with neutral element $0_H$

\item $V$ is a monoid, whose operation is written $\cdot$, with neutral element $1_V$

\item There is an operation $V \times H \rightarrow H$, also written $\cdot$
\item This operation is an action: $v\cdot (v'\cdot h) = (vv')\cdot h$

\item The action is faithful: If $v\cdot h = v'\cdot h$ for all $h \in H$, then $v=v'$

\item There is a map $I_\cdot : H \rightarrow V$ such that $I_h h' = h+h'$.

Note that, due to faithfulness, this map is uniquely determined. We will write $h+v$ for $I_h \cdot v$ (that is, the product of $I_h$ and $v$ as elements of $V$).

\item For each $h \in H$, there is $v \in V$ such that $h = v\cdot 0_H$.
\end{enumerate}

If $(H,V)$, $(H',V')$ are forest algebras, then a pair $\phi = (\phi_H,\phi_V)$ of maps $\phi_H : H \rightarrow H'$, $\phi_V : V \rightarrow V'$ is called a \emph{morphism} if it respects these structures.
More formally, we require that (1) $\phi_H$, $\phi_V$ are monoid morphisms, (2) $\phi_V(v)\phi_H(h) = \phi_H(vh)$, (3) $I_{\phi_H(h)} = \phi_V(I_h)$ for all $h \in H$, $v \in V$.
\end{defin}


We will often omit the $\cdot$ operator for the multiplication on $V$ and the action of $V$ on $H$.
But we will never omit the $+$ operator for the addition on $H$.

We will use $\Ff, \Gg$ as variables for forest algebras.
For a forest algebra $\Ff = (H,V)$, the elements of $H$ are called \emph{forest types}, while the elements of $V$ are called \emph{context types}.
Given a forest algebra $\Ff = (H,V)$, we will sometimes write $H_\Ff$, $V_\Ff$ for $H$ and $V$, respectively.


\paragraph{Trees and Forests}
Let $\Sigma$ be a finite set, referred to as \emph{alphabet}.
By \emph{trees over $\Sigma$}, we refer to finite (well-founded) trees, all of whose nodes are labeled with symbols in $\Sigma$.
We do not allow empty trees.



\paragraph{Contexts, Free Forest Algebra}
A \emph{context} is a forest where (exactly) one leaf is labeled with a variable instead of a symbol from $\Sigma$.

Contexts form a monoid $V_\Sigma$: We define $v \cdot v'$ to be the context obtained by replacing the variable in $v$ with the context $v'$. The result is again a context.
This operation is associative.
The identity element is the context consisting of only a variable.
Forests form a monoid $H_\Sigma$, with union as the monoid operation $+$, and the empty forest as the identity element.
The monoid of contexts acts on the monoid of forests, with insertion of forests into the hole of a context as the operation.
Taken together, the monoid of forests and the monoid of contexts form a forest algebra, the \emph{free forest algebra} $\Sigma^\Delta = (H_\Sigma, V_\Sigma)$.

\begin{defin}[Recognition]
A forest algebra $(H,V)$ \emph{recognizes} a forest language $\La \subseteq H_\Sigma$ (that is, a set of forests) if and only if there is a forest algebra morphism $\phi : \Sigma^\Delta \rightarrow (H,V)$ such that $\La = \phi^{-1}(\phi(\La))$.
\end{defin}

Many notions from Universal Algebra carry over to forest algebras.
If a tuple $(H',V')$ consists of subsets $H' \subseteq H$ and $V' \subseteq V$, then it is a \emph{subalgebra} of a forest algebra $(H,V)$ if it is closed under the forest algebra operations of $(H,V)$.
A pair of equivalence relations on $H$ and $V$ is called a \emph{congruence} of $(H,V)$ if it respects the forest algebra operations.
The \emph{quotient} $(H',V')$ of $(H,V)$ by a congruence is formed by taking $H'$ and $V'$ to be the sets of equivalence classes of the respective equivalence relations given by the congruence.
Since congruences respect forest algebra operations, the action $\cdot$ and the operation $I_\cdot$ are well-defined on the quotient.

\begin{defin}[Division]
Let $(H,V)$, $(H', V')$ be forest algebras. Then $(H,V) \prec (H', V')$ if $(H,V)$ is a quotient of a subalgebra of $(H',V')$.
\end{defin}

If $\Ff \prec \Ff'$, then any language recognized by $\Ff$ is also recognized by $\Ff'$. This is shown in analogy to the parallel result for word languages and monoids~\cite{eilenberg-automata-1976}.

\paragraph{Horizontal Idempotency and Commutativity}
A forest algebra $(H,V)$ is \emph{horizontally commutative and idempotent} if $h+h = h$ and $h_1+h_2 = h_2+h_1$ hold for all $h, h_1, h_2 \in H$.
From now on, we will assume that all forest algebras are horizontally commutative and idempotent.
This is no loss of generality, since all PDL languages are recognized by horizontally commutative and idempotent forest algebras \cite{bojanczyk-wreath-2012}.

Furthermore, we will consider \emph{trees} without regard to order and multiplicity of children. 
More formally, we can define trees inductively as follows: The set of trees over $\Sigma$ is the smallest set such that (1) $\alpha[\emptyset]$ is a tree whenever $\alpha\in\Sigma$, and (2) whenever $C$ is a finite set of trees, and $\alpha \in \Sigma$, then $\alpha[C]$ is also a tree.
%
Note that this means that the free forest algebra $\Sigma^\Delta$ is also horizontally commutative and idempotent.

\section{PDL and Wreath Products of Forest Algebras}\label{sec:wreath-pdl}

Having introduced the general algebraic background for studying wreath products of forest algebras,
we now discuss \emph{distributive} forest algebras and the forest logic that we are focusing on, Propositional Dynamic Logic. 
We refer to \cite{bojanczyk-wreath-2012} for the definition of Propositional Dynamic Logic as a temporal logic on trees and forests.
For our purposes, the algebraic characterization from \cite{bojanczyk-wreath-2012} will be sufficient.
We will first introduce the forest algebra wreath product.

\subsection{Wreath Products of Forest Algebras}
In \cite{bojanczyk-wreath-2012}, Boja{\'n}czyk \textit{et al.} introduced the following wreath product operation on forest algebras: 

\begin{defin}[\cite{bojanczyk-wreath-2012}]
Let $(H_1,V_1)$, $(H_2,V_2)$ be forest algebras.
Then define the \emph{wreath product} as
$$(H_1,V_1) \wr (H_2,V_2) := (H_1 \times H_2, V_1^{H_2} \times V_2)$$
with the following operations: For $(f,v) \in V_1^{H_2} \times V_2$ and $(h_1, h_2) \in H_1 \times H_2$, let $$(f,v)  (h_1, h_2) := (f(h_2)h_1, vh_2)$$
For $(f,v), (f',v') \in V_1^{H_2} \times V_2$, let $$(f,v)(f',v') := (f'', vv')$$ with $f''(h) := (f(v'h)) \cdot (f'(h))$. For the operation on $H_1\times H_2$, we use the structure of the direct product.
\end{defin}

\cite{bojanczyk-wreath-2012} showed that $(H_1, V_1) \wr (H_2,V_2)$ is a forest algebra.
Also, the wreath product is associative up to isomorphism \cite{bojanczyk-wreath-2012}: $((H_1, V_1) \wr (H_2, V_2)) \wr (H_3, V_3) \equiv(H_1, V_1) \wr ((H_2, V_2) \wr (H_3, V_3))$.
Therefore, it makes sense to talk about iterated wreath products of classes of forest algebras.

The wreath product has been applied to finite forest algebras in previous work, but nothing in the definition precludes application to infinite forest algebras ($V_1^{H_2}$ will then be uncountable).
We will make reference to wreath products of infinite forest algebras for illustrative purposes, but our main result will not depend on this.


\subsection{Distributive Forest Algebras and PDL}\label{sec:dis}

A forest algebra $(H,V)$ is called \emph{distributive} \cite{bojanczyk-wreath-2012} if 
\begin{equation}
v(h_1+h_2) = vh_1 + vh_2
\end{equation}
 for all $v \in V$, $h_1, h_2 \in H$.
Equivalently, $(H,V)$ is distributive if, for each morphism $\phi : \Sigma^\Delta \rightarrow (H,V)$, for all contexts $c$, and for all forests $f_1, f_2$, the following equality holds:
\begin{equation}\label{eq:dis-def-2}
\phi(c(f_1+f_2)) = \phi(cf_1+cf_2)
\end{equation}
Recall that, additionally, we require idempotency ($h+h = h$ for $h \in H$) and commutativity ($h_1+h_2=h_2+h_1$ for $h_1, h_2 \in H$) for all forest algebras in this paper.

There is a close connection between distributive forest algebras and the sets of paths of forests.
If $\phi : \Sigma^\Delta \rightarrow \Ff$ is a morphism to a distributive algebra $\Ff$, then, for any forest $f$, the value $\phi(f)$ only depends on the set of (not necessarily maximal) paths in the forest $f$.
If $L$ is a regular language of words, then the language of forests that have (not necessarily maximal) paths in $L$ is recognized by a finite distributive forest algebra.
More generally, the class of forest languages recognized by such algebras is exactly the Boolean algebra generated by languages of this form (Proposition~\ref{prop:distr-char}). 

\begin{defin}
If $f$ is a forest, we write $\pi(f)$ for the set of its (not necessarily maximal) paths, starting at the root. Thus, $\pi(f)$ is a finite subset of $\Sigma^*$ closed under taking prefixes ($wv \in X \Rightarrow w \in X$).
\end{defin}


\begin{prop}\label{prop:distr-char}[Theorem 5.3 from \cite{bojanczyk-wreath-2012}]
A language $\La \subseteq H_\Sigma$ is recognized by a finite distributive algebra if and only if it is a finite Boolean combination of languages of the form $$\La_I := \{f \in H_\Sigma : \pi(f) \cap I \neq \emptyset\}$$ with $I \subseteq \Sigma^*$ regular word languages.
\end{prop}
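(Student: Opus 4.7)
The plan is to prove the two directions of the equivalence separately.

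For the ``if'' direction, I would first construct, for each regular $I \subseteq \Sigma^*$, a finite distributive algebra recognizing $\La_I$. Let $\eta : \Sigma^* \to M$ be a morphism to a finite monoid with $I = \eta^{-1}(P)$. Take $H := 2^M$ (with $\cup$ as $+$, identity $\emptyset$), and $V$ the submonoid of $2^M \times M$ generated by the context operations, with action $(A, m) \cdot B := A \cup m B$ and multiplication $(A, m)(A', m') := (A \cup m A', m m')$. A direct check shows this is a distributive forest algebra and that $\La_I$ is the preimage of $\{S \in 2^M : S \cap P \neq \emptyset\}$ under the induced morphism $\Sigma^\Delta \to (H,V)$. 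Closure under Boolean operations is then obtained by taking the image of $\Sigma^\Delta$ inside a direct product of two recognizing algebras; distributivity is preserved by direct products and by passage to subalgebras.

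For the ``only if'' direction, let $\phi : \Sigma^\Delta \to (H,V)$ recognize $\La$ with $(H,V)$ finite and distributive. For each $a \in \Sigma$ let $c_a \in V_\Sigma$ denote the single-node context labeled $a$. Then $g(a_1 \cdots a_n) := \phi(c_{a_1}) \cdots \phi(c_{a_n})$ is a monoid morphism $\Sigma^* \to V$ to a finite monoid, so each
\[
W_h := \{w \in \Sigma^+ : g(w) \cdot 0_H = h\}, \qquad h \in H,
\]
is regular. Writing $\hat{w}$ for the line tree whose labels read $w$, the key claim is the decomposition
\[
\phi(f) \;=\; \sum_{w \in \pi(f)} \phi(\hat{w}) \;=\; \sum_{h \in S(f)} h, \qquad S(f) := \{h : \pi(f) \cap W_h \neq \emptyset\}.
\]
This is proved by induction on $f$: for a tree $\alpha[F]$, distributivity gives $\phi(c_\alpha) \sum_{w \in \pi(F)} \phi(\hat{w}) = \sum_{w \in \pi(F)} \phi(\widehat{\alpha w})$, which combined with $\phi(c_\alpha) \cdot 0_H = \phi(\hat{\alpha})$ and $\pi(\alpha[F]) = \{\alpha\} \cup \alpha\pi(F)$ yields the formula for $\alpha[F]$; forests are handled by summing tree contributions, and horizontal idempotency collapses duplicate path-values to give the second equality.

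Since $\phi(f)$ then depends only on the finite set $S(f) \subseteq H$,
\[
\phi^{-1}(h) \;=\; \bigcup_{\substack{S \subseteq H \\ \sum_{h' \in S} h' = h}} \Bigl( \bigcap_{h' \in S} \La_{W_{h'}} \;\cap\; \bigcap_{h' \notin S} (H_\Sigma \setminus \La_{W_{h'}}) \Bigr),
\]
and $\La = \bigcup_{h \in \phi(\La)} \phi^{-1}(h)$ becomes a finite Boolean combination of languages of the form $\La_I$ with $I$ regular. The main technical obstacle is the decomposition formula above: once distributivity is used to establish that $\phi$ factors through the path-set map $\pi$ and does so additively, the remainder of the argument is bookkeeping.
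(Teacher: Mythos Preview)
The paper does not include its own proof of this proposition; it is quoted from \cite{bojanczyk-wreath-2012} as Theorem~5.3 there, with no argument given. So there is nothing in the paper to compare against.

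Your argument is correct and is the standard one. The heart of the ``only if'' direction is exactly the decomposition you isolate: distributivity plus horizontal idempotency force $\phi(f)$ to equal the idempotent sum of the $\phi$-values of the line trees on the paths of $f$, so $\phi(f)$ depends only on the set $S(f) \subseteq H$, and each condition $h \in S(f)$ is an $\La_{W_h}$-membership test with $W_h$ regular. The ``if'' direction via the powerset construction $H = 2^M$ with action $(A,m)\cdot B = A \cup mB$ is also the expected route.

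One minor imprecision: you write $\pi(\alpha[F]) = \{\alpha\} \cup \alpha\pi(F)$, but since $\epsilon \in \pi(F)$ this set already contains $\alpha = \alpha\epsilon$; what is actually missing is the empty path, i.e., $\pi(\alpha[F]) = \{\epsilon\} \cup \alpha\pi(F)$. This does not affect the computation since $\phi(\hat{\epsilon}) = 0_H$ contributes nothing to the sum. Similarly, in the ``if'' direction you should check that the empty forest is handled consistently (it maps to $\emptyset \in 2^M$, so lies in $\La_I$ under your construction only if you adjust for whether $\epsilon \in I$), but this is cosmetic.
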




We can now state the algebraic characterization of Propositional Dynamic Logic (PDL) by \cite{bojanczyk-wreath-2012}:

\begin{theorem}[\cite{bojanczyk-wreath-2012}]
A regular language $\La \subset H_\Sigma$ is definable in PDL if and only if there are finite distributive forest algebras $\Ff_1, ..., \Ff_k$ such that $\Ff_1 \wr ... \wr \Ff_k$ recognizes $\La$.
\end{theorem}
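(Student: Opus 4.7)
For $(\Leftarrow)$, I would induct on the number $k$ of wreath factors. Suppose $\phi : \Sigma^\Delta \to \Ff_1 \wr \cdots \wr \Ff_k$ recognizes $\La$. When $k=1$, Proposition~\ref{prop:distr-char} expresses $\La$ as a Boolean combination of languages $\La_I = \{f : \pi(f)\cap I \neq \emptyset\}$ for regular $I \subseteq \Sigma^*$; each $\La_I$ is defined by the PDL formula $\langle I\rangle \top$ (treating $I$ as a regular path expression along child-edges), and PDL is closed under Boolean combinations. For the inductive step, let $\Gg := \Ff_2 \wr \cdots \wr \Ff_k$, and decompose $\phi$ into a ``bottom'' morphism $\psi : \Sigma^\Delta \to \Gg$ together with a top-level $\Ff_1$-computation whose value at each node depends on the $\psi$-types of its descendant subforests. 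The induction hypothesis gives, for each $h \in H_\Gg$, a PDL formula $\chi_h$ defining the set of forests with $\psi$-image $h$. Applying the base case over the extended alphabet $\Sigma' := \Sigma \times H_\Gg$ yields a PDL formula $\theta$ over $\Sigma'$ capturing the $\Ff_1$-component of $\phi$; substituting each $\Sigma'$-letter test in $\theta$ by the corresponding $\chi_h$ returns a PDL formula over $\Sigma$ defining $\La$.

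For $(\Rightarrow)$, I would induct on the structure of PDL formulas. Atomic label tests are recognized by a small finite distributive algebra, and Boolean connectives remain inside a single wreath level since direct products of distributive algebras are distributive. The interesting case is $\langle\alpha\rangle \varphi$, where $\alpha$ is a regular path expression whose tests are PDL subformulas $\varphi_1, \dots, \varphi_m$. By induction, each $\varphi_i$ is recognized by a wreath product $\Gg_i$ of distributive algebras; stacking these into $\Gg := \Gg_1 \wr \cdots \wr \Gg_m$ yields a single algebra simultaneously computing the truth values of all $\varphi_i$. Relabelling each node by its tuple of truth values turns $\langle\alpha\rangle\varphi$ into a pure path-reachability property over $\Sigma \times \{0,1\}^m$, which by Proposition~\ref{prop:distr-char} is recognized by some finite distributive $\Ff$. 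The wreath product $\Ff \wr \Gg$ then recognizes $\langle\alpha\rangle\varphi$.

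The principal obstacle is the ``descend one wreath level'' step in $(\Leftarrow)$: one must verify that substituting the PDL formulas $\chi_h$ for $\Sigma'$-letter tests inside a path expression genuinely yields a PDL formula over $\Sigma$, and, in the reverse direction, that the $\Ff_1$-computation on the relabelled forest faithfully captures the top-level behaviour of $\phi$. Both points are instances of a wreath product principle for PDL on forests -- the forest-algebra analogue of the classical Krohn--Rhodes correspondence between wreath products and substitutions in temporal word logics. Its correctness rests on PDL being closed under regular path expressions with tests (a closure property built into the syntax of PDL) and on the precise way the top-level action of $\Ff_1$ in $\Ff_1 \wr \Gg$ is parametrised by $H_\Gg$-valued labellings of descendant forests.
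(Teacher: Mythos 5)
This theorem is imported from Boja\'nczyk \emph{et al.}\ (2012) and the paper gives no proof of it, so there is nothing internal to compare against; your sketch is, in outline, the standard argument from that source. Both directions rest on exactly the two ingredients you identify -- the path-set characterization of finite distributive algebras (Proposition~\ref{prop:distr-char}) and the sequential-composition/wreath-product principle (Theorem~4.2 of the cited paper, which the present paper also invokes when proving condition~(C) of its Main Lemma) -- and your identification of the ``descend one wreath level'' substitution step as the point where all the real work lives is accurate.
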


By this result, the problem of deciding definability of a language $\La$ in PDL can be reduced to the problem of determining whether it is recognized by an iterated wreath product of finite distributive algebras.
This, in turn, is equivalent to the question whether the syntactic forest algebra of $\La$ divides such a product.

\section{2-Distributive Forest Algebras}\label{sec:2-dis}

We now define 2-distributive forest algebras, which will be the subject of our main result.
In the Discussion (Section~\ref{sec:discussion}), we will discuss how this notion and results in this section generalize to $k > 2$, and how this notion relates to the general approach to settling decidability of PDL.
A forest algebra $(H,V)$ is \emph{2-distributive} if, for any alphabet $\Sigma$ and for all morphisms $\phi : \Sigma^\Delta \rightarrow (H,V)$, for all contexts $v \in V_\Sigma$, and for all forests $f_1, f_2 \in H_\Sigma$ with $\pi(f_1) = \pi(f_2)$, the following equality holds:
$$\phi(v(f_1+f_2)) = \phi(vf_1 + vf_2)$$
That is, we take the same condition as for distributivity, but require this only in the case when $\pi(f_1) = \pi(f_2)$.
In this sense, we are describing a 2-fold iterated distributive law.
As before, we further require horizontal idempotency ($h+h = h$ for $h \in H$) and commutativity ($h_1+h_2=h_2+h_1$ for $h_1, h_2 \in H$).

\begin{remark}
By definition, the property of 2-distributivity is expressed by a collection of identities between explicit operations, and thus 2-distributive forest algebras form a (Birkhoff) variety. We will not explicitly make use of varieties here.
However, it deserves mentioning that most classes of forest languages that have been characterized using identities involve \emph{profinite} identities involving implicit operations \cite{bojanczyk-effective-2008}, while defining 2-distributivity does not involve such implicit operations.
\end{remark}

We will now show that 2-distributive algebras are closely connected to wreath products of distributive forest algebras.
The following proposition is not hard to prove:
\begin{prop}\label{prop:wreath-2}
If $\Ff_1, \Ff_2$ are distributive forest algebras, then $\Ff_1 \wr \Ff_2$ is 2-distributive.
\end{prop}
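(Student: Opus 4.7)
The plan is to unfold the wreath product structure and reduce the claim to ordinary distributivity in the two factors, plus the fact that a morphism into a distributive algebra sees only path sets.

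First, I would fix a morphism $\phi = (\phi_H, \phi_V) : \Sigma^\Delta \to \Ff_1 \wr \Ff_2$ and observe that the second-coordinate projection $\pi_2 : \Ff_1 \wr \Ff_2 \to \Ff_2$, given by $(g, w) \mapsto w$ on contexts and $(h_1, h_2) \mapsto h_2$ on forests, is itself a forest algebra morphism. Hence $\psi := \pi_2 \circ \phi : \Sigma^\Delta \to \Ff_2$ is a morphism into a distributive algebra, so by the fact cited right after Equation~\ref{eq:dis-def-2} (a consequence of Proposition~\ref{prop:distr-char}), $\psi(f)$ depends only on $\pi(f)$.

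Next, given $v \in V_\Sigma$ and forests $f_1, f_2$ with $\pi(f_1) = \pi(f_2)$, write $\phi_V(v) = (g, w)$ with $g : H_{\Ff_2} \to V_{\Ff_1}$ and $w \in V_{\Ff_2}$, and write $\phi_H(f_i) = (a_i, b_i)$ with $a_i \in H_{\Ff_1}$, $b_i \in H_{\Ff_2}$. The previous paragraph gives $b_1 = b_2 =: b$, and horizontal idempotency in $H_{\Ff_2}$ yields $\phi_H(f_1 + f_2) = (a_1 + a_2,\, b)$. A direct calculation using the defining action $(g,w)\cdot(h_1,h_2) = (g(h_2)h_1,\, w h_2)$ then gives
\begin{align*}
\phi_H\bigl(v(f_1+f_2)\bigr) &= \bigl(g(b)(a_1+a_2),\, wb\bigr),\\
\phi_H(v f_1 + v f_2) &= \bigl(g(b)a_1 + g(b)a_2,\, wb + wb\bigr).
\end{align*}
The second coordinates coincide by idempotency of $H_{\Ff_2}$, and the first coordinates coincide by distributivity of $\Ff_1$ applied to $g(b) \in V_{\Ff_1}$ and $a_1, a_2 \in H_{\Ff_1}$.

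There is no serious obstacle here: once we have the lemma that morphisms into distributive algebras factor through $\pi$, the proof is simply an unfolding of the wreath product definition, and the 2-distributivity identity reduces to one invocation of distributivity in $\Ff_1$ together with idempotency in $H_{\Ff_2}$. The only point worth flagging is that the hypothesis $\pi(f_1) = \pi(f_2)$ is used precisely to force the $H_{\Ff_2}$-components $b_1, b_2$ to agree, which is exactly what makes the single application $g(b)$ of distributivity in $\Ff_1$ available; this is where the ``two-fold'' character of the identity surfaces naturally from the wreath product of two distributive factors.
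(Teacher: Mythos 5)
Your proof is correct and follows essentially the same route as the paper's: project onto the second factor to see that $\pi(f_1)=\pi(f_2)$ forces the $\Ff_2$-components to agree, then unfold the wreath product action and invoke distributivity of $\Ff_1$ (the paper's computation in Equations~\ref{eq:d-1} and~\ref{eq:d-2} is exactly your calculation). The only minor quibble is attributional: the fact that a morphism into a distributive algebra depends only on path sets holds for infinite algebras as well and follows directly from the distributive law with idempotency and commutativity, rather than from Proposition~\ref{prop:distr-char}, which concerns only finite algebras.
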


\begin{proof}
Let $\Ff_1 = (H_1,V_1)$, $\Ff_2 = (H_2, V_2)$ be distributive forest algebras (finite or infinite).
Take any alphabet $\Sigma$ and a morphism $\phi : \Sigma^\Delta \rightarrow \Ff_1 \wr \Ff_2$.
Let $v \in V_\Sigma$, $f_1, f_2 \in H_\Sigma$ with $\pi(f_1) = \pi(f_2)$.
Note that context types and forest types in $\Ff_1 \wr \Ff_2$ are tuples, whose right elements are elements of $\Ff_2$.
As in the case of monoid wreath products, it is easy to see that the projection of elements of $\Ff_1 \wr \Ff_2$ on the second component is a morphism from $\Ff_1 \wr \Ff_2$ to $\Ff_2$.
That is, $\pi^{(2)}\circ \phi : \Sigma^\Delta \rightarrow \Ff_2$.
Since $\Ff_2$ is distributive and $\pi(f_1) = \pi(f_2)$, we know $\pi^{(2)}\phi(f_1) = \pi^{(2)}\phi(f_2)$. 
Let's call this element $h_0$.
Let us compute $\phi(v(f_1+f_2))$.
By definition of $\Ff_1 \wr \Ff_2$, there is $f \in V_1^{H_2}$ and $u \in V_2$, such that $\phi(v) = (f,u)$.
Similarly, $\phi(f_i) = (h_i, \pi^{(2)}\phi(f_i))$ for some $h_i \in H_1$, for $i = 1,2$.
Thus,
\begin{equation}\label{eq:d-1}
\begin{split}
\phi(v(f_1+f_2)) &= (f,u) ((h_1^1, h_1^2) + (h_2^1, h_2^2)) \\
& = (f,u) (h_1+h_2, \pi^{(2)}\phi(f_1+f_2)) \\
& = (f(h_0)(h_1+h_2), u h_0) 
\end{split}
\end{equation}
On the other hand,
\begin{equation}\label{eq:d-2}
\begin{split}
\phi(vf_1+vf_2) &= (f,u) (h_1, h_0) + (f,u) (h_1,h_0) \\
&= (f(h_0)h_1, u h_0)+(f(h_0)h_2, u h_0) \\
& = (f(h_0)h_1+f(h_0)h_2, u h_0)
\end{split}
\end{equation}
Given that $\Ff_1$ is distributive, the last lines of (\ref{eq:d-1}) and (\ref{eq:d-2}) are the same.
\end{proof}

This statement has a converse, which can be shown using the Local-Global Theorem~\ref{thm:loc-glob}:

\begin{theorem}\label{prop:2-wreath}
A forest algebra $\Ff$ (finite or infinite) is 2-distributive if and only if it divides a wreath product of two (possibly infinite) distributive forest algebras.
\end{theorem}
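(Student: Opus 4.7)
The easy direction ($\Ff\prec\Ff_1\wr\Ff_2\Rightarrow\Ff$ is 2-distributive) follows at once from Proposition~\ref{prop:wreath-2} together with the fact, recorded in the preceding Remark, that 2-distributivity is a conjunction of equational identities and therefore preserved under division. The substance of the theorem is the converse.

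Suppose then that $\Ff=(H,V)$ is 2-distributive. The plan is to build the wreath-product decomposition in two stages, using the forest-algebra Derived Category Theorem (Section~\ref{sec:categories}) and the Local-Global Theorem~\ref{thm:loc-glob}. First, I would take the outer factor $\Ff_2$ to be the largest distributive quotient of $\Ff$: quotient $\Ff$ by the smallest congruence containing all pairs of the form $(v(h_1+h_2),\,vh_1+vh_2)$, obtaining a surjective morphism $\psi:\Ff\to\Ff_2$. Then $\Ff_2$ is distributive by construction, and for every morphism $\phi:\Sigma^\Delta\to\Ff$ the composite $\psi\circ\phi$ lands in a distributive algebra; by Proposition~\ref{prop:distr-char}, its value on a forest $f$ depends only on $\pi(f)$.

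With $\Ff_2$ in hand, the Derived Category Theorem reduces the statement $\Ff\prec\Ff_1\wr\Ff_2$ (for some distributive $\Ff_1$) to exhibiting a (possibly infinite) distributive $\Ff_1$ that absorbs the derived category $\Dd(\psi)$ of the surjection $\psi$. By the Local-Global Theorem, such an $\Ff_1$ exists provided $\Dd(\psi)$ satisfies the distributivity identity \emph{locally} at each $e\in\Ff_2$. Each local identity of the form $v(h_1+h_2)=vh_1+vh_2$ pulls back, via $\phi$, to the corresponding identity in $\Ff$ restricted to forest arguments whose $\psi$-images coincide. Because $\psi$ was chosen to be the maximal distributive quotient, equal $\psi$-images arise whenever two forests share the same path-set, which is exactly the hypothesis under which 2-distributivity supplies the identity; one then verifies that the relevant local witnesses can always be realized by forest pairs $(f_1,f_2)$ with $\pi(f_1)=\pi(f_2)$, so that 2-distributivity of $\Ff$ yields local distributivity of $\Dd(\psi)$.

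The main obstacle is precisely this last verification. One must unwind the definitions of the forest-algebra derived category and of the Local-Global reduction carefully enough to see that every local distributivity identity at $e\in\Ff_2$ is witnessed by pairs of forest operands with equal $\pi$, so that 2-distributivity---rather than the a~priori stronger demand of distributivity across entire $\psi$-fibres---really suffices. Once this reduction is established, Local-Global produces the distributive $\Ff_1$ absorbing $\Dd(\psi)$, and the Derived Category Theorem then delivers $\Ff\prec\Ff_1\wr\Ff_2$, completing the proof.
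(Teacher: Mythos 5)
Your easy direction is fine, and your overall architecture (distributive outer factor, derived category, local distributivity, Local--Global, Derived Category Theorem) is the right one. But there is a genuine gap in your choice of the outer factor. You take $\Ff_2$ to be the maximal distributive quotient of $\Ff$ itself, with surjection $\psi:\Ff\to\Ff_2$, and then need local distributivity of the associated derived category: the identity $p(h_1+h_2)=ph_1+ph_2$ for every pair of half-arrows ending at the same object $e\in H_{\Ff_2}$, i.e.\ for every pair $h_1,h_2\in H_\Ff$ with $\psi(h_1)=\psi(h_2)$. Two such elements need \emph{not} be realizable by forests $f_1,f_2$ with $\pi(f_1)=\pi(f_2)$: the congruence on $\Ff$ generated by the distributivity relation can collapse far beyond path-set equivalence (the composite $\psi\circ\phi$ factors through the path-set morphism $\pi$, but the induced map $\Sigma^\Delta_D\to\Ff_2$ is generally not injective). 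So the ``verification'' you defer to at the end --- that the local witnesses can always be realized by forest pairs with equal $\pi$ --- is not a technicality; it is false in general for your $\psi$, and 2-distributivity does not supply the needed identities.

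The paper avoids this by choosing the outer factor to be $\Sigma^\Delta_D$, the maximal distributive quotient of the \emph{free} algebra $\Sigma^\Delta$ (equivalently, the quotient by $f\sim f'\iff\pi(f)=\pi(f')$), with $\pi:\Sigma^\Delta\to\Sigma^\Delta_D$ as the right-hand morphism, and then works with the derived category $D_{\phi,\pi}$. There the objects \emph{are} path sets, so two half-arrows end at the same object exactly when they come from forests with equal path sets --- which is precisely the hypothesis of 2-distributivity, and local distributivity follows immediately. One further point you gloss over: Theorem~\ref{thm:loc-glob} is stated for \emph{finite} derived forest categories, whereas $D_{\phi,\pi}$ is infinite; the paper does not apply the theorem as stated but reruns the first part of its proof to build a division of $D_{\phi,\pi}$ into an infinite distributive algebra $\Sigma'^\Delta_D$ over the (infinite) alphabet of arrows and half-arrows, before invoking the Derived Category Theorem to get $\Ff\prec\Sigma'^\Delta_D\wr\Sigma^\Delta_D$.
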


\begin{proof}[Proof Sketch]
The `if' direction is the previous proposition.
For the `only if' direction, the proof closely follows that of Theorem~\ref{thm:main-result}.
Let $\sim$ be the congruence on $\Sigma^\Delta$ induced by $v[h+h'] = vh+vh'$.
The quotient of $\Sigma^\Delta$ by $\sim$ is an infinite distributive forest algebra, which we denote $\Sigma^\Delta_D$. 
We can extend $\pi$ to a morphism $\Sigma^\Delta \rightarrow \Sigma^\Delta_D$. 
Let $\phi : \Sigma^\Delta \rightarrow \Ff$ be any morphism.
Consider $D_{\phi, \pi}$ (Definition~\ref{def:derived-category}), an infinite forest category.
From the definition of 2-distributivity, one can show that $D_{\phi, \pi}$ is locally-distributive (Definition~\ref{def:loc-dist}).
Similar to Theorem~\ref{thm:loc-glob}, one can then show that $D_{\phi, \pi}$ divides an infinite distributive forest algebra $\Sigma'^\Delta_D$ where $\Sigma'$ is an extended alphabet.
The Derived Category Theorem~\ref{thm:derived-cat} implies that $\Ff$ divides $\Sigma'^\Delta_D \wr \Sigma^\Delta_D$.

\end{proof}

Allowing infinite algebras is crucial here:
Even if $\Ff$ is finite, we cannot readily conclude that it divides a product of two \emph{finite} distributive forest algebras.
Nonetheless, this characterization is interesting:
It shows that 2-distributive algebras represent the second level in the hierarchy of iterated wreath products of infinite distributive algebras.
This hierarchy can be viewed as an infinitary counterpart to PDL, as it allows products of infinite algebras.

Using this characterization, we can give a simple example:

\begin{example}\label{ex:langs-basic}
Any distributive algebra is evidently also 2-distributive.
For a less trivial example, consider the language of forests satisfying the following conditions: 
(1) Each maximal path has the form $a^*b$ or $a^c$, (2) each b-node has a c-sibling, (3) each c-node has a b-sibling
  (see Figure~\ref{fig:ex-lang-basic}).
This language is not recognized by a distributive algebra.
It is not hard to show that it is recognized by a wreath product of two finite forest algebras, and thus, using Proposition~\ref{prop:wreath-2}, by a 2-distributive forest algebra.
\end{example}

\begin{figure}
\begin{tikzpicture}[level/.style={sibling distance=40mm/#1}, scale=0.7,transform shape]
\node [circle,draw] (a) {a}
    child {node [circle,draw] (b) {a}
      child {node [circle, draw] {b}}
      child {node  [circle, draw] {c}}
    }
    child {node [circle,draw] (c) {a}
      child {node [circle,draw] (d) {a}
        child {node [circle, draw] {b}}
        child {node [circle, draw] {c}}
      }
    };
\end{tikzpicture}
\caption{Illustration for Example~\ref{ex:langs-basic}}\label{fig:ex-lang-basic}
\end{figure}

In Example~\ref{ex:langs-basic}, the recognizing algebra is not just 2-distributive but also divides the wreath product of two finite distributive algebras (the language is therefore in PDL).
However, this is not the case in general:
Finite 2-distributive algebras need not divide a wreath product of two \emph{finite} distributive forest algebras.
We will show this in the example below.
The main result of this paper will imply that a wreath product of \emph{four} finite distributive forest algebras will still be enough in this case.
This proves that all languages recognized by finite 2-distributive forest algebras are definable in PDL.

\begin{example}\label{ex:langs}
Consider the following languages (see Figure~\ref{fig:ex-langs} for illustration):

$\La_1$ is the set of nonempty forests where (1) each maximal path has the form $a^* b$ or $a^* c$, (2) each $b$-node has a $c$-sibling, (3) each $c$-node has a $b$-sibling (see Figure~\ref{fig:l1}).

$\La_2$ is the set of nonempty forests where (1) each maximal path has the form $a^* b$ or $a^+ c$, (2) each $b$-node has an $a$-sibling which has a $c$-child, (3) each $a$-node with a $c$-child has a $b$-sibling (see Figure~\ref{fig:l2}).

$\La^a_3$ is the set of (possibly empty) forests where each tree has the form $c_d(f_1 + f_2)$, where $c_d$ denotes the context consisting of only a node labeled $d$ and a variable below it, with $f_1 \in \La^b_3$, $f_2\in \La_1$.

$\La^b_3$ is the set of (possibly empty) forests where each tree has the form $c_d(f_1+f_2)$, with $f_1 \in \La^a_3$ and $f_2 \in \La_2$.

Set $\La := \La^a_3 + \La^b_3$  (see Figure~\ref{fig:l}).

It can be shown that $\La$ is recognized by a wreath product of two infinite distributive algebras and thus is 2-distributive by Proposition~\ref{prop:2-wreath}.
However, $\La$ is not recognized by any wreath product of two \emph{finite} distributive algebras.

The proof is based on facts about \emph{separation} by morphisms to distributive algebras:
If $\La$ is a language of forests, $\pi(\La) \subseteq Pow(\Sigma^*)$ is the image of $\La$ under $\pi$, a set of finite pathsets.
First, it is not hard to see that $\pi(\La_1) \cap \pi(\La_2)$ is empty, and thus the language $\pi^{-1}(\pi(\La_1))$ separates these: $\La_1 \subset \pi^{-1}(\pi(\La_1)) \subset (H_\Sigma - \La_2)$.
The syntactic forest algebra of $\pi^{-1}(\pi(\La_1))$ is distributive, but infinite (it crucially needs to count at arbitrary depths).
From this fact, one can derive using Theorem~\ref{prop:2-wreath} that $\La$ is indeed 2-distributive.

However, even though $\La_1$ and $\La_2$ are both regular, no language recognized by a \emph{finite} distributive algebra can separate them.
From this, one can deduce using the Derived Category Theorem~\ref{thm:derived-cat}  that $\La$ is not recognized by the wreath product of any two finite distributive forest algebras. 
However, it is not hard to show that $\La_1$ and $\La_2$ are both recognized by a wreath product of two finite distributive algebras. Fom this, one can derive that $\La$ is recognized by a wreath product of \emph{three} finite distributive forest algebras.
\end{example}

\begin{figure}
\begin{subfigure}[b]{0.3\textwidth}
\begin{tikzpicture}[level/.style={sibling distance=40mm/#1}, scale=0.7,transform shape]
\node [circle,draw] (a) {a}
    child {node [circle,draw] (b) {a}
      child {node [circle, draw] {b}}
      child {node  [circle, draw] {c}}
    }
    child {node [circle,draw] (c) {a}
      child {node [circle,draw] (d) {a}
        child {node [circle, draw] {b}}
        child {node [circle, draw] {c}}
      }
      child {node [circle, draw] {a}
             child {node [circle,draw] (e) {a}
               child {node [circle, draw] {b}}
               child {node [circle, draw] {c}}
             }
            }
    };
\end{tikzpicture}
\caption{An element of $\La_1$}\label{fig:l1}
\end{subfigure}
\begin{subfigure}[b]{0.3\textwidth}
\begin{tikzpicture}[level/.style={sibling distance=40mm/#1}, scale=0.7,transform shape]
\node [circle,draw] (a) {a}
    child {node [circle,draw] (b) {a}
      child {node [circle, draw] {b}}
      child {node  [circle, draw] {a} child {node  [circle, draw] {c}}   }
    }
    child {node [circle,draw] (c) {a}
      child {node [circle,draw] (d) {a}
        child {node [circle, draw] {b}}
        child {node  [circle, draw] {a} child {node  [circle, draw] {c}}   }
      }
      child {node [circle, draw] {a}
             child {node [circle,draw] (e) {a}
               child {node [circle, draw] {b}}
               child {node  [circle, draw] {a} child {node  [circle, draw] {c}}   }
             }
            }
    };
\end{tikzpicture}
\caption{An element of $\La_2$}\label{fig:l2}
\end{subfigure}
\begin{subfigure}[b]{0.3\textwidth}
\begin{tikzpicture}[level/.style={sibling distance=20mm/#1}, scale=0.7,transform shape]
\node at (0,0) [circle,draw] (1st) {d}
    child {node [circle,draw] (b) {d}
      child {node [circle, draw] {d}
        child {node  {$\dots$}} 
        child {node  {$\La_1$}}
       }
      child {node   {$\La_2$}}
    }
    child {node  {$\La_1$}};
\node at (3,0) [circle,draw] (a) {d}
    child {node [circle,draw] (b) {d}
      child {node [circle, draw] {d}
        child {node  {} edge from parent[draw=none]}
        child {node {$\La_2$}}            }
      child {node   {$\La_1$}}
    }
    child {node   {$\La_2$}};
\end{tikzpicture}
\caption{A schematic depiction of $\La$: The left tree belongs to $\La_3^a$, the right tree one to $\La_3^b$.}\label{fig:l}
\end{subfigure}
\caption{Illustration for Example~\ref{ex:langs}. $\La$ is 2-distributive and regular, but not recognized by the wreath product of two finite distributive forest algebras.}\label{fig:ex-langs}
\end{figure}


\section{The Derived Forest Category}\label{sec:categories}

The proof of our main result will construct wreath product decompositions by separately studying the left and right factors of a wreath product.
Given a 2-distributive forest algebra $(H_2, V_2)$, we will use the Separation Lemma~\ref{cor:separation} to construct an intended right-hand factor $(H_2, V_2)$ which is already known to be a wreath product of finite distributive forest algebras.
The remaining problem is then to find a left-hand factor $(H,V)$ such that
%
%
%
%
%
%
\begin{equation}\label{eq:wreath-eq}
(H_1,V_1) \prec (H,V) \wr (H_2,V_2)
\end{equation}
holds. If we can show that this factor can be chosen to be distributive, the problem is solved.
In order to do this, we seek a general strategy to obtain a `minimal' left-hand factor $(H,V)$.
%
In the case of groups, the solution to this problem is provided by the kernel group, $\operatorname{ker}\ \phi$, when $\phi : G\rightarrow H$: Then $G$ is embedded in $\operatorname{ker} \phi \wr H$.

For monoids, the analog to $\operatorname{ker} \phi$ is not a monoid any more, but a category.
This classical construction is known as the \emph{Derived Category}~(\cite{tilson-categories-1987}, \cite{rhodes-q-theory-2009}) and originated from the study of regular word languages via wreath products of finite monoids \cite{brzozowski-characterizations-1973}, \cite{therien-graph-1985}, \cite{straubing-finite-1985}.
Recently, \cite{straubing-forest-2018} showed that this construction generalizes to the setting of forest algebras.

The sense in which this construction is `optimal' is made precise in Tilson's Derived Category Theorem~\cite{tilson-categories-1987}.
In the case of forest algebras, we will essentially see that the decomposition in (\ref{eq:wreath-eq})
holds if and only if $(H,V)$ is divided by a certain forest category determined from $(H_1,V_1)$ and $(H_2,V_2)$ and morphisms from $\Sigma^\Delta$ into these, where the notion of `division' by a category will be made precise below.

In this section, we will review the definition of the Derived Forest Category and the Derived Category Theorem from \cite{straubing-forest-2018}.
The Derived Forest Category is a category with some structure added to it, and the relation between forest categories and categories is akin to the relation between forest algebras and monoids.
It is possible to define a general notion of Forest Categories \cite{straubing-forest-2018}, but for our purposes, it is sufficient to consider the Derived Forest Category:


%
%
%
%
%
%

\begin{defin}\label{def:derived-category}[Derived Forest Category, \cite{straubing-forest-2018}]
Let $\Sigma$ be a finite alphabet. Consider a pair of surjective forest algebra morphisms
$$(H_1, V_1) \xleftarrow{\alpha} \Sigma^\Delta \xrightarrow{\beta} (H_2, V_2)$$
The derived category $D_{\alpha,\beta}$ is defined as follows:
\begin{enumerate}
\item The set of \emph{objects} of the category is $\Obj(D_{\alpha,\beta}) := H_2$
\item As in ordinary categories, \emph{arrows} connect objects.
To define the arrows, we fix $h, h' \in H_2$ and introduce an equivalence relation on the set of triples $(h,p,h')$ with $p \in V_\Sigma$ for which $\beta(p)\cdot h=h'$. We set $(h,p,h')\sim(h,q,h')$ if for all $s\in H_ \Sigma$ with $\beta(s)=h$, we have $\alpha(ps) = \alpha(qs)$.

We then set $\Arr(h,h')$ to be the set of equivalence classes of $\sim$.
Its elements are called \emph{arrows}.

We depict an arrow as $\carrow{h}{p}{h'}$, with the understanding that the same arrow can have many distrinct representations in this form.

For consistency with notation here, the direction of arrows in this graphical notation is inverted relative to \cite{straubing-forest-2018}.


\item To obtain a category, we now want to define the multiplication of arrows. We set
$$\left( \carrow{h_2}{p}{h_3}\right)\cdot\left(\carrow{h_1}{q}{h_2}\right)=\carrow{h_1}{pq}{h_3}$$
or shortened
$$\carrow{\carrow{h_1}{q}{h_2}}{p}{h_3}=\carrow{h_1}{pq}{h_3}$$
It can be shown that this is a well-defined arrow, independently of the representation chosen for the arrows \cite{straubing-forest-2018}.
Since the multiplication on $V_\Sigma$ is associative, this multiplication is associative.
The arrow $\carrow{h}{1_{V_\Sigma}}{h}$ is the identity at $h\in H_2$.


Up to this point, we have defined a category. We now add some additional structure:

\item For $h \in H_2$, we set
$$\HArr(h) := \{(\alpha(s), h) : s \in H_\Sigma, \beta(s) = h\}$$
The elements of this set are called \emph{half-arrows}. They can be thought of as being arrows that end in an object, but do not start in any object.

We depict the half-arrow $(h_1, h_2)$ as $\carrow{}{h_1}{h_2}$.

\item We set $\HArr(D_{\alpha,\beta})$ to be the set of all half-arrows. Note that $\Obj(D_{\alpha,\beta})$ and $\HArr(D_{\alpha,\beta})$ are monoids, and that the projection of a half-arrow onto the second element (that is, the object at the end of the arrow in our graphical notation) is a homomorphism from $\HArr(D_{\alpha,\beta})$ to $\Obj(D_{\alpha,\beta})$.

\item Viewed in analogy to forest algebras, arrows correspond to context types, while half-arrows correspond to forest types. We therefore want arrows to act on half-arrows. We define the action of an arrow on a half-arrow by
$$\left(\carrow{h_2}{p}{h_2'}\right)\cdot\left(\carrow{}{h_1}{h_2}\right)=\carrow{}{\alpha(p)h_1}{h_2'}$$
or shortened
$$\carrow{\carrow{}{h_1}{h_2}}{p}{h_2'}=\carrow{}{\alpha(p)h_1}{h_2'}$$
\item In analogy to forest algebras, we want to be able to add arrows and half-arrows. We set
$$\left(\carrow{h}{p}{h'}\right)+\left(\carrow{}{h_1}{h_2}\right) = \carrow{h}{p+s}{\left(h'+h_2\right)}$$
where $s\in H_\Sigma$ such that $\alpha(s) = h_1$, $\beta(s)=h_2$.
\end{enumerate}
For proofs of well-definedness, we refer the reader to  \cite{straubing-forest-2018}.
\end{defin}


To formulate the Derived Category Theorem connecting Derived Categories with wreath products, it is necessary to generalize the notion of division to the setting of forest categories dividing forest algebras:

\begin{defin}\label{def:division}[Division, \cite{straubing-forest-2018}]
If $C$ is a derived forest category and $(H,V)$ a forest algebra, then we write $C\prec(H,V)$, and say $C$ \emph{divides} $(H,V)$, if for each $\left(\carrow{}{c}{x}\right) \in \HArr(C)$ there exists a nonempty set $K_{\carrow{}{c}x} \subseteq H$, and for each $\left(\carrow{x}{d}{y}\right)\in \Arr(C)$ there exists a nonempty set $K_{\carrow{x}{d}{y}}\subseteq V$ satisfying the following properties:
\begin{enumerate}
\item (Preservation of Operations) For all $\carrow{}{c}x$, $\carrow{}{d}y \in \HArr(C)$, $\carrow{x}{e}{y}, \carrow{y}{f}{z} \in \Arr(C)$,
\begin{enumerate}
\item $K_{\carrow{y}{f}{z}} \cdot K_{\carrow{x}{e}{y}} \subseteq K_{\carrow{\carrow{x}{e}{y}}{f}{z}}$
\item $K_{\carrow{x}{e}y} \cdot K_{\carrow{}{c} x} \subseteq K_{\carrow{\carrow{}{c}{x}}{e}{y}}$
\item $K_{\carrow{}{c}x}+K_{\carrow{}{d}y}\subseteq K_{\carrow{}{c}{x}+\carrow{}{d}{y}}$
\item $K_{\carrow{}{c}x} + K_{\carrow{y}{f}z} \subseteq K_{\carrow{}{c}x+y\carrow{}{f}z}$
\item $K_{\carrow{y}{f}z} + K_{\carrow{}{c}x} \subseteq K_{\carrow{y}{f}z+\carrow{}{c}x}$
\end{enumerate}
\item (Injectivity)
\begin{enumerate}
\item If $\carrow{x}{c}y$ and $\carrow{x}{c'}y$ are distinct arrows, then $K_{\carrow{x}{c}y}\cap K_{\carrow{x}{c'}y} = \emptyset$
\item If $\carrow{}{c}y$ and $\carrow{}{c'}y$ are distinct half-arrows, then $K_{\carrow{}{c}y}\cap K_{\carrow{}{c'}y}=\emptyset$.
\end{enumerate}
\end{enumerate}

\end{defin}

In the special case where a derived forest category has exactly one object, it can be viewed as a forest algebra.
In this case, the notion of division reduces to ordinary division of forest algebras.

We are now ready to state the Derived Category Theorem connecting categories with wreath products.
We state only the direction required for our main result:

\begin{theorem}\label{thm:derived-cat}[Derived Category Theorem,  \cite{straubing-forest-2018}]
Let $\Sigma$ be an alphabet, and let $\alpha, \beta$ be morphisms from $\Sigma^\Delta$ onto forest algebras $(H_1, V_1)$, $(H_2, V_2)$, respectively. Let $(H,V)$ be a forest algebra.
Assume $D_{\alpha, \beta} \prec (H,V)$. Then
$$(H_1,V_1) \prec (H,V) \wr (H_2, V_2)$$
\end{theorem}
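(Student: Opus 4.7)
The plan is to exhibit $(H_1,V_1)$ as a quotient of a subalgebra of $(H,V) \wr (H_2,V_2)$, realized as the image of a morphism $\gamma : \Sigma^\Delta \to (H,V) \wr (H_2,V_2)$. Since both factors are horizontally idempotent and commutative, so is the wreath product, and by the universal property of $\Sigma^\Delta$ it suffices to define $\gamma$ on the one-node context generators. For each letter $a \in \Sigma$, let $\tilde a \in V_\Sigma$ denote the corresponding one-node context; for each $h \in H_2$, pick any $g_a(h) \in K_{\carrow{h}{\tilde a}{\beta(\tilde a) h}}$ (nonempty by the hypothesis $D_{\alpha,\beta} \prec (H,V)$), and set $\gamma_V(\tilde a) := (g_a, \beta(\tilde a))$. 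Extend $\gamma$ uniquely to a forest algebra morphism.

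The technical heart is a structural induction showing that $\gamma$ tracks the division data: if $\gamma_V(p) = (f,v)$ then $v = \beta(p)$ and $f(h) \in K_{\carrow{h}{p}{vh}}$ for every $h \in H_2$; and if $\gamma_H(s) = (x, h_2)$ then $h_2 = \beta(s)$ and $x \in K_{\carrow{}{\alpha(s)}{h_2}}$. The $H_2/V_2$ coordinates are immediate because $\beta$ is a morphism. For the $H/V$ coordinates, the wreath product formulas line up one-for-one with the derived category operations: the formula $(f\cdot f')(h) = f(u' h)\cdot f'(h)$ for multiplication in $V^{H_2}$ mirrors arrow composition and is covered by axiom 1(a); the action on $H$-components mirrors the action of an arrow on a half-arrow (axiom 1(b)); componentwise $+$ on $H$ is governed by 1(c); and the $I_{(\cdot)}$ map, which in the wreath product sends $(x,h_2)$ to $(\textnormal{const}_{I_x}, I_{h_2})$, is absorbed by 1(d)--1(e).

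Next, define $\psi : \gamma(\Sigma^\Delta) \to (H_1,V_1)$ by $\psi_H(\gamma_H(s)) := \alpha(s)$ and $\psi_V(\gamma_V(p)) := \alpha(p)$. Well-definedness is exactly the injectivity clause of Definition~\ref{def:division}. If $\gamma_H(s) = \gamma_H(s')$, the common $H$-component lies in both $K_{\carrow{}{\alpha(s)}{\beta(s)}}$ and $K_{\carrow{}{\alpha(s')}{\beta(s')}}$, so injectivity 2(b) forces $\alpha(s) = \alpha(s')$. If $\gamma_V(p) = \gamma_V(p')$, then $\beta(p) = \beta(p')$ and the $V^{H_2}$-components agree pointwise, so injectivity 2(a) forces $\carrow{h}{p}{\beta(p)h} = \carrow{h}{p'}{\beta(p)h}$ for every $h$. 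Unwinding the arrow equivalence from Definition~\ref{def:derived-category} and letting $h$ range over $H_2 = \beta(H_\Sigma)$ yields $\alpha(p)\alpha(s) = \alpha(p')\alpha(s)$ for every $s \in H_\Sigma$, so surjectivity of $\alpha$ and faithfulness of the $V_1$-action on $H_1$ give $\alpha(p) = \alpha(p')$. That $\psi$ is a morphism is immediate from $\gamma$ and $\alpha$ being morphisms, and surjectivity of $\psi$ follows from that of $\alpha$, witnessing $(H_1,V_1) \prec (H,V) \wr (H_2,V_2)$.

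The main obstacle is the inductive tracking step: one must simultaneously run induction over $V_\Sigma$ and $H_\Sigma$ and identify, for each wreath product operation, which of the five preservation axioms applies. The $+$-axioms 1(c)--1(e) are the most delicate, because addition between a half-arrow and an arrow in the derived category is only partially mirrored in the wreath product and requires producing a forest $s$ realizing a given half-arrow; threading this existence condition consistently through the induction, while also verifying that horizontal idempotency is respected, is where the bookkeeping must be done carefully.
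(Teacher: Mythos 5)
The paper itself gives no proof of this theorem --- it is imported from \cite{straubing-forest-2018} --- so there is nothing internal to compare against; your argument is the standard derived-category covering construction (the forest-algebra analogue of Tilson's proof), and its essential content is right: defining $\gamma$ on the one-node generators by choosing elements of the nonempty sets $K_{\carrow{h}{\tilde a}{\beta(\tilde a)h}}$, the inductive tracking invariant whose cases match the preservation axioms 1(a)--1(e) one-for-one, and the well-definedness of $\psi$ on the image via the injectivity clauses combined with surjectivity of $\alpha$ and $\beta$ and faithfulness of the $V_1$-action (your unwinding of the arrow equivalence is exactly correct). The one place where the argument as written does not quite close is the base case of the tracking induction. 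Because $\gamma$ is forced to be a morphism, $\gamma_H(0_{H_\Sigma}) = (0_H, 0_{H_2})$ and $\gamma_V(1_{V_\Sigma}) = (\mathrm{const}_{1_V}, 1_{V_2})$, yet Definition~\ref{def:division} only guarantees that $K_{\carrow{}{0_{H_1}}{0_{H_2}}}$ and $K_{\carrow{h}{1_{V_\Sigma}}{h}}$ are \emph{nonempty}, not that they contain $0_H$ and $1_V$; since every leaf decomposes as $a[\emptyset] = \tilde a \cdot 0_{H_\Sigma}$, the whole induction bottoms out at the empty forest and the invariant never gets started. This is the usual identity-element wrinkle of derived-category theorems and is repairable --- e.g.\ by replacing the single morphism $\gamma$ with the relational covering that assigns to each nonempty forest $s$ the entire set $K_{\carrow{}{\alpha(s)}{\beta(s)}} \times \{\beta(s)\}$ (and analogously for contexts), whose closure under the wreath-product operations is exactly what axioms 1(a)--1(e) provide, handling the identities of the resulting subalgebra separately (one can check from 1(c) and 2(b) that no half-arrow other than $\carrow{}{0_{H_1}}{0_{H_2}}$ can have $0_H$ in its $K$-set, so no well-definedness conflict arises) --- but as stated the functional construction has a genuine hole at this point that should be acknowledged and patched rather than folded into ``bookkeeping.''
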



\section{Main Result}\label{sec:main-result}

Our aim is to prove that any language recognized by a finite 2-distributive forest algebras is definable in PDL:

\begin{theorem}\label{thm:main-result}
Let $\Ff$ be a finite 2-distributive forest algebra.
Then every language recognized by $\Ff$ is definable in PDL.
\end{theorem}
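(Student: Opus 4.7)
The strategy is to show that $\Ff$ divides a wreath product of four finite distributive forest algebras; by the algebraic characterization of PDL recalled above, this immediately implies that every language recognized by $\Ff$ is PDL-definable.

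Let $\phi: \Sigma^\Delta \rightarrow \Ff$ be a surjective morphism. The proof sketch of Theorem~\ref{prop:2-wreath} already obtains a two-factor decomposition when \emph{infinite} distributive algebras are allowed: one forms the quotient $\pi: \Sigma^\Delta \rightarrow \Sigma^\Delta_D$ by the distributivity congruence, studies the derived category $D_{\phi, \pi}$, observes that 2-distributivity makes it locally distributive, divides it by an infinite distributive algebra $\Sigma'^\Delta_D$, and concludes via Theorem~\ref{thm:derived-cat}. My plan is to replace each of the two infinite distributive factors by a wreath product of two \emph{finite} distributive forest algebras, yielding four finite distributive factors in total.

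For the right-hand (pathset) factor, I would replace $\pi$ by a finite surjective morphism $\beta: \Sigma^\Delta \rightarrow (H_2, V_2)$ such that (i) $(H_2, V_2)$ divides a wreath product of two finite distributive forest algebras, and (ii) $(H_2, V_2)$ is coarse enough that the derived category $D_{\phi, \beta}$ is still locally distributive. The Separation Lemma~\ref{cor:separation} should be the key tool here: since $\Ff$ is finite, only finitely many pairs of pathset-equivalent, $\phi$-inequivalent elements of $\Sigma^\Delta$ need to be separated for the derived category to reflect the information required by 2-distributivity; each such pair can be separated by a finite distributive forest algebra, and combining all these separators with the projection onto pathset information yields the required finite $\beta$. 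Next, I would apply the Local-Global Theorem~\ref{thm:loc-glob} to $D_{\phi, \beta}$: local distributivity together with the finiteness of $(H_2, V_2)$ and $\Ff$ should produce a finite $(H_1, V_1)$ which divides a wreath product of two finite distributive forest algebras and satisfies $D_{\phi, \beta} \prec (H_1, V_1)$. The Derived Category Theorem~\ref{thm:derived-cat} then gives $\Ff \prec (H_1, V_1) \wr (H_2, V_2)$, and substituting the two-factor decomposition of each side yields the desired four-factor product by associativity of $\wr$.

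The main obstacle is the Local-Global step, that is, showing that a locally distributive derived forest category over a finite base divides a \emph{finite} wreath product of distributive algebras. The infinite counterpart in the sketch of Theorem~\ref{prop:2-wreath} is easier because free distributive algebras on arbitrarily large alphabets are available; in the finite setting one must bound both the alphabet and the depth of iterated distributivity needed. Here 2-distributivity (rather than $k$-distributivity for larger $k$) should be essential: once the pathset layer has been quotiented out by $\beta$, exactly one further layer of distributivity remains to be realized, so a single additional distributive factor on each side suffices. Making this precise, while tracking how the Separation Lemma interacts with arrow composition and addition in the derived category so that local distributivity is preserved under the passage from $\pi$ to $\beta$, is the technical crux of the argument.
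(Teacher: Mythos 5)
Your overall architecture matches the paper's: build a right-hand factor from separators obtained via the Separation Lemma, show the resulting derived category is locally distributive using 2-distributivity, apply the Local-Global Theorem to get the left-hand factor, and combine via the Derived Category Theorem. However, there is a genuine gap in how you budget the four factors. You claim that each pair of pathset-inequivalent classes ``can be separated by a finite distributive forest algebra,'' so that the right-hand factor needs only two finite distributive factors. This is false: Example~\ref{ex:langs} of the paper exhibits regular languages $\La_1, \La_2$ with $\pi(\La_1) \cap \pi(\La_2) = \emptyset$ such that \emph{no} language recognized by a finite distributive algebra separates them (the natural separator $\pi^{-1}(\pi(\La_1))$ has an infinite distributive syntactic algebra, since it must count at arbitrary depths). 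The Separation Lemma as actually proved produces separators recognized by a wreath product of \emph{three} finite distributive algebras, and establishing this (the Main Lemma, with its height-indexed approximators $\Pa(\rho)$ and the constant $N$) is the technical heart of the paper, not a routine application.

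Correspondingly, you misplace the difficulty and the factor count on the left. The Local-Global Theorem for a finite locally distributive derived forest category yields division by a \emph{single} finite distributive algebra, not a wreath product of two: the proof normalizes forest diagrams via $\Psi$ so that $\operatorname{Val}(d)$ depends only on $\pi(d)$, and then invokes Proposition~\ref{prop:distr-char}. So the paper's decomposition is $1 + 3 = 4$, not $2 + 2$. Your remark that the right factor should be ``coarse enough'' is also backwards: local distributivity of $D_{\phi,\beta}$ requires $\beta$ to be \emph{fine} enough that any two half-arrows ending at the same object have $\phi$-classes whose $\pi$-images intersect; only then does 2-distributivity apply to give $p(h_1+h_2) = ph_1 + ph_2$. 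To repair the proposal you would need to either prove the three-factor Separation Lemma or find a genuinely new two-factor separation argument, and neither is supplied.
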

In the Discussion (Section~\ref{sec:discussion}), we will discuss the relevance of this result for the question of deciding definability in PDL.

Our proof proceeds by solving two sub-problems related to the left and right factors in wreath product decompositions:
To obtain the right factor of a wreath product decomposition, we study the problem of \emph{separating} forest languages by the map $\pi$.
To then obtain the left factor, we start at the Derived Category, and show that it has a certain local property -- in our case, local distributivity.
To conclude a decomposition result, we then prove that this local property entails a global property.
These steps are remarkably similar to results from the theory of logic on words and finite monoids which also reduce the problem of decidability to separation \cite{place-going-2014} and Local-Global theorems \cite{krebs-effective-2012}.


\subsection{Separation Lemma}
We will first state the Separation Lemma.
Recall that $\pi(f)$ is the set of paths in the forest $f$.
If $\La$ is a language of forests, $\pi(\La) \subseteq Pow(\Sigma^*)$ is the image of $\La$ under $\pi$, a set of finite pathsets.

\begin{llemma}[Separation Lemma]\label{cor:separation}
Let $\La_1, \La_2 \subseteq H_\Sigma$ be regular forest languages such that
$$\pi(\La_1) \cap \pi(\La_2) = \emptyset$$
Then there are finite distributive algebras $\Ff_1, \Ff_2, \Ff_3$ and a language $X \subseteq \Sigma^\Delta$ recognized by $\Ff_1\wr\Ff_2\wr\Ff_3$ such that $$\La_1 \subseteq X \subseteq (H_\Sigma - \La_2)$$
That is, $X$ \emph{separates} $\La_1$ from $\La_2$.
\end{llemma}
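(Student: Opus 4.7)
The natural candidate separator is $X := \pi^{-1}(\pi(\La_1))$; it manifestly contains $\La_1$, and the hypothesis $\pi(\La_1) \cap \pi(\La_2) = \emptyset$ forces $X \cap \La_2 = \emptyset$. So the entire task reduces to showing that $X$ can be recognized by a wreath product of three finite distributive forest algebras. Since $\La_1$ is regular, I fix a morphism $\phi : \Sigma^\Delta \to (H,V)$ to a finite forest algebra with accepting set $F \subseteq H$ such that $\La_1 = \phi^{-1}(F)$. I would then introduce the auxiliary subset-valued map $\psi(f) := \{\phi(g) : g \in H_\Sigma,\ \pi(g) = \pi(f)\} \subseteq H$, so that $f \in X$ iff $\psi(f) \cap F \neq \emptyset$. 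Because $\psi$ depends only on $\pi(f)$, it factors through the pathset morphism $\pi : \Sigma^\Delta \to \Sigma^\Delta_D$ into the free distributive forest algebra from the sketch of Theorem~\ref{prop:2-wreath}; and because the codomain $Pow(H)$ is finite, $\psi$ is computable by a bottom-up finite-state procedure on the pathset tree.

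The remaining task is to implement this pathset-tree procedure as a three-level wreath product $\Ff_1 \wr \Ff_2 \wr \Ff_3$ of finite distributive forest algebras. My intended layout is as follows. The bottom factor $\Ff_3$ computes, at each node of $f$, a local distributive annotation --- essentially the set of root labels of the children, together with further bounded distributive summary information determined by $(H,V)$ --- via Boolean combinations of one-step path-existence queries; by Proposition~\ref{prop:distr-char}, this is a finite distributive algebra. The middle factor $\Ff_2$, working over the alphabet enriched by these $\Ff_3$-annotations, propagates the local pieces into full bottom-up states of the pathset automaton using further path-existence queries over the enriched alphabet, so it is again distributive. The top factor $\Ff_1$ then reads off the resulting $Pow(H)$-state at the root and performs the final Boolean check against $F$.

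The main obstacle I expect is the middle level: one must verify that the bottom-up aggregation of the children's $Pow(H)$-states into the parent's $Pow(H)$-state can genuinely be encoded as a Boolean combination of path-existence queries on the relabelled forest. Distributivity requires this aggregation to be invariant under duplication and reordering of children, which is morally compatible with pathset-invariance of $\psi$, but the recursive structure of the pathset automaton must be unfolded by hand. The three-level stratification is essential here, since without the pre-processing provided by $\Ff_3$ the transition function of the pathset automaton is not obviously distributive. I expect the bulk of the technical work --- and the reason that three levels are used rather than two --- to come from a careful decomposition of this transition into stages each captured by Proposition~\ref{prop:distr-char}.
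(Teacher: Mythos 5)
Your opening reduction is where the proof breaks down. The candidate separator $X := \pi^{-1}(\pi(\La_1))$ is in general not a regular forest language at all, so it cannot be recognized by any finite forest algebra---and in particular not by a wreath product of three finite distributive ones, no matter how cleverly the three levels are engineered. The paper's own Example~\ref{ex:langs} exhibits exactly this situation: there $\La_1$ and $\La_2$ are regular with $\pi(\La_1)\cap\pi(\La_2)=\emptyset$, yet the syntactic algebra of $\pi^{-1}(\pi(\La_1))$ is infinite, because membership requires comparing the maximal depths at which $a^nb$- and $a^nc$-paths occur, an unbounded count. Your intermediate claim that $\psi(f)=\{\phi(g):\pi(g)=\pi(f)\}$ is ``computable by a bottom-up finite-state procedure'' because its codomain $Pow(H)$ is finite is the precise point of failure: $\psi$ has finitely many values but is not a congruence. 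The value $\psi(\alpha[f])$ is not determined by $\psi(f)$ (one would need to know, for every decomposition of $\pi(f)$ as a union of pathsets, which tuples of $\phi$-values are realizable), and any finite congruence refining $\psi$ would make $\pi^{-1}(\pi(\La_1))$ regular, which it is not.

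The actual argument must therefore give up on recognizing $\pi^{-1}(\pi(\La_1))$ exactly and instead construct a strictly larger, genuinely regular separator. The paper does this via approximators $\Pa(\rho)$ (Definition~\ref{def:p}) built from distributive approximators $\Delta(\rho)$ (Definition~\ref{defin:approximators}), which agree with $\pi^{-1}\pi\La(\rho)$ only up to a fixed height $N$ supplied by Lemma~\ref{lemma:absorption}, and beyond that height impose only the coarse path-intersection tests involving $\Pi(\La(s))-\Pi(\left\langle S_1,S_2\right\rangle)$. The substance of the Main Lemma~\ref{prop:main-lemma}, in particular condition (B), is that this coarsening is still fine enough to keep $\Pa(\rho)$ and $\Pa(\rho')$ disjoint whenever $\pi\La(\rho)$ and $\pi\La(\rho')$ are disjoint. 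None of this is recoverable from your plan, since your plan presupposes exact recognition of the pathset preimage; the obstacle you flag at the middle wreath-product level is not a technical hurdle to be unfolded by hand but an impossibility.
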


\begin{proof}
The proof considers a forest algebra $(H,V)$ recognizing both $\La_1$ and $\La_2$ via a morphism $\phi$, and constructs PDL languages `approximating' each $\phi^{-1}(h)$ for $h \in H$.
\end{proof}

Let's consider how this is useful for proving Theorem~\ref{thm:main-result} by sketching the proof idea for the theorem -- we will make this more precise in Section~\ref{rect:result-proof}.
If $\Ff$ is a forest algebra with morphism $\phi_\Ff : \Sigma^\Delta \rightarrow \Ff$, we can apply this lemma to all pairs of languages $\phi_\Ff^{-1}(h)$ for $h \in H_\Ff$ and obtain separators $X_{h,h'}$ for each pair whenever $\pi\phi^{-1}(h)\cap\pi\phi^{-1}(h') = \emptyset$.
Combining the resulting separators, we can build finite distributive algebras $\Ff_1, \Ff_2, \Ff_3$ and a morphism $\phi_X : \Sigma^\Delta \rightarrow \Ff_1\wr\Ff_2\wr\Ff_3$ which recognizes each $X_{h,h'}$.
We will examine the derived category $D_{\phi_\Ff, \phi_X}$.
If we can show that this derived category divides a finite distributive algebra $\Ff_0$, we can apply the Derived Category Theorem to conclude $\Ff \prec \Ff_0 \wr \Ff_1 \wr \Ff_2 \wr \Ff_3$, which then will prove Theorem~\ref{thm:main-result}.

\subsection{Locally-Distributive Categories}
Recall the equation $v [h_1 + h_2] = v h_1 + v h_2$ defining distributive forest algebras.
To apply this to derived forest categories, we would want to interpret $v$ as an arrow and $h_1, h_2$ as half-arrows.
In general, this equation does not make sense, since the action of an arrow on a half-arrow is only defined in certain cases. 
The equation becomes sensible when we restrict it to those arrows and half-arrows for which the action is defined:

\begin{defin}[Locally Distributive]\label{def:loc-dist}
We say that a derived forest category $C$ is \emph{locally distributive} if the following equation holds
for any $h \in \Obj(C)$ and any two half-arrows $h_1, h_2 \in \HArr(h)$, and for any arrow $v \in \Arr(h,h')$ ($h' \in \Obj(C)$):
$$v [h_1 + h_2] = v h_1 + v h_2$$
Rewriting this in arrow-based notation, we want the following for any half-arrows $\carrow{}{h_1}h$ and $\carrow{}{h_2}h$, and for any arrow $\carrow{h}{v}{h'}$:
$$\left(\carrow{h}{v}{h'}\right)\cdot\left(\carrow{}{h_1}h + \carrow{}{h_2}h\right) = \left(\carrow{\carrow{}{h_1}{h}}{v}{h'}\right) + \left(\carrow{\carrow{}{h_2}{h}}{v}{h'}\right)$$
\end{defin}

Any derived forest category that divides a distributive forest algebra must be locally distributive.
We now show that the converse is also true: Any locally-distributive category divides some distributive forest algebra.
In analogy to results from the theory of ordinary finite categories, we refer to this as a Local-Global Theorem -- showing that being locally distributive entails a global property of the category:

\begin{theorem}[Local-Global]\label{thm:loc-glob}
Let $C$ be a locally-distributive finite derived forest category.
Then it divides a finite distributive forest algebra.
\end{theorem}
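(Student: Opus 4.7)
The plan is to construct explicitly a finite distributive forest algebra $\Ff = (H_\Ff, V_\Ff)$ together with division sets witnessing $C \prec \Ff$. Since $C$ is finite, each of $\Obj(C)$, $\HArr(C)$, and $\Arr(C)$ is a finite set, so we can hope to realize $\Ff$ using only finite data drawn from $C$.

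The key design choice is how to extend the partial action of $C$'s arrows (which is only defined between specific objects) to a global action on a single forest-type monoid. I would take $H_\Ff$ to be a product, indexed by $\Obj(C)$, of commutative idempotent monoids derived from the local half-arrow monoids $\HArr(x)$, together with an injection $\iota$ sending each half-arrow of $C$ into the appropriate slot. For each arrow $\carrow{x}{v}{y}$ of $C$, I would then define its extended action on $H_\Ff$ by projecting onto the $x$-th slot, applying $v$, and placing the result into the $y$-th slot. Let $V_\Ff$ be the monoid of functions $H_\Ff \to H_\Ff$ generated by these extended arrow actions together with the insertion operators $I_h$ for $h \in H_\Ff$.

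Verification then splits into three steps. First, $\Ff$ is a finite forest algebra: $H_\Ff$ is finite, $V_\Ff$ is a function monoid on a finite set and hence finite, faithfulness is automatic because elements of $V_\Ff$ are actual functions, and the remaining forest-algebra axioms hold by construction. Second, $\Ff$ is distributive: for the insertion generators $I_h$, distributivity reduces to horizontal idempotency; for the extended arrow generators, it reduces via the morphism property of the projection to local distributivity of $C$; distributivity is then preserved under composition by a routine induction. Third, the division data is obtained by setting $K_{\carrow{}{c}{x}} := \{\iota(c)\}$ for each half-arrow and $K_{\carrow{x}{v}{y}} := \{\text{extended action of } v\}$ for each arrow, with preservation of operations following immediately from the definition of the extended action.

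The main obstacle is verifying the injectivity condition. Distinct half-arrows of $C$ must remain distinct in $H_\Ff$, which constrains how the product structure of $H_\Ff$ handles sums of half-arrows landing at different objects of $C$; distinct arrows $v, v' \in \Arr(x,y)$ must be separated on $H_\Ff$ by witnesses $\iota(h)$ for $h \in \HArr(x)$. Both requirements ultimately rest on local distributivity of $C$: this hypothesis is precisely what ensures that the global distributive law imposed on $\Ff$ does not force any additional identifications among the data of $C$ beyond those already valid in $C$ itself.
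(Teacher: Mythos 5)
There is a genuine gap, and it sits exactly where you locate "the main obstacle" but then wave it away. Two distinct problems arise. First, your division data does not satisfy the preservation axioms. With $H_\Ff$ a product indexed by $\Obj(C)$ and $K_{\carrow{}{c}{x}} := \{\iota(c)\}$ a singleton, condition 1(c) of Definition~\ref{def:division} forces $\iota$ to be additive on half-arrows; but $\iota\left(\carrow{}{c}{x}\right)+\iota\left(\carrow{}{d}{y}\right)$ is supported on the slots $x$ and $y$, whereas $\iota\left(\carrow{}{c}{x}+\carrow{}{d}{y}\right)$ is supported on the single slot $x+y$, and these differ whenever $x\neq y$ (conditions 1(d),(e) fail for the same reason). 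The obvious repair --- take $H_\Ff$ to be the full half-arrow monoid $\HArr(C)$ with a zero adjoined, so that $\iota$ is the identity --- destroys distributivity: the extended action of $\carrow{x}{v}{y}$ must send half-arrows at objects $z\neq x$ somewhere (say to $0$), and then for $z\neq x$, $w\neq x$ with $z+w=x$ one gets $v\cdot\left(\carrow{}{c}{z}+\carrow{}{d}{w}\right)\neq 0 = v\cdot\carrow{}{c}{z}+v\cdot\carrow{}{d}{w}$. Local distributivity says nothing about this case, because it only governs pairs of half-arrows at the \emph{same} object. So neither version of your algebra is simultaneously distributive and divided by $C$.

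Second, your closing claim --- that local distributivity "is precisely what ensures that the global distributive law does not force any additional identifications" --- is the entire content of the theorem, not a remark. The paper's proof makes this precise in a way that is structurally quite different from yours: it works with \emph{forest diagrams} over the alphabet of arrows and half-arrows, uses local distributivity to merge equally-labelled siblings and show that the evaluation map $\operatorname{Val}$ factors through the path set $\pi(d)$ of the diagram. Even then one is not done: the quotient by path-set equivalence is an \emph{infinite} distributive algebra (this is exactly what happens in the proof of Theorem~\ref{prop:2-wreath}). To land in a \emph{finite} distributive algebra one must, by Proposition~\ref{prop:distr-char}, separate the classes $\operatorname{Val}^{-1}\left(\carrow{}{c}{h}\right)$ and $\operatorname{Val}^{-1}\left(\carrow{}{c'}{h'}\right)$ by a condition of the form "the path set meets a regular word language"; the paper proves this is possible via the argument with $\Pi_1-\Pi_2$ (using the sum $d_3+d_1$ and idempotency), and accordingly takes $K_{\carrow{}{c}{h}}$ to be the whole image of $\operatorname{Val}^{-1}\left(\carrow{}{c}{h}\right)$ under the separating morphism rather than a singleton. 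Your proposal contains neither the normal-form step nor the separation step, and without them the injectivity and finiteness requirements cannot both be met.
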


\begin{proof}
The proof proceeds by considering terms built from arrows and half-arrows and using local distributivity to transform them into a normal form that only depends on the paths in these terms (viewing them as forests). 
We then apply Proposition~\ref{prop:distr-char} to construct a finite distributive forest algebra and a division.
\end{proof}

The idea of introducing a `local' version of distributivity that is appropriate for forest categories, and then relating it to distributive forest algebras in a `Local-Global' Theorem is related to a long tradition in the theory of semigroups and monoids, where local pseudovarieties of categories have been an important object of study (e.g., \cite{tilson-categories-1987}, \cite{pin-locally-1988}, \cite{almeida-syntactical-1996}), and where such Local-Global theorems have been applied to prove decidability of logic classes~\cite{krebs-effective-2012}.

In order to prove Theorem~\ref{thm:main-result}, our goal will be to prove that the derived category $D_{\phi_\Ff, \phi_X}$ mentioned above is locally distributive, then being able to apply Theorem~\ref{thm:loc-glob}.
The details are given in Section \ref{rect:result-proof}.

\subsection{Concluding the proof of Theorem~\ref{thm:main-result}}\label{rect:result-proof}

\begin{proof}[Proof of the Theorem]
Let $\La$ be a language recognized by a 2-distributive finite algebra $\Ff$ via morphism $\phi : \Sigma^\Delta \rightarrow \Ff$.
For each pair $h_1, h_2 \in H_\Ff$ such that $\pi(\phi^{-1}(h_1)) \cap \pi(\phi^{-1}(h_2)) = \emptyset$, we can apply Lemma~\ref{cor:separation} to the languages $\La_1 := \phi^{-1}(h_1)$ and $\La_2 := \phi^{-1}(h_2)$.
From the lemma we get a language $X_{h_1,h_2}$ separating the preimages of $h_1$ and $h_2$.
That is, we have
$$\phi^{-1}(h_1) \subseteq X_{h_1,h_2} \subseteq (H_\Sigma - \phi^{-1}(h_2))$$
We also get an algebra $\Gg_{h_1, h_2} = \Ff_1^{h_1,h_2} \wr \Ff_2^{h_1,h_2} \wr \Ff_3^{h_1,h_2}$ which recognizes $X$ via morphism $\psi_{h_1, h_2} : \Sigma^\Delta \rightarrow \Gg_{h_1, h_2}$.
By the lemma, the three algebras $\Ff_1^{h_1,h_2}, \Ff_2^{h_1,h_2}, \Ff_3^{h_1,h_2}$ are finite and distributive.

Let $$\Gg := \left(\bigtimes_{h_1, h_2} \Ff_1^{h_1, h_2}\right) \wr \left(\bigtimes_{h_1, h_2} \Ff_2^{h_1, h_2}\right) \wr \left(\bigtimes_{h_1, h_2} \Ff_3^{h_1, h_2}\right)$$
where we take products over all pairs $h_1, h_2$ for which $\pi(\phi^{-1}(h_1)) \cap \pi(\phi^{-1}(h_2)) = \emptyset$ holds.
Then $\Gg$ is a wreath product of three finite distributive algebras.
Furthermore, it is divided by each $\Gg_{h_1, h_2}$, and thus recognizes each of the separators $X_{h_1, h_2}$ via some morphism $\psi : \Sigma^\Delta \rightarrow \Gg$.

We now consider the derived forest category $D_{\phi,\psi}$.
We want to show that it is locally distributive, then being able to apply the Derived Category Theorem.
Let $h, h'$ be objects, let $f_1, f_2 \in \HArr(h)$, and let $v \in \Arr(h,h')$.
By the definition of the Derived Category, we can write $f_1$ as $\carrow{}{h_1}h$ and $f_2$ as $\carrow{}{h_2}h$.
Also, we can write $v$ as $\carrow{h}{p} h'$, with $p \in V_\Ff$.
We want to prove the equality from Definition~\ref{def:loc-dist}.

Note that $h_1, h_2 \in H_\Ff$.
In view of the construction of the half-arrows in the derived category, there are forests $t_1, t_2 \in H_\Sigma$ such that $\phi(t_i) = h_i$ and $\psi(t_i) = h$ for $i=1,2$.
For a contradiction, let us assume $\pi(\phi^{-1}(h_1)) \cap \pi(\phi^{-1}(h_2)) = \emptyset$.
Given the way $\psi$ was constructed, the equality $\psi(t_i) = h$ entails $\psi_{h_1, h_2}(t_1) =\psi_{h_1, h_2}(t_2)$.
This is a contradiction to the way in which we have chosen $\psi_{h_1,h_2}$.
The assumption about the empty intersection must have been incorrect, and we have
$$\pi(\phi^{-1}(h_1)) \cap \pi(\phi^{-1}(h_2)) \neq \emptyset$$
So there are forests $b_1, b_2$ such that $\phi(b_i) = h_i$ and $\pi(b_1) = \pi(b_2)$.

Recall $v = \carrow{h}{p} h'$, with $p \in V_\Ff$.
Let $\alpha' \in \phi^{-1}(p)$.
Since $\Ff$ is 2-distributive, we have $\phi(\alpha'[b_1+b_2]) = \phi(\alpha' b_1 + \alpha' b_2)$.
Applying $\phi$, this means $$p(h_1+h_2) = p(h_1) + p(h_2)$$
In the derived category, this translates to $$v (f_1 + f_2)) = v f_1 + v f_2$$
or, in arrow-based notation,
$$\carrow{\left(\carrow{}{h_1}h + \carrow{}{h_2}h\right)}{p}{h'} = \left( \carrow{\carrow{}{h_1}h}{p}{h'}\right) + \left( \carrow{\carrow{}{h_2}h}{p}{h'}\right)$$
Thus, $D_{\phi,\psi}$ is locally distributive.
It is also finite (the two algebras involved in its construction are finite), so it divides a finite distributive algebra $\Ff'$.
By the Derived Category Theorem, $\La$ is recognized by $\Ff' \wr \Gg$, which is the wreath product of four finite distributive algebras.
\end{proof}

\begin{figure}
\begin{subfigure}[b]{0.4\textwidth}
\begin{center}
\begin{tikzpicture}[level/.style={sibling distance=40mm/#1}, scale=0.7,transform shape]
\node [circle,draw] (a) {a}
    child {node [circle,draw] (b) {b}
      child {node [circle, draw] {b}}
      child {node  [circle, draw] {c}}
    }
    child {node [circle,draw] (c) {c}
      child {node [circle,draw] (d) {a}
        child {node [circle, draw] {b}}
        child {node [circle, draw] {c}}
      }
      child {node [circle, draw] {a}
             child {node [circle,draw] (e) {a}
             }
            }
    }
    child {node [circle,draw] (b) {b}
      child {node  [circle, draw] {c}
        child {node [circle, draw] {d}}
            }
    }
;
\end{tikzpicture}
\end{center}
\caption{}
\end{subfigure}
\begin{subfigure}[b]{0.4\textwidth}
\begin{center}
\begin{tikzpicture}[level/.style={sibling distance=40mm/#1}, scale=0.7,transform shape]
\node [circle,draw] (a) {a}
    child {node [circle,draw] (b) {b}
      child {node [circle, draw] {b}}
      child {node  [circle, draw] {c}
        child {node [circle, draw] {d}}
            }
    }
    child {node [circle,draw] (c) {c}
      child {node [circle, draw] {a}
        child {node [circle, draw] {b}}
        child {node [circle, draw] {c}}
             child {node [circle,draw] (e) {a}
             }
            }
    }
;
\end{tikzpicture}
\end{center}
\caption{}
\end{subfigure}
\caption{Illustration for Definition~\ref{def:psi}. Applying $\Psi$ to the tree in (a) results in the tree in (b). The trees have the same (not necessarily maximal) paths. In (b), no two siblings have the same label.}\label{fig:psi}
\end{figure}

\section{Decidability of 2-Distributivity}

We have shown that languages recognized by finite 2-distributive algebras form a subclass of PDL.
We now show that 2-distributivity is decidable.


To decide whether $\Ff$ is 2-distributive, we need to check for morphisms $\phi : \Sigma^\Delta\rightarrow\Ff$ whether $\phi(v(f_1+f_2)) = \phi(vf_1 + vf_2)$ holds whenever $\pi(f_1) = \pi(f_2)$, for all forests $f_1, f_2 \in H_\Sigma$ and all contexts $v \in V_\Sigma$.
To do this algorithmically, we want to find those pairs $h, h' \in H$ such that $\pi(\phi^{-1}(h))$ and $\pi(\phi^{-1}(h'))$ have nonempty intersection. For these $h, h'$, we then need to check whether $v(h+h') = vh+vh'$ for all $v \in V$. If we can find these pairs $h, h'$ algorithmically, decidability is shown (We will see that looking at one specific morphism $\phi$ is enough.).

Thus, the problem boils down to deciding, given two regular forest languages $\La_1, \La_2$, whether $\pi(\La_1) \cap \pi(\La_2)$ is empty.
We will reduce this to the problem of deciding whether two regular forest languages -- computed from $\La_1, \La_2$ -- have nonempty intersection.
We will use the following tool: 
\begin{defin}[Distributive Normal Form]\label{def:psi}
Define a map $\Psi : H_\Sigma \rightarrow H_\Sigma$ as follows.

Consider a forest $f := \beta_1[f_1] + ... + \beta_n[f_n]$ ($n \geq 0$).
Here, $\beta_1, ..., \beta_n$ are symbols from $\Sigma$, some or all of which can be identical, and $f_1, ..., f_n$ are forests.
For each $\beta \in \{\beta_1, .., \beta_n\}$, define \[F_\beta := \{f_i : \beta_i = \beta\} \subset \{f_1, ..., f_n\}\]
Then, set $$\Psi(f) := \sum_{\beta \in \{\beta_1, ..., \beta_n\}} \beta\left(\Psi\left[\sum_{f' \in F_\beta} f'\right]\right)$$
\end{defin}

An example is provided in Figure~\ref{fig:psi}.

\begin{prop}\label{prop:psi}
Let $f, f' \in H_\Sigma$.
\begin{enumerate}
\item No two distinct sibling nodes in $\Psi(f)$ are labeled with the same symbol.

\item $\pi(f) = \pi(\Psi(f))$.

\item $\pi(f) = \pi(f')$ if and only if $\Psi(f) = \Psi(f')$.
\end{enumerate}
\end{prop}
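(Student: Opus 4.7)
The plan is to prove all three parts together by structural induction on the forest $f$, using the recursive definition of $\Psi$. The base case $f = 0_H$ is trivial, since $\Psi(0_H) = 0_H$ and $\pi(0_H) = \{\epsilon\}$ (or $\emptyset$ if one adopts that convention for the empty forest). For the inductive step, write $f = \beta_1[f_1] + \cdots + \beta_n[f_n]$ in the form used in Definition~\ref{def:psi}. Part (1) then follows directly from the definition: the outer sum in the expression for $\Psi(f)$ is indexed by elements $\beta$ of the \emph{set} $\{\beta_1, \ldots, \beta_n\}$, so each symbol appears on at most one top-level summand; combined with the inductive hypothesis applied to $\Psi[\sum_{f' \in F_\beta} f']$ under each $\beta$, this gives the claim.

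For part (2), I would expand $\pi(f) = \{\epsilon\} \cup \bigcup_{i=1}^n \{\beta_i w : w \in \pi(f_i)\}$, regroup the union by common root labels to obtain $\pi(f) = \{\epsilon\} \cup \bigcup_{\beta} \{\beta w : w \in \bigcup_{f' \in F_\beta} \pi(f')\}$, and note that $\bigcup_{f' \in F_\beta} \pi(f') = \pi(\sum_{f' \in F_\beta} f')$ since $\pi$ distributes over $+$. The inductive hypothesis then yields $\pi(\Psi[\sum_{f' \in F_\beta} f']) = \pi(\sum_{f' \in F_\beta} f')$, and comparing with the analogous expansion of $\pi(\Psi(f))$ closes the induction.

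Part (3) is the main content. The ``if'' direction is immediate from (2): $\Psi(f) = \Psi(f')$ implies $\pi(f) = \pi(\Psi(f)) = \pi(\Psi(f')) = \pi(f')$. For ``only if'', the key observation is that $\pi(f)$ already determines (i) the set of root labels of $f$, namely the length-one words in $\pi(f)$, and (ii) for each such $\beta$, the subset $\{w : \beta w \in \pi(f)\}$, which by the computation above equals $\pi(\sum_{f' \in F_\beta} f')$. Hence if $\pi(f) = \pi(f')$, then for each label $\beta$ appearing at the root the corresponding merged subforests coming from $f$ and from $f'$ have the same pathset, so by the inductive hypothesis their $\Psi$-images coincide; reassembling according to Definition~\ref{def:psi} then gives $\Psi(f) = \Psi(f')$. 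I do not anticipate a serious conceptual obstacle, since $\Psi$ is essentially a canonicalization of forests modulo the equivalence generated by $\pi$, and the three claims record three facets of this. The one point requiring care is the bookkeeping when regrouping the $i$-indexed sum into a sum over distinct labels in part~(2), and an appeal to horizontal commutativity and idempotency to ensure that the merged subforest $\sum_{f' \in F_\beta} f'$ is well-defined as a forest type from the unordered collection $F_\beta$.
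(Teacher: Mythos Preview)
Your proposal is correct and follows essentially the same route as the paper. The only cosmetic difference is in part~(3): the paper phrases the ``only if'' direction as a uniqueness statement (for any given pathset there is exactly one forest with that pathset and no repeated sibling labels, so $\Psi(f)$ and $\Psi(f')$ must coincide by (1) and (2)), whereas you unfold this uniqueness directly via structural induction on $f$; these are the same argument in different clothing.
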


\begin{proof}
(1) By induction over the height of forests.
(2) is immediate from the definition of $\Psi$.
For (3), the `if' direction follows from (2). For the `only if' direction, observe that, for any given pathset, there is only a single forest (up to order of siblings) having this pathset and satisfying the condition that no sibling nodes have the same symbol.
\end{proof}

Due to the second property, we will use $\Psi(f)$ as a suitable representative forest for the path set $\pi(f)$.
In certain ways, $\Psi(f)$ will be better-behaved than a general forest $f$.
The following proposition shows that the image of languages under $\Psi$ is also well-behaved:

\begin{prop}\label{prop:psi-reg}
Let $\La$ be a regular forest language. Then $\Psi(\La)$ is a regular forest language and can be effectively constructed from (an automaton for) $\La$.
\end{prop}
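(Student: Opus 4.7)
The plan is to build a finite forest automaton recognizing $\Psi(\La)$, thereby showing it is regular. By Proposition~\ref{prop:psi}(1), every forest in $\Psi(\La)$ is \emph{reduced}---no two sibling nodes share a label---and reducedness is itself a regular property of forests, so I restrict to reduced inputs and intersect with the regular language of reduced forests at the end.

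Fix a finite forest algebra $(H,V)$ and a morphism $\phi : \Sigma^\Delta \rightarrow (H,V)$ recognizing $\La$. The automaton will compute, bottom-up, for each reduced subforest $g$ of the input, the finite-state invariant
$$\mathcal{M}(g) := \{\{\phi(t) : t \text{ a tree of } f\} : f \in H_\Sigma, \Psi(f) = g\} \subseteq 2^H,$$
which lives in the finite set $2^{2^H}$. Since $+$ in $H$ is commutative and idempotent, $\phi(f) = \sum_{h \in M} h$ whenever $M = \{\phi(t) : t \text{ a tree of } f\}$, so the $\phi$-value of any preimage of $g$ is recoverable as $\sum M$ for some $M \in \mathcal{M}(g)$. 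Acceptance of a reduced input $g$ is thus equivalent to $\phi(\La) \cap \{\sum M : M \in \mathcal{M}(g)\} \neq \emptyset$, a condition determined by $\mathcal{M}(g)$.

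For the recursion, write $v_\alpha \in V$ for the context type associated with $\alpha \in \Sigma$, so that $\phi(\alpha[f]) = v_\alpha \phi(f)$. Base: $\mathcal{M}(\emptyset) = \{\emptyset\}$. Step: for a reduced forest $g = \alpha_1[g_1] + \dots + \alpha_n[g_n]$ with distinct labels $\alpha_i$, any preimage $f$ decomposes as $f = f_1 + \dots + f_n$ with each $f_i$ a nonempty multiset of $\alpha_i$-rooted trees whose pooled child-subforests form some preimage of $g_i$ under $\Psi$. Using commutativity and idempotency, the set of root-$\phi$-values contributed by $f_i$ is determined by $(\alpha_i, M_i)$ with $M_i \in \mathcal{M}(g_i)$, together with a choice of how to split a preimage of $g_i$ among the subforests of the $\alpha_i$-rooted trees (some of which may be empty). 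It takes the form $\{v_{\alpha_i}(\sum_{h \in N} h) : N \in \mathcal{N}\}$ for certain collections $\mathcal{N}$ of subsets of $M_i$ that cover $M_i$. Taking unions over $i$ and ranging over all such choices yields exactly $\mathcal{M}(g)$, computable from the $(\alpha_i, \mathcal{M}(g_i))_i$.

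The bottom-up tree state is the pair $(\alpha, \mathcal{M}(g'))$ for a tree $\alpha[g']$, living in the finite set $\Sigma \times 2^{2^H}$, and the transitions above are effective. This gives an effectively constructible finite automaton for $\Psi(\La)$. The principal obstacle is the forest-recursion step: one must carefully leverage horizontal commutativity and idempotency of $(H,+)$ to replace multisets of preimage trees by sets of their $\phi$-values, and then correctly enumerate which collections $\mathcal{N}$ of subsets of $M_i$ arise from valid partitions of a preimage of $g_i$ into the child subforests of the $\alpha_i$-rooted trees of $f_i$.
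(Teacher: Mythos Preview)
Your proposal is correct and takes essentially the same approach as the paper. Both arguments track, for each reduced forest $g$, the collection of possible sets $\{\phi(t) : t \text{ a tree of } f\}$ over preimages $f$ with $\Psi(f)=g$, living in the double powerset $2^{2^H}$; and both use the same covering step (your collections $\mathcal{N}$ of subsets of $M_i$ with $\bigcup\mathcal{N}=M_i$ are exactly the paper's $Q_1,\dots,Q_l$ with $Q_1\cup\dots\cup Q_l=\bigcup_\gamma P_\gamma$). The only cosmetic difference is packaging: the paper builds an explicit (non-idempotent) forest algebra with horizontal monoid $H' = (2^{2^H})^\Sigma \cup \{\bot\}$, using the $\Sigma$-indexing to make $+$ a simple pointwise union and $\bot$ to kill non-reduced inputs, whereas you pre-restrict to reduced forests and carry the root label in the tree state $(\alpha,\mathcal{M}(g'))$ instead. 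The ``principal obstacle'' you flag is real but resolves cleanly: every covering of $M_i$ by subsets (including $\emptyset$) is realizable, which is also what the paper's transition allows.
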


It is important to note that the image $\Psi(\La)$ is not recognized by a horizontally idempotent forest algebra, as multiplicity of children does matter. The recognizing finite forest algebra will be horizontally commutative but not idempotent.
This proposition and proof is the one place in this paper where we deviate from our convention that all forest algebras are horizontally commutative and idempotent.

\begin{proof}
Let $\Ff = (H,V)$ be a finite forest algebra recognizing $\La$ via morphism $\phi$.

Let $H' := {Pow(Pow(H))}^\Sigma \cup \{\bot\}$ with the operation: $f + f' = \bot$ if there is $\alpha \in \Sigma$ such that $f(\alpha), f'(\alpha) \neq \emptyset$, or if one of $f, f'$ is already equal to $\bot$.
Otherwise, $(f+f')(\alpha) = f(\alpha) \cup f'(\alpha)$.
It should be noted that this operation is not idempotent due to the first condition, which makes $f + f = \bot$ unless $f(\alpha) = \emptyset$ for all $\alpha$.
With this operation, $H'$ is a commutative finite monoid with identity $f_0$ given by $f_0(\alpha) = \emptyset$ for $\alpha \in \Sigma$.

Let $V'$ be the finite monoid of all maps $H' \rightarrow H'$, which naturally acts on $H'$.
It is easy to show that $(H',V')$ is a finite forest algebra (though not horizontally idempotent).

Then define a morphism $\psi : \Sigma^\Delta \rightarrow (H',V')$ by first constructing the images of the contexts consisting of only a single letter: $\psi(\alpha) := g_\alpha$ where by $\psi(\alpha)$ we denote the image of the context consisting of only $\alpha$ and a variable below it ($\alpha[X]$).
Once we have chosen $g_\alpha \in V'$ for each $\alpha$, it is not hard to see that we obtain a unique forest algebra morphism $\psi : \Sigma^\Delta \rightarrow (H',V')$ extending this map.
Recall that $g_\alpha$ will need to be a map $H' \rightarrow H'$.
We first set $g_\alpha(\bot) = \bot$ for all $\alpha$.
For $f \in H' - \{\bot\}$, so $f : \Sigma \rightarrow Pow(Pow(H))$, we furthermore set
\[\psi(\alpha)(f)(\beta) = \emptyset \text{ when }\alpha \neq \beta\]
Finally, considering the case $\alpha = \beta$, then for any $Q \subset H$, we set $Q \in \psi(\alpha)(f)(\alpha)$ if and only if
there are sets $Q_1, ..., Q_l \subset H$ such that for each $\gamma \in \Sigma$ such that $f(\gamma) \neq \emptyset$, there is $P_\gamma \in f(\gamma)$ such that
\[Q_1 \cup ... \cup Q_l = \bigcup_{\gamma} P_\gamma\]
 and 
\[Q = \left\{\phi(\alpha)\cdot\left[\sum_{h \in Q_i} h\right] : i=1, ..., l\right\}\]
%
%
%
We now claim that, for $h \in H$, the language $\Psi(\phi^{-1}(h)) - \{\emptyset\}$ (that is, removing the empty forest if it is in the language) is equal to 
\[\psi^{-1}(\{f : \exists \alpha : f(\alpha) \neq \emptyset \wedge \forall \alpha \in \Sigma : f(\alpha) = \emptyset \vee \{h\} \in f(\alpha)\})\]
where $\psi$ is the forest algebra morphism $\psi : \Sigma^\Delta \rightarrow (H',V')$ that we just constructed.
This is shown by induction over forests.

Considering that the empty forest is the only element of the preimage of the identity element of $H'$, this implies that $\Psi(\La)$ is recognized by $(H',V')$ via $\psi$.
\end{proof}

We can now show decidability of 2-distributivity:
\begin{theorem}\label{prop:2-decid}
It is decidable whether a finite forest algebra is 2-distributive.
\end{theorem}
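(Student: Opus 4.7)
The plan is to reduce 2-distributivity of a finite forest algebra $\Ff=(H,V)$ to finitely many intersection-nonemptiness tests on regular forest languages. First I would collapse the universal quantification over all alphabets and morphisms in the definition of 2-distributivity to a single canonical morphism. Let $\Sigma_0 := \{a_v : v \in V\}$ and let $\phi_0 : \Sigma_0^\Delta \to \Ff$ be determined by $\phi_0(a_v[X]) = v$; this is surjective onto both $H$ and $V$. For an arbitrary morphism $\psi : \Sigma^\Delta \to \Ff$, the letter substitution $\sigma(\alpha) := a_{\psi(\alpha[X])}$ extends to a free-algebra morphism $\sigma_* : \Sigma^\Delta \to \Sigma_0^\Delta$ satisfying $\phi_0 \circ \sigma_* = \psi$ and, by a routine induction on forest structure, $\pi(\sigma_*(f)) = \sigma(\pi(f))$. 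Hence any alleged counterexample $(v,f_1,f_2)$ to 2-distributivity under $\psi$, with $\pi(f_1) = \pi(f_2)$, transfers along $\sigma_*$ to a counterexample under $\phi_0$. It follows that $\Ff$ is 2-distributive if and only if, for every $p \in V$ and every pair $(h_1,h_2) \in H \times H$ with $\pi(\phi_0^{-1}(h_1)) \cap \pi(\phi_0^{-1}(h_2)) \neq \emptyset$, one has $p(h_1+h_2) = ph_1 + ph_2$.

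The identity check $p(h_1+h_2) = ph_1 + ph_2$ is trivial once the pair is known, since both $H$ and $V$ are finite and given. The remaining algorithmic task is, for each of the $|H|^2$ pairs, to decide whether $\pi(\La_1) \cap \pi(\La_2) \neq \emptyset$ where $\La_i := \phi_0^{-1}(h_i)$; these languages are regular by finiteness of $\Ff$. By Proposition~\ref{prop:psi}(3),
\[
  \pi(\La_1) \cap \pi(\La_2) \neq \emptyset \iff \Psi(\La_1) \cap \Psi(\La_2) \neq \emptyset,
\]
and Proposition~\ref{prop:psi-reg} guarantees that each $\Psi(\La_i)$ is an effectively constructible regular forest language (recognized by a horizontally commutative, not necessarily idempotent, finite forest algebra). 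The intersection is then recognized by the direct product of the two recognizers, and emptiness reduces to reachability of a designated value in this finite product, which is standard.

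The principal conceptual point in this plan is the first step: verifying that the single canonical morphism $\phi_0$ really captures every possible violation of 2-distributivity, despite the definition quantifying over all alphabets and morphisms. Once the substitution argument above is rigorously in place, the rest is a mechanical application of the $\Psi$-machinery already developed in this section together with standard decidability of emptiness for regular forest languages, so no further new ideas are needed.
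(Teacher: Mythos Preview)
Your proposal is correct and follows essentially the same approach as the paper: both reduce to the single canonical morphism over the alphabet $V$ via a letter-substitution factorization, then decide $\pi(\phi_0^{-1}(h_1))\cap\pi(\phi_0^{-1}(h_2))\neq\emptyset$ by testing nonemptiness of $\Psi(\La_1)\cap\Psi(\La_2)$ using Propositions~\ref{prop:psi} and~\ref{prop:psi-reg}, and finally check the finitely many identities $p(h_1+h_2)=ph_1+ph_2$.
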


\begin{proof}
Given a finite forest algebra $\Ff = (H,V)$, choose $\Sigma := V$, and let $\phi : \Sigma^\Delta \rightarrow (H,V)$ be the (unique) morphism extending the identity map $\phi : \Sigma \rightarrow V$, that is, mapping the context $v[X]$ to $v \in V$.

Given two regular forest languages $\La_1, \La_2$, it is decidable whether $\pi(\La_1) \cap \pi(\La_2)$ is empty.
To prove this, we use the mapping $\Psi$ introduced in Definition~\ref{def:psi}.
We can use Proposition~\ref{prop:psi-reg} to effectively check whether the regular forest languages $\Psi(\La_1)$ and $\Psi(\La_2)$ have nonempty intersection.
From Proposition~\ref{prop:psi}, we know that this happens if and only if $\pi(\phi^{-1}(h))$ and $\pi(\phi^{-1}(h'))$ also have nonempty intersection.

Using this resut, we can then, for each pair $h, h' \in H$, effectively check whether $\pi(\phi^{-1}(h))$ and $\pi(\phi^{-1}(h'))$ have nonempty intersection.
If this is the case, we can check for each context type $v \in V$ whether $v[h+h'] = vh+vh'$.
This equality holds for each $v$ and for each selected pair $h, h'$ if and only if $\phi(c(f+f')) = \phi(cf+cf')$ for all $c \in V_\Sigma$ and each $f, f' \in H_\Sigma$ such that $\pi(f) = \pi(f')$.
This is a necessary condition for $\Ff$ to be 2-distributive.

To prove that this is also sufficient, consider another alphabet $\Sigma'$ and a morphism $\psi : \Sigma'^\Delta \rightarrow (H,V)$.
We can build a morphism $\eta : \Sigma'^\Delta \rightarrow \Sigma^\Delta$, generated by the map $\Sigma' \rightarrow \Sigma$ defined by $\eta(\alpha) := \psi(\alpha) \in \Sigma$ for $\alpha \in \Sigma'$.
Then $\psi = \phi \circ \eta$.
Let $f_1, f_2 \in H_{\Sigma'}$ with $\pi(f_1) = \pi(f_2)$, and let $c \in V_{\Sigma'}$.
Then $\pi(\eta(f_1)) = \pi(\eta(f_2))$, and, by assumption, $\psi(c[f_1+f_2]) = \phi(\eta(c[f_1+f_2])) = \phi(\eta(c)[\eta(f_1)+\eta(f_2)]) = \phi(\eta(c)\eta(f_1)+\eta(c)\eta(f_2)) = \phi(\eta(cf_1+cf_2)) = \psi(cf_1+cf_2)$.

\end{proof}

\section{Discussion}\label{sec:discussion}

We have shown that 2-distributive finite forest algebras recognize a subclass of PDL, and that 2-distributivity is a decidable property.

As we outlined in the Introduction, generalizing this approach to $k > 2$ would settle decidability of PDL.
Our notion of 2-distributivity can be generalized in the following way, slightly different than the one given in~\cite{straubing-new-2013}:

\begin{defin}
For each $k \geq 1$, define a congruence $\sim_k$ on $\Sigma^\Delta = (H_\Sigma, V_\Sigma)$ as follows:
\begin{enumerate}
\item $\sim_1$ is the smallest congruence such that $f \sim_1 f'$ whenever $f, f' \in H_\Sigma$ and $\pi(f) = \pi(f')$.
\item For any $k \geq 1$, $\sim_{k+1}$ is the smallest congruence such that \[v[f+f'] \sim_{k+1} vf + vf'\] for any $v \in V_\Sigma$ and any $f, f' \in H_\Sigma$ such that $f \sim_k f'$.
\end{enumerate}
\end{defin}
For each $k$, the congruence $\sim_k$ encodes a $k$-fold iteration of the distributive law.
A forest algebra $\Ff$ is \emph{$k$-distributive} if, for all morphisms $\phi : \Sigma^\Delta \rightarrow \Ff$, $\phi(f) = \phi(f')$ whenever $f \sim_k f'$.
For $k = 2$, this coincides with our definition above.

In analogy to Proposition~\ref{prop:wreath-2} and a result from ~\cite{straubing-new-2013}, it can be shown that the wreath product of $k$ distributive forest algebras is k-distributive.
Determining, given a finite forest algebra $\Ff$, whether it is $k$-distributive for some $k$ is a decidable problem.
If one could show that any finite $k$-distributive forest algebra only recognizes languages in PDL, definability in PDL would therefore be shown decidable~\cite{straubing-new-2013}.
We have solved this problem in the case $k=2$.

Indeed, generalizing our Local-Global Theorem to $k \geq 2$ is feasible, and the proof method of our main result might be adapted to construct an inductive proof.
It would be sufficient to, given a general $k$-distributive algebra, construct a wreath product of finite distributive algebras and show that an appropriate derived forest algebra is $k-1$-distributive. 
To carry this out, a suitable strengthening of our Separation Lemma to a property stronger than separation would be required.

PDL is a member of a larger family of forest languages for which decidability of expressibility is still unknown, in spite of longstanding interest and several attempts \cite{thomas-logical-1984,  potthoff-first-order-1995}. 
For a range of tree and forest logics, decidable characterizations have been obtained (see \cite{bojanczyk-effective-2008} for a survey up to 2008; more recent results include \cite{bojanczyk-piecewise-2012,bojanczyk-tree-2010,place-deciding-2010,place-decidable-2009,benedikt-regular-2009} among others).
However, for many more prominent logics, including First-Order Logic with ancestor, $CTL$, $CTL^*$, PDL, and Chain Logic, this problem remains open.
As described in the introduction, PDL extends both $CTL$ and $CTL^*$.
All of these logics were shown by \cite{bojanczyk-wreath-2012} to correspond to iterated wreath products of specific types of forest algebras satisfying a distributive law.
Among these, PDL stands out because it is characterized through products of arbitrary distributive forest algebras, and can -- at least at the level $k = 2$ -- be captured in terms of a $k$-fold iterated distributive law.
Therefore, our results might also shed light on this larger family of open problems.

In the field of regular word languages and logic on words, the study of finite monoids has been tremendously successful.
Our proof strategy highlights how the classical theory developed for studying logic on words via wreath products of monoids carries over faithfully to the setting of forest algebras:
Our proof proceeds by solving two sub-problems related to the left and right factors in wreath product decompositions: a separation result and a Local-Global theorem, which are then combined via the Derived Category Theorem \cite{tilson-categories-1987}.
These steps are remarkably similar to results from the theory of logic on words and finite monoids which also reduce the problem of decidability to separation \cite{place-going-2014} and Local-Global theorems \cite{krebs-effective-2012}.

\bibliography{references}

\newpage

\appendix*
\section{General Definitions and Lemmas}

\begin{defin}\label{def:pathsets}[Pathsets: $\pi$, $\freedisth$, $\Pi$]
Recall the map $\pi : H_\Sigma \rightarrow Pow(\Sigma^*)$ mapping forests to their path sets.
The image of $\pi$ is called $\freedisth$. That is, $\freedisth$ is the set of nonempty finite subsets of $\Sigma^*$ that are closed under prefixes ($wv \in X \Rightarrow w \in X$).

We extend $\pi$ to an operation on forest languages:
\[\pi(\La) := \{\pi(f) : f \in \La\} \subset Pow(\Sigma^*)\] for $\La \subseteq H_\Sigma$.

When $\La \subseteq H_\Sigma$, then $\Pi(\La)\subseteq \Sigma^*$ is the set of all paths occurring in some forest in $\La$:
$$\Pi(\La) := \bigcup_{f \in \La} \pi(f)$$

Since $\Pi$ is defined in terms of $\pi$, $\Pi$ is also well-defined on subsets of $\freedisth$: If $X \subseteq \freedisth$, then $\Pi(X) := \bigcup_{\phi \in X} \phi$.
\end{defin}

\begin{prop}\label{prop:pi-reg}
Let $\La \subset H_\Sigma$ be a regular forest language. Then $\Pi(\La) \subset \Sigma^*$ is a regular word language, and can be effectively constructed from (an automaton for) $\La$.
\end{prop}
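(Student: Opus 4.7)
The plan is to construct an NFA recognizing $\Pi(\La)$ directly from a finite forest algebra recognizing $\La$, by exploiting the fact that a root-to-descendant path is determined by a chain of labels together with arbitrary ``side forests'' hanging off each node of the chain.

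Fix a finite forest algebra $(H,V)$ recognizing $\La$ via a surjective morphism $\phi : \Sigma^\Delta \to (H,V)$, let $F := \phi(\La) \subseteq H$, and for each $\alpha \in \Sigma$ set $v_\alpha := \phi(\alpha[X]) \in V$. The key observation is that $w = \alpha_1\cdots\alpha_n$ lies in $\Pi(\La)$ iff some forest of the form
\[
f \;=\; g_0 + \alpha_1\bigl[g_1 + \alpha_2[g_2 + \cdots + \alpha_n[g_n]\cdots]\bigr]
\]
belongs to $\La$, for arbitrary choices of side forests $g_0, g_1, \ldots, g_n \in H_\Sigma$. Setting $x_i := \phi(g_i) \in H$ and recursively defining $t_n := v_{\alpha_n} x_n$ and $t_i := v_{\alpha_i}(x_i + t_{i+1})$ for $i < n$, this condition becomes: there exist $x_0, \ldots, x_n \in H$ such that $x_0 + t_1 \in F$. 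By surjectivity of $\phi$, the $x_i$ range freely over $H$.

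I would then define an NFA $\mathcal{A}$ that reads the path bottom-up (i.e., accepts the reversed word $w^R = \alpha_n \cdots \alpha_1$), with state set $H \cup \{q_0\}$, fresh initial state $q_0$, accepting states $P := \{h \in H : \exists x \in H,\ x + h \in F\}$, and transitions $q_0 \xrightarrow{\alpha} h$ whenever $h \in v_\alpha \cdot H$, together with $h \xrightarrow{\alpha} h'$ whenever $h' \in v_\alpha \cdot (H + h)$. A straightforward induction on $n$, using the decomposition above, shows that $\mathcal{A}$ accepts $w^R$ iff $w \in \Pi(\La)$: an accepting run $q_0 \to t_n \to t_{n-1} \to \cdots \to t_1 \in P$ records precisely a choice of witnesses $x_n, x_{n-1}, \ldots, x_1$ (the transitions) plus the existence of $x_0$ (the accepting condition) realizing a completion $f \in \La$ with $w \in \pi(f)$. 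Since regular word languages are closed under reversal, $\Pi(\La)$ is regular.

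Effectiveness is immediate: $(H,V)$ and $\phi$ can be extracted from any finite-state forest automaton for $\La$, and then $\mathcal{A}$ is specified by the finite data of $H$, $F$, and $\{v_\alpha\}_{\alpha \in \Sigma}$. I anticipate no serious obstacle here; the only care required is in verifying the correctness induction and handling edge cases -- for instance $n = 0$ (the empty path, which belongs to $\Pi(\La)$ exactly when $\La$ contains some nonempty forest) and the choice $g_n = 0_H$ -- both of which are absorbed by letting $x_n$ range over all of $H$ including $0_H$.
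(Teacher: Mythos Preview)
Your argument is correct and follows the same high-level strategy as the paper: build a finite automaton directly from a finite forest algebra $(H,V)$ recognizing $\La$, using the observation that a path $w$ lies in $\Pi(\La)$ exactly when it can be completed, with arbitrary side forests at each level, to a forest in $\La$.

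The execution differs in an interesting way. The paper reads $w$ top-down with a deterministic automaton over $Pow(V)$: after reading a prefix, the state is the set of all context types $\phi(\alpha_1)I_{h_1}\cdots\phi(\alpha_k)I_{h_k}$ realizable by contexts whose spine is that prefix. You instead read $w$ bottom-up with a nondeterministic automaton over $H$: after reading a suffix, the state is a single forest type $t_i$ witnessing one particular choice of side forests below. Your invariant is simpler and your state space is exponentially smaller ($|H|+1$ versus $2^{|V|}$), at the cost of nondeterminism and a final reversal step. Both are entirely adequate for the proposition, which only asks for regularity and effectiveness. The one loose end you flag yourself, the $n=0$ case, is harmless: whether $\epsilon\in\Pi(\La)$ depends on the convention for $\pi$ on the empty forest, and adjusting whether $q_0$ is accepting handles it without affecting regularity.
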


\begin{proof}
Let $\Ff = (H,V)$ be a finite forest algebra recognizing $\La$ via morphism $\phi$. 
We want to construct a finite automaton recognizing the word language $\Pi(\La)$.
Let $Q := Pow(V)$ be the state set of this automaton, and consider the transition function $Q \times \Sigma \rightarrow Q$
given by
\[ (\{v_1, ..., v_k\} , \alpha) \mapsto \{v : \exists j =1, ...,k, \exists h \in H : v = v_j \cdot \phi(\alpha) \cdot I_h \}\]
where by $\phi(\alpha)$ we refer to the image of the context consisting of only $\alpha$ and a variable below it ($\alpha[X]$), and $I_h \in V$ is the insertion operation defined in Definition~\ref{def:forest-algebra}.
We take the starting state to be $\{1_V\} \in Q$, and the accepting states to be all nonempty subsets of $\{v : \phi^{-1}(v \cdot 0_H) \subset \La \}$.
By induction over the length of words, it is shown that this automaton exactly accepts the language $\Pi(\La)$.
\end{proof}

For reference, we formulate the full version of the Derived Category Theorem.
It is not required for our main result, but will be used in Appendix~\ref{sec:proof-ex} for the proof of Example~\ref{ex:langs}.

\begin{theorem}\label{thm:full:deriv}[Full Version of the Derived Category Theorem, \cite{straubing-forest-2018}]
Let $\Sigma$ be an alphabet, and let $\alpha, \beta$ be morphisms from $\Sigma^\Delta$ onto forest algebras $(H_1, V_1)$, $(H_2, V_2)$, respectively. Let $(H,V)$ be a forest algebra.

(a) Assume $D_{\alpha, \beta} \prec (H,V)$. Then
$$(H_1,V_1) \prec (H,V) \wr (H_2, V_2)$$

(b) Suppose $\alpha$ factors as
$$\alpha = \gamma\delta : \Sigma^\Delta \rightarrow (H_1, V_1)$$
where
$$\delta : \Sigma^\Delta \rightarrow (H,V) \wr (H_2,V_2)$$
$$\gamma : \text{Im}\ \delta \rightarrow (H_1,V_1)$$
and that $\beta = \pi^{(2)}\delta$, where $\pi^{(2)}$ is the projection homomorphism from the wreath product onto its right-hand factor. Then $D_{\alpha,\beta} \prec (H,V)$.
\end{theorem}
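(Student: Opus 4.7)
The plan is to prove the two directions separately: part (a) builds a morphism into the wreath product from given division data, while part (b) extracts division data from a given wreath product factorization.

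For part (a), assuming $D_{\alpha,\beta} \prec (H,V)$ with families of nonempty sets $K_{\text{arrow}}$ and $K_{\text{half-arrow}}$ as in Definition~\ref{def:division}, I would construct a morphism $\delta : \Sigma^\Delta \to (H,V)\wr(H_2,V_2)$ by making coherent choices within the $K$-sets. For each letter $a \in \Sigma$, the single-letter context $a[X] \in V_\Sigma$ defines, for every reachable $h_2 \in H_2$, an arrow $\carrow{h_2}{a[X]}{\beta(a)\cdot h_2}$ in $D_{\alpha,\beta}$; I would set $\delta(a[X]) := (f_a, \beta(a))$ with $f_a(h_2)$ picked from the set $K_{\carrow{h_2}{a[X]}{\beta(a)\cdot h_2}}$. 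Extending multiplicatively to all of $V_\Sigma$ and $H_\Sigma$, the preservation axioms (1a)--(1e) are precisely engineered to match the wreath product's multiplication and action, so $\delta$ is a well-defined forest algebra morphism whose image on any forest $s$ lies in $K_{\carrow{}{\alpha(s)}{\beta(s)}} \times \{\beta(s)\}$, and analogously on contexts. The injectivity properties (2a)--(2b) then let me define a surjection $\gamma : \operatorname{Im}\ \delta \to (H_1,V_1)$ that sends each image pair back to the unique arrow or half-arrow whose $K$-set contains it and projects to the associated value of $\alpha$. By construction $\alpha = \gamma \delta$, and surjectivity of $\alpha$ yields surjectivity of $\gamma$, exhibiting $(H_1,V_1)$ as a quotient of a subalgebra of $(H,V)\wr(H_2,V_2)$.

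For part (b), given the factorization $\alpha = \gamma \delta$ with $\beta = \pi^{(2)}\delta$, I would define division data by reading off $\delta$. For a half-arrow $\carrow{}{c}{x}$ in $D_{\alpha,\beta}$, let $K_{\carrow{}{c}{x}}$ be the set of first components of $\delta(s)$ taken over all forests $s$ with $\alpha(s) = c$ and $\beta(s) = x$; nonemptiness follows from surjectivity of $\alpha$ together with $\beta = \pi^{(2)}\delta$. For an arrow $\carrow{x}{p}{y}$ represented by some $p \in V_\Sigma$ with $\delta(p) = (g, \beta(p))$, I put $g(x)$ into $K_{\carrow{x}{p}{y}}$, first checking that this value depends only on the $\sim$-equivalence class of $p$ and not on the chosen representative. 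The preservation conditions (1a)--(1e) read off directly from the wreath product multiplication and action formulas.

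The main obstacle I anticipate is verifying the injectivity conditions (2a)--(2b) in part (b), because arrows and half-arrows of $D_{\alpha,\beta}$ are equivalence classes defined in terms of $\alpha$, whereas division requires distinguishing them via $K$-sets inside $(H,V)$. For half-arrows, if some $k$ lies in both $K_{\carrow{}{c}{x}}$ and $K_{\carrow{}{c'}{x}}$, then there are forests $s, s'$ with $\delta(s) = \delta(s') = (k,x)$, and applying $\gamma$ forces $\alpha(s) = \alpha(s')$, so $c = c'$. The arrow case is analogous but subtler: I must show that whenever the first components $g, g'$ of $\delta(p), \delta(p')$ agree at $x$, then $\alpha(ps) = \alpha(p's)$ for every $s$ with $\beta(s) = x$, which is exactly the defining condition of $\sim$ on arrows. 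This reduction uses the factorization through $\gamma$ together with the fact that the wreath product action computes the first-component action of $g$ at $\beta(s) = x$. Carefully matching the $\sim$-equivalence of the derived category to equality in $\operatorname{Im}\ \delta$ via $\gamma$ is the key technical point of the proof.
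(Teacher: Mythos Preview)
The paper does not give its own proof of this theorem: it is stated in the appendix purely as a citation from \cite{straubing-forest-2018}, with no accompanying argument. So there is no ``paper's proof'' to compare against; your proposal stands on its own.

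Your outline is the standard argument for derived category theorems (in the spirit of Tilson's theorem for monoids, adapted to the forest-algebra setting), and the overall strategy is correct. A few remarks on details. In part~(a), freeness of $\Sigma^\Delta$ is what makes the extension of $\delta$ from single-letter contexts well-defined; the preservation axioms are then used, not to define $\delta$, but to show inductively that $\delta(s)$ lands in $K_{\carrow{}{\alpha(s)}{\beta(s)}}\times\{\beta(s)\}$ (and analogously for contexts). You state this correctly but the phrasing slightly conflates the two roles. In part~(b), you say you must ``first check that $g(x)$ depends only on the $\sim$-equivalence class of $p$''; in fact this is not needed for the \emph{definition} of $K_{\carrow{x}{p}{y}}$, since one simply takes the union of $g(x)$ over all representatives $p$. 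What does need care is the injectivity check for arrows, and there your reduction is right: if $g(x)=g'(x)$ then for any $s$ with $\beta(s)=x$ one gets $\delta(ps)=(g(x)\cdot k_s,\,y)=(g'(x)\cdot k_s,\,y)=\delta(p's)$ in $H\times H_2$ (the second components agree because both arrows end at $y$), whence $\alpha(ps)=\gamma\delta(ps)=\gamma\delta(p's)=\alpha(p's)$, forcing the two arrows to coincide.
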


\newpage
\section{Proving the Local-Global Theorem}\label{sec:locglob}

\begin{figure}
\begin{subfigure}[b]{0.3\textwidth}
\begin{center}
\begin{tikzpicture}[level/.style={sibling distance=40mm/#1}, scale=0.7,transform shape]
\node [circle,draw] (a) {$h \xleftarrow{v} \left(h' + h''\right)$}
    child {node [circle,draw] (b) {$h' \xleftarrow{h_1}$}
    }
    child {node [circle,draw] (c) {$h'' \xleftarrow{h_2}$}
    };
\end{tikzpicture}
\end{center}
\caption{}\label{fig:diag-basic}
\end{subfigure}
\begin{subfigure}[b]{0.3\textwidth}
\begin{center}
\begin{tikzpicture}[level/.style={sibling distance=40mm/#1}, scale=0.7,transform shape]
\begin{scope}[xshift=-1cm]
\node [circle,draw] (a) {$h' \xleftarrow{v} h$}
    child {node [circle,draw] (b) {$h \xleftarrow{h_1}$}
    };
\end{scope}
\begin{scope}[xshift=1cm]
\node [circle,draw] (c) {$h' \xleftarrow{v} h$}
    child {node [circle,draw] (d) {$h \xleftarrow{h_2}$}
    };
\end{scope}
\end{tikzpicture}
\end{center}
\caption{}\label{fig:diag-sep}
\end{subfigure}
\begin{subfigure}[b]{0.3\textwidth}
\begin{center}
\begin{tikzpicture}[level/.style={sibling distance=40mm/#1}, scale=0.7,transform shape]
\node [circle,draw] (a) {$h' \xleftarrow{v} h$}
    child {node [circle,draw] (b) {$h \xleftarrow{h_1}$}
    }
    child {node [circle,draw] (c) {$h \xleftarrow{h_2}$} };
\end{tikzpicture}
\end{center}
\caption{}\label{fig:diag-together}
\end{subfigure}
\caption{Examples of forest diagrams. In any locally distributive derived forest category, the diagrams in (b) and (c) result in the same value under the map $\operatorname{Val}$.}\label{fig:diagrams}
\end{figure}

We will need the notion of \emph{forest diagrams}:

\begin{defin}
Given a derived forest category $C$, a \emph{forest diagram} is a forest whose leaves are labeled with half-arrows of $C$ and whose internal nodes are labeled with arrows of $C$, subject to the following consistency condition: Let $n$ be a (non-leaf) node labeled with an arrow $\carrow{h_1}{v}{h_2}$, and let $h_3, ..., h_k \in \Obj(C)$ be the endpoints of the arrows or half-arrows that label the root nodes of the children of $n$. Then $h_3 + ... + h_k = h_1$.
\end{defin}

Examples are shown in Figure~\ref{fig:diagrams}.

To any forest diagram one can assign a half-arrow of $C$ by recursively adding up the half-arrows assigned to siblings, and multiplying out the action of arrows on the half-arrows assigned to their children.
Let $\operatorname{Val}$ be the map assigning these values to forest-diagrams.

One could think of forest diagrams as forming a `free forest category' -- in analogy to $\Sigma^\Delta$, and $\operatorname{Val}$ being a `forest category morphism' to $C$, but we will not develop this perspective here.

Assume that $C$ is a finite and locally-distributive derived forest category.
Let $d$ be a forest diagram over $C$.
Let us assume that $d$ contains two sibling nodes labeled with the same arrow $\carrow{h}{v}{h'}$ (see Figure~\ref{fig:diag-sep}).
Let $d_1, d_2$ be the forest diagrams below these two nodes.
The value of the sum of these two sibling nodes is
$$\left(\carrow{h}{v}{h'}\right) \cdot \operatorname{Val}(d_1) + \left(\carrow{h}{v}{h'}\right) \cdot \operatorname{Val}(d_2)$$
Since $C$ is locally distributive, we know that this is equal to
$$\left(\carrow{h}{v}{h'}\right) \cdot \operatorname{Val}(d_1 + d_2)$$
Therefore, if we replace the two sibling nodes by a single node and put $d_1+d_2$ below it (see Figure~\ref{fig:diag-together}), we obtain a new forest diagram which evaluates to the same half-arrow in $C$ as $d$ did.

If we apply this procedure iteratively, we will, after finitely many steps, arrive at a forest diagram $\widehat{d}$ where no two sibling nodes are labeled by the same half-arrow, and for which $\operatorname{Val}(\widehat{d}) = \operatorname{Val}(d)$.
Indeed, taking $\Psi$ from Definition~\ref{def:psi}, the result $\widehat{d}$ is equal to $\Psi(d)$.

Let $\Sigma_C$ be the alphabet consisting of all arrows and half-arrows of $C$.
Then each forest-diagram is a forest in $H_{\Sigma_C}$.

With this view, $\widehat{d}$ and $d$ have the same pathsets: $\pi(\widehat{d}) = \pi(d)$ (Proposition~\ref{prop:psi}.2).
Furthermore, it is not hard to see that for any other forest diagram $d'$ with $\pi(d') = \pi(d)$, we would have arrived at the same modified diagram $\widehat{d}$  (Proposition~\ref{prop:psi}.3).
This implies that $\operatorname{Val}(d)$ is determined by $\pi(d)$. 

Having established this, we can prove the theorem:
\begin{proof}[Proof of the Local-Global Theorem~\ref{thm:loc-glob}]
Assume that $C$ is a finite and locally-distributive derived forest category.

Let $\carrow{}{c}h$, $\carrow{}{c'}{h'}$ be distinct half-arrows in $C$.
By the previous argument, $$\pi(\operatorname{Val}^{-1}(\carrow{}{c}h)) \cap \pi(\operatorname{Val}^{-1}(\carrow{}{c'}{h'})) = \emptyset$$

Certainly, the sum of the two half-arrows has to be distinct from at least one of them:
At least, $\carrow{}{c}h + \carrow{}{c'}{h'} \neq \carrow{}{c}h$ or $\carrow{}{c}h + \carrow{}{c'}{h'} \neq \carrow{}{c'}{h'}$.
We'll assume that the first one holds.
Then, we know
$\pi(\operatorname{Val}^{-1}(\carrow{}{c}h + \carrow{}{c'}{h'})) \cap \pi(\operatorname{Val}^{-1}(\carrow{}{c}h)) = \emptyset$

Recall the operation $\Pi$ from Definition~\ref{def:pathsets}:
For any forest language $\La$, we define $\Pi(\La)\subseteq (\Sigma_C)^*$ as the set of all paths occurring in some forest in $\La$:
$$\Pi(\La) := \bigcup_{f \in \La} \pi(f)$$

Now consider the sets
$\Pi_1 := \Pi(\operatorname{Val}^{-1}(\carrow{}{c}h))$ and 
$\Pi_2 := \Pi(\operatorname{Val}^{-1}(\carrow{}{c'}{h'}))$.
Assume that there is $d_1 \in \operatorname{Val}^{-1}(\carrow{}{c}h)$ such that $\pi(d_1) \subseteq \Pi_2$.
Then we can find $d_3 \in \operatorname{Val}^{-1}(\carrow{}{c'}{h'})$ such that $\pi(d_1) \subseteq \pi(d_3)$.
Observe $\pi(d_1) = \pi(d_3 + d_1)$ and $d_3 + d_1 \in \operatorname{Val}^{-1}(\carrow{}{c}h + \carrow{}{c'}{h'})$.
This means $\pi(d_1) \in \pi(\operatorname{Val}^{-1}(\carrow{}{c}h + \carrow{}{c'}{h'})) \cap \pi(\operatorname{Val}^{-1}(\carrow{}{c}h))$, contradiction.

This means that there is no $d_1 \in \operatorname{Val}^{-1}(\carrow{}{c}h)$ such that $\pi(d_1) \subseteq \Pi_2$.
Said differently, for any $d_1 \in \operatorname{Val}^{-1}(\carrow{}{c}h)$, we have $\pi(d_1) \cap (\Pi_1-\Pi_2) \neq \emptyset$.

This means that, in order to separate forest diagrams evaluating to $\carrow{}{c}h$ from those evaluating to $\carrow{}{c'}{h'}$, it is sufficient to check whether the pathset of the diagram in question contains an element of $\Pi_1-\Pi_2$.

We want to show that $\Pi_1, \Pi_2 \subset \Sigma_C^*$ are regular word languages over the finite alphabet $\Sigma_C$.
Being a finite derived forest category, $C$ is equal to $D_{\phi ,\psi}$ for some forest  algebra morphisms $\phi, \psi$ into finite forest algebras.
Thus, $\operatorname{Val}^{-1}(\carrow{}{c}h)$ and $\operatorname{Val}^{-1}(\carrow{}{c'}{h'})$ are recognized by combining a morphism into the (finite) image of $\phi$ that disregards the start- and end-points of (half-)arrows and otherwise behaves like $\phi$, with a regular language that checks whether the start- and end-points of the node labels are locally consistent.
By Lemma~\ref{prop:pi-reg}, it follows that $\Pi_1$ and $\Pi_2$ are regular word languages over $\Sigma_C$.

Thus, by Proposition~\ref{prop:distr-char}, there is a finite distributive forest algebra $\Ff$ and a morphism $\eta : (\Sigma_C)^\Delta \rightarrow \Ff$ which recognizes the set of forests whose path set intersects $\Pi_1-\Pi_2$.
While we cannot hope that this morphism recognizes the set of such forest diagrams, it certainly separates the class of diagrams whose path sets intersect $\Pi_1-\Pi_2$ from those for which this doesn't hold.

For each pair of half-arrows $\carrow{}{c}h$, $\carrow{}{c'}{h'}$ in $C$, we obtain a finite distributive forest algebra $\Ff_{\carrow{}{c}h, \carrow{}{c'}{h'}}$ and a morphism $\eta_{\carrow{}{c}h, \carrow{}{c'}{h'}}$ in this manner.
We then let $\widehat{\Ff}$ be the direct product of these finitely many algebras, and $\widehat{\eta}$ the direct product of these morphisms.
Certainly, $\widehat{\Ff}$ is finite and distributive.
To construct a division $C \prec \widehat{\Ff}$ (recall Definition~\ref{def:division}), we assign to each half-arrow 
$\carrow{}{c}h$ in $C$ the set of half-arrows in $\widehat{\Ff}$ that are in the image of $\operatorname{Val}^{-1}(\carrow{}{c}h)$ under $\widehat{\eta}$, and similarly for arrows.
One can verify that these assignments preserve operations as required in Definition~\ref{def:division}.
More importantly, the previous arguments guarantee that they satisfy the injectivity condition required in Definition~\ref{def:division}:

For a contradiction, let $\carrow{}{c}h$, $\carrow{}{c'}{h'}$ be distinct half-arrows in $C$ that are mapped to overlapping sets in $H_{\widehat{\Ff}}$:
$g \in K_{\carrow{}{c}h} \cap K_{\carrow{}{c'}{h'}}$
This means that there are forest diagrams $f_1, f_2$ such that
$\operatorname{Val}(f_1) = K_{\carrow{}{c}h}$, $\operatorname{Val}(f_2) = K_{\carrow{}{c'}{h'}}$, but $\widehat{\eta}(f_1) = \widehat{\eta}(f_2) = g$.
In particular, $\eta_{\carrow{}{c}h, \carrow{}{c'}{h'}}(f_1) = \eta_{\carrow{}{c}h, \carrow{}{c'}{h'}}(f_2)$.
By the choice of $\eta_{\carrow{}{c}h, \carrow{}{c'}{h'}}$, this is impossible.
The argument for arrows is similar.

In conclusion, we have shown that the assignment is injective and $C$ divides the finite distributive forest algebra $\widehat{\Ff}$.

\end{proof}

\newpage
\section{Proving the Separation Lemma}\label{sec:sep}

Recall from the beginning that trees and forests are horizontally commutative and idempotent.
Informally, order and multiplicity of children in a tree and of trees in a forest don't matter.
To fix notation, we denote a tree with root symbol $\alpha$ and children set $C = \{t_1, ..., t_n\}$ as $\alpha[C]$.

We are given regular forest languages $\La_1, \La_2$. We want to show that, if the images of $\La_1, \La_2$ under $\pi$ are disjoint, then some language recognized by a wreath product of three finite distributive algebras separates $\La_1, \La_2$.

We will prove a stronger statement to `load the induction hypothesis'.
This stronger statement will be the Main Lemma~\ref{prop:main-lemma}.
In order to state and prove it, a few more notions are needed.

\subsection{Preliminaries}

We will need a notion of \emph{rules} that is analogous to transitions in bottom-up tree automata:

\begin{defin}[Rules]\label{def:algebras}\label{d:p:ls-lr}
Let $\Ff = (H,V)$ be a forest algebra, and $\phi : \Sigma^\Delta \rightarrow (H,V)$ a morphism.

A \emph{rule} is a tuple $$(h_0, \alpha, \{h_1, ..., h_k\}) \in H \times \Sigma \times Pow(H)$$ such that $$h_0 = \phi(\alpha)[h_1 + ... + h_k]$$
The set of rules for $\Ff = (H,V)$ with morphism $\phi$ is called $Rules(\Ff, \phi)$.

 For $\rho \in Rules(\Ff)$, let $\La(\rho)$ be the set of nonempty forests where all top-level rule applications are of rule $\rho = (h_0, \alpha, \{h_1, ..., h_k\})$.
Formally, a nonempty forest $f$ belongs to $\La(\rho)$ iff, for any tree $t \in f$, (1) the root symbol of $t$ is $\alpha$, and (2) if $C$ is the set of trees that are children of the root node of $t$, then $\{\phi(c) :c \in C\} = \{h_1, ... , h_k\}$.

For convenience, we will use the same notation for elements of $H$: $\La(h) := \phi^{-1}(h)$ for $h \in H$.

\end{defin}

\subsection{Main Lemma and Proof of Separation Lemma}

\begin{llemma}[Main Lemma]\label{prop:main-lemma}
Let $\Ff = (H,V)$ be a finite forest algebra, and $\phi : \Sigma^\Delta\rightarrow(H,V)$ a morphism.
There is a map $\Pa$ assigning to each $\rho \in Rules(\Ff)$ a language $\Pa(\rho) \subseteq H_\Sigma$ such that the following four statements hold: 

(A) For any $\rho \in Rules(\Ff,\phi)$, we have $$\La(\rho) \subseteq \Pa(\rho)$$

(B) For all $\rho, \rho' \in Rules(\Ff,\phi)$,
$\Pa(\rho) \cap \Pa(\rho')$ is empty if and only if $\pi\La(\rho) \cap \pi\La(\rho')$ is empty.

(C) For any $\rho_1, ... \rho_k \in Rules(\Ff,\phi)$, the language $$\{f = f_1 + ... + f_k : f_i \in \Pa(\rho_i)\}$$
is recognized by a wreath product of three finite distributive forest algebras.

(D) For any $\rho \in Rules(\Ff,\phi)$ and any forest $f = \{t_1, ..., t_l\}$, where each $t_i$ is a tree, we have $f \in \Pa(\rho)$ if and only if all the  singleton forests $\{t_i\}$ ($i = 1, ..., l$) are in $\Pa(\rho)$.
\end{llemma}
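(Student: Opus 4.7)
The plan is to construct $\Pa(\rho)$ as a PDL approximation of $\La(\rho)$ using three distributive layers that progressively refine pathset information. For each $h \in H$, the set $P_h := \Pi(\La(h)) \subseteq \Sigma^*$ is a regular word language (Proposition~\ref{prop:pi-reg}), and by Proposition~\ref{prop:distr-char} these give rise to a finite distributive forest algebra $\Ff_3$ that labels each subtree $t$ by the set $\{h \in H : \pi(t) \subseteq P_h\}$, which I call the \emph{pathset-type} of $t$. I would then define, for each $\rho = (h_0, \alpha, \{h_1, \ldots, h_k\})$: $f \in \Pa(\rho)$ iff every top-level tree $t = \alpha[C]$ of $f$ has root symbol $\alpha$ and admits a pathset cover $\pi(C) = X_1 \cup \ldots \cup X_k$ with $X_i \in \pi\La(h_i)$.

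Properties (A) and (D) are then immediate from the definition: (D) because membership in $\Pa(\rho)$ is checked per top-level tree; (A) from the canonical partition $C = \bigcup_i C_i$ with $C_i = \{c \in C : \phi(c) = h_i\}$, giving $\pi(C_i) \in \pi\La(h_i)$. For property (B), the direction from pathset-overlap to $\Pa$-overlap proceeds by taking $f \in \La(\rho)$ with $\pi(f) = \pi(f')$ for some $f' \in \La(\rho')$: both rules must share the same root symbol (else pathsets at depth one differ), and the shared pathset below the root decomposes into a cover witnessing $\rho'$'s condition, showing $f \in \Pa(\rho) \cap \Pa(\rho')$. The reverse direction requires, from $f \in \Pa(\rho) \cap \Pa(\rho')$, extracting witnesses $\xi_i \in \La(h_i)$ and $\eta_j \in \La(h_j')$ whose pathset unions both equal $\pi(C)$, and then assembling from these a pair of forests in $\La(\rho)$ and $\La(\rho')$ with the same pathset, crucially using horizontal idempotency to merge arbitrarily many algebra-typed forests without altering algebra type.

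For property (C), I would construct a wreath product $\Ff_1 \wr \Ff_2 \wr \Ff_3$ recognizing $\sum_i \Pa(\rho_i)$: $\Ff_3$ labels each node with its pathset-type; a middle distributive algebra $\Ff_2$ records, for each node, which combinations of pathset-types occur among its children, which is sufficient to decide the cover condition for each $\rho_i$; and $\Ff_1$ verifies at each top-level tree whether the root symbol and observed child-combination match some $\rho_i$. Each layer's conditions reduce to Boolean combinations of pathset-existence queries, so each factor is a finite distributive algebra by Proposition~\ref{prop:distr-char}. The main obstacle is the reverse direction of property (B): lifting an abstract pathset-cover into concrete algebra-typed forests realizing a common element of $\pi\La(\rho) \cap \pi\La(\rho')$. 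A forest of type $h_i$ with a prescribed pathset need not be a single tree, yet $\La(\rho)$ imposes strict per-tree children-type constraints on its elements; handling this will likely involve forming large sums of trees of the form $\alpha[\xi_1 + \ldots + \xi_k]$ and invoking horizontal idempotency and commutativity to argue that the result lies in $\La(\rho)$ and carries the target pathset.
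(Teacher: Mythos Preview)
Your construction has a fundamental gap at property~(C): the language $\Pa(\rho)$ you define is in general \emph{not regular}, so no finite wreath product can recognize it.

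Unwinding your cover condition, a top-level tree $\alpha[C]$ lies in $\Pa(\rho)$ iff $\pi(C)\in\pi\bigl(\La(h_1)+\cdots+\La(h_k)\bigr)$, i.e.\ iff $C\in\pi^{-1}\pi\bigl(\La(h_1)+\cdots+\La(h_k)\bigr)$. But $\pi^{-1}\pi(\La)$ need not be regular for regular $\La$: any language of the form $\pi^{-1}(S)$ is $\pi$-saturated, so if it were regular its syntactic algebra would be a finite distributive algebra; applied to $\La_1$ from Example~\ref{ex:langs} this would yield a finite distributive separator of $\La_1$ and $\La_2$, which that example shows does not exist. Hence your $\Pa(\rho)$ can fail to be a regular forest language at all, and (C) cannot hold. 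Relatedly, your proposed three-layer recognizer does not compute your $\Pa(\rho)$: the ``pathset-type'' $\{h:\pi(t)\subseteq\Pi\La(h)\}$ only records which \emph{path collections} $\Pi\La(h)$ contain all paths of $t$, which is strictly weaker than $\pi(t)\in\pi\La(h)$; knowing the children's pathset-types does not let you decide the cover condition.

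The paper's proof avoids this trap precisely by \emph{not} trying to match $\pi\La(h_i)$ exactly. It first extracts a height bound $N$ via a compactness argument (Lemma~\ref{lemma:absorption}), then defines genuinely finite-distributive approximators $\Delta(\rho)$ that are exact only up to height $N$ and otherwise only test intersection with finitely many regular word languages of the form $\Pi\La(s)\setminus\Pi\langle S_1,S_2\rangle$. The map $\Pa$ is then defined recursively via trails and traces over these $\Delta(\rho)$'s, not just at the top level. With this weaker (but regular) definition, (A), (C), (D) become routine, while (B) is the hard part and is proved by an induction on height: the base case $n\le N$ uses a witness-construction lemma (Lemma~\ref{lemma:existsWitnessTree}), and the inductive step $n>N$ runs an iterative shrinking argument on auxiliary sets $W^1_k,W^2_k$ of rules, using Lemma~\ref{lemma:absorption} at the crucial point. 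Your one-shot top-level definition bypasses all of this machinery, but that machinery is there because the naive definition is non-regular.
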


Each $\Pa(\rho)$ can be seen as a PDL approximator of the language $\La(\rho)$.
While (A) guarantees that these approximators are sufficiently `large', (B) guarantees that they suffice to construct a PDL separator whenever $\La(\rho)$ and $\La(\rho')$ are separable by $\pi$ at all.
We'll prove the Main Lemma below in Appendix~\ref{sec:main-lemma}. Using the Main Lemma, we prove the Separation Lemma:

\begin{proof}[Proof of the Separation Lemma~\ref{cor:separation}]
Since $\La_1, \La_2$ are regular, there is a finite forest algebra $\Ff = (H,V)$ which recognizes both $\La_1$ and $\La_2$ via a morphism $\phi : \Sigma^\Delta \rightarrow\Ff$.

Any forest $f$ in $H_\Sigma$ can be uniquely written as a sum of forests from languages $\La(\rho)$, with $\rho \in Rules(\Ff,\phi)$.
If, for each forest $f$, we use $R_f$ to denote the collection of these $\rho$'s, we can assign to each $h \in H$ a set
$$Q_h := \{ R_f : f \in \phi^{-1}(h)\}$$
We can assume that $\phi$ is onto, so this set is always nonempty.
Each element of this set is a subset of the finite set $Rules(\Ff,\phi)$.
Thus, $Q_h$ itself is finite.
Given the choice of $Q_h$, we can write
$$\phi^{-1}(h) := \bigcup_{q \in Q_h} \sum_{\rho\in q} \La(\rho)$$
Using the map $\Pa$ from the Main Lemma, we then define $$\Pa_h := \bigcup_{q \in Q_h} \sum_{\rho\in q} \Pa(\rho)$$
By condition (C) of the Main Lemma, each $\sum_{\rho\in q} \Pa(\rho)$ is recognized by a wreath product of three finite distributive algebras.
Since $Q_h$ is finite, $\Pa_h$ is also recognized by a wreath product of three finite distributive algebras.

Furthermore, let $\Gg$ be a wreath product of three finite distributive algebras that recognizes \emph{all} languages $\Pa_h$ via a morphism $\phi_\Gg$.
Again, this is possible since $H$ is finite.

Let $h_1, h_2 \in H$.
Assume $\pi(\phi^{-1}(h_1)) \cap \pi(\phi^{-1}(h_2)) = \emptyset$.
Then for all $q \in Q_{h_1}, q' \in Q_{h_2}$, we have
$$\pi\left(\sum_{\rho \in q} \La(\rho)\right) \cap \pi\left(\sum_{\rho \in q'} \La(\rho)\right) = \emptyset$$
Hence, there is $\rho_0 \in q$ such that $\pi(\La(\rho_0)) \cap \pi(\La(\rho')) = \emptyset$ for all $\rho' \in q'$, or the same with $q, q'$ reversed.
\begin{leftbar}
\begin{proof}[Proof of this]
Assume for each $\rho \in q$, there is $\rho' \in q'$ such that $\pi(\La(\rho)) \cap \pi(\La(\rho')) \neq \emptyset$, and the same with $q,q'$ reversed.
First, for $\rho \in q$, let $\phi_\rho \in \pi(\La(\rho)) \cap \pi(\La(\rho'))$.
Then, for $\rho \in q'$, let $\phi'_\rho \in \pi(\La(\rho)) \cap \pi(\La(\rho'))$.
Then define $\widehat\phi := \sum_{\rho\in q} \phi_s + \sum_{\rho \in q'} \phi'_\rho$.
By definition,
$\widehat\phi \in \pi\left(\sum_{\rho \in q} \La(\rho)\right) \cap \pi\left(\sum_{\rho \in q'} \La(\rho)\right)$.
\end{proof}
\end{leftbar}
From condition (B) of the Main Lemma, we can deduce $\Pa(\rho_0) \cap \Pa(\rho') = \emptyset$ for all $\rho' \in q'$ (or $q$, depending on where $\rho_0$ came from). 

From this, we want to deduce
$$\left(\sum_{\rho \in q} \Pa_\rho\right) \cap \left(\sum_{\rho \in q'} \Pa_\rho\right) = \emptyset$$
This follows from condition (D) of the Main Lemma: If this intersection were nonempty and contained a forest $f$, this forest would contain a tree $t$ such that $\{t\} \in \Pa(\rho_0)$ and $\{t\} \in \Pa(\rho')$ for some $\rho' \in q'$ -- a contradiction.

Since $q, q'$ were chosen arbitrarily,
$$\left(\bigcup_{q \in Q_{h_1}} \sum_{\rho \in q} \Pa_\rho\right) \cap \left(\bigcup_{q \in Q_{h_2}} \sum_{\rho \in q} \Pa_\rho\right) = \emptyset$$
which means that $\Pa_{h_1} \cap \Pa_{h_2} = \emptyset$.

Now consider
$$\Pa_1 := \bigcup_{h \in \phi(\La_1)} \Pa_h$$
$$\Pa_2 := \bigcup_{h \in \phi(\La_2)} \Pa_h$$
These languages are recognized by $\Gg$ with morphism $\phi_\Gg$.
Now given $\pi\La_1 \cap \pi\La_2 = \emptyset$, we have $\pi\phi^{-1}(h) \cap \pi\phi^{-1}(h')$ for all $h \in \phi(\La_1)$, $h' \in \phi(\La_2)$.
By our previous reasoning, we get $\Pa_h \cap \Pa_{h'} = \emptyset$ for any such $h, h'$.
Putting this together, we get $\Pa_1 \cap \Pa_2 = \emptyset$, and thus $\Pa_1 \subseteq \left(H_\Sigma - \Pa_2\right)$.

Since $\Ff$ recognizes $\La_1, \La_2$ via the morphism $\phi$, we can write
$$\La_1 = \bigcup_{h \in \phi(\La_1)} \phi^{-1}(h)$$
$$\La_2 = \bigcup_{h \in \phi(\La_2)} \phi^{-1}(h)$$
By condition (A) of the Main Lemma, $\phi^{-1}(h) \subseteq \Pa_h$ holds for any $h \in H$.
Given the definition of $\Pa_1, \Pa_2$, we can concude $\La_1 \subseteq \Pa_1$ and $\La_2 \subseteq \Pa_2$.

Putting these results together, we find that \[\La_1 \subseteq \Pa_1 \subseteq \left(H_\Sigma - \Pa_2\right) \subseteq \left(H_\Sigma - \La_2\right)\]
We can thus take $X := \Pa_1$, completing the proof of the Separation Lemma~\ref{cor:separation}.
\end{proof}

\newpage
\section{Proving the Main Lemma}\label{sec:main-lemma}

The goal in this section is to prove the Main Lemma~\ref{prop:main-lemma}.

For the proof, we will utilize the following bit of notation.
Recall the set $\freedisth$ from Definition~\ref{def:pathsets}.
\begin{defin}
Let $\Ff = (H,V)$.
Let $R, S \subseteq Rules(\Ff,\phi)$ or $R, S \subseteq H$.
Then $\left\langle R, S\right\rangle$ is a subset of $\freedisth$ defined as follows:

$d \in \left\langle R, S\right\rangle$ if and only if there are forests $f_1, f_2$ such that $\pi(f_1) = \pi(f_2) = d$, and there are 
$r_1, ..., r_n \in R$, $s_1, ..., s_k \in S$ ($n, k > 0$) and
$$f_1 \in \La(r_1) + ... + \La(r_n)$$
$$f_2 \in \La(s_1) + ... + \La(s_k)$$
\end{defin}
Recall the operation $\Pi$ mapping forest languages to subsets of $\Sigma^*$.
We will also apply it to subsets of $\freedisth$: For $X \subseteq \freedisth$, set $\Pi(X) := \bigcup_{x \in X} x$.

\begin{defin}\label{def:height}
The \emph{height} of a forest is the length of the longest paths. Formally, we define $\height(\alpha[C]) := 1 + \max_{t \in C} \height(t)$, with $\max \emptyset := 0$.
If $f$ is a forest, we set $\height(f) := \max_{t \in f} \height(t)$, with $\max\emptyset := 0$.

We will frequently use:
$$Z_n(\La) := \{f \in \La: \height f \leq n\}$$

Since the height of a forest only depends on its pathset and the elements of $\freedisth$ are finite sets, we can write $Z_n X$ even when $X \subseteq \freedisth$.

\end{defin}

The goal is to prove the Main Lemma \ref{prop:main-lemma}. We prove the following stronger version.
Recall the operation $\Pi$ from Definition~\ref{def:pathsets}, and $Z_n\La$ from Definition~\ref{def:height}.

\begin{llemma}[Stronger Version of Main Lemma]\label{prop:main-lemma-strong}
Let $\Ff = (H,V)$ be a finite forest algebra, and $\phi : \Sigma^\Delta\rightarrow(H,V)$ a morphism.
There is a map $\Pa$ assigning to each $R \subseteq Rules(\Ff)$ a language $\Pa(R) \subseteq H_\Sigma$ such that the following four statements hold: 

(A) For any $R \subseteq Rules(\Ff,\phi)$, and for any $\rho_1, ..., \rho_k \in R$ ($r \geq 1$), we have $$\La(\rho_1) + ... + \La(\rho_k) \subseteq \Pa(R)$$

That is, $\Pa(R)$ extends the languages obtained by adding forests that evaluate to rules in $R$.

(B) The following relationship between $\Pa(\cdot)$ and the construct $\left\langle\cdot,\cdot\right\rangle$ holds
for all $R, S \subseteq Rules(\Ff,\phi)$, and $n \in \N$: 
$$\Pi \left(\Rn\Pa(R) \cap \Rn\Pa(S)\right) =  \Pi \Rn \left\langle R, S \right\rangle$$
In words: The following two operations yield the same sets of paths:

\begin{enumerate}
\item Intersect the languages $\Pa(R)$, $\Pa(S)$, restricted to forests of height $\leq n$. Then compute the set of paths occurring in the resulting forest language. 
\item Compute the set $\left\langle R, S\right\rangle \subseteq \freedisth$, and restrict to pathsets of height $\leq n$. Then compute the set of paths, that is, the union over the resulting subset of $\freedisth$.
\end{enumerate}

(C) For any $\rho_1, ... \rho_k \in Rules(\Ff,\phi)$, the language $$\{f = f_1 + ... + f_k : f_i \in \Pa(\{\rho_i\})\}$$
is recognized by a wreath product of three finite distributive forest algebras.

(D) For any $R \subseteq Rules(\Ff,\phi)$ and any forest $f = \{t_1, ..., t_l\}$ (each $t_i$ being a tree), we have $f \in \Pa(R)$ if and only if all the singleton forests $\{t_i\}$ ($i = 1, ..., l$) are in $\Pa(R)$.
\end{llemma}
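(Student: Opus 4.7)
The plan is to define $\Pa(R)$ tree-locally. For $R \subseteq Rules(\Ff, \phi)$, set $f \in \Pa(R)$ iff for every tree $t$ of $f$, there exist a nonempty $R' \subseteq R$ and a forest $g \in \sum_{\rho \in R'}\La(\rho)$ with $\pi(g) = \pi(\{t\})$. Under this definition, (D) holds by construction, and (A) is immediate: if $f \in \La(\rho_1)+\cdots+\La(\rho_k)$ with each $\rho_i \in R$, then each tree $t$ of $f$ lies in some $\La(\rho_i)$ and one may take $g = \{t\}$, $R' = \{\rho_i\}$.

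For (B), the inclusion $\Pi(Z_n\Pa(R) \cap Z_n\Pa(S)) \subseteq \Pi Z_n\left\langle R, S\right\rangle$ follows directly from tree-locality: each tree's pathset in such a forest lies in $\left\langle R, S\right\rangle$ and has height at most $n$, so the full path set is a union of such pathsets. For the reverse direction, given $p \in d \in Z_n\left\langle R, S\right\rangle$ with realizing forests $f_R \in \sum_{\rho \in R'}\La(\rho)$ and $f_S \in \sum_{\rho \in S'}\La(\rho)$ of pathset $d$, I restrict to the $\alpha$-rooted subforests, where $\alpha$ is the first letter of $p$. Both subforests have the same pathset $d_\alpha \subseteq d$, which contains $p$ and is realized by rules of root $\alpha$ drawn from $R$ and $S$ respectively. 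Applying the normalization $\Psi$ of Definition~\ref{def:psi} to any forest of pathset $d_\alpha$ yields a canonical single tree $t^*$ with $\pi(\{t^*\}) = d_\alpha$; this tree lies in $\Pa(R) \cap \Pa(S)$, has height $\leq \height(d) \leq n$, and contains $p$.

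For (C), I show that $\{f_1+\cdots+f_k : f_i \in \Pa(\{\rho_i\})\}$ is recognized by a wreath product of three finite distributive algebras. By tree-locality, a forest belongs to this language precisely when each of its trees satisfies some $\Pa_T(\{\rho_i\})$-condition and each $\rho_i$ is represented (the latter vacuous when $\Pa(\{\rho_i\})$ also contains the empty forest). The three distributive factors are built in succession: the innermost factor computes bottom-up a finite summary of the pathsets beneath each node (sufficient to track which value-classes are achievable below), the middle factor uses this summary to verify the pathset-matching criterion at each node's children (partitioning the children's pathsets into pieces realizable by the targeted children-values of a candidate rule), and the outermost factor performs the top-level representation check at the root. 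Each layer corresponds, via Proposition~\ref{prop:distr-char}, to a finite Boolean combination of path-regular conditions, so is distributive.

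The central obstacle is (C). The pathset-matching condition defining $\Pa_T(\{\rho\})$ asks whether a tree's children-pathset admits a decomposition into pathsets each realizable by one of the rule's required children-values, and the space of realizable pathsets per value is generally infinite (as in Example~\ref{ex:langs}). The crux is that this decomposability can nonetheless be captured through three successive finite distributive layers, even though no single such layer suffices; this matches the philosophy that wreath product depth corresponds to iterated distributive laws, and is why three---not fewer---factors are needed. A subsidiary technical point in the reverse direction of (B) is ensuring the canonical witness $t^*$ respects the height bound, which follows from Proposition~\ref{prop:psi} since $\Psi$ preserves pathsets and hence heights.
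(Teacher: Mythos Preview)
Your definition of $\Pa(R)$ is the natural one: tree-locally, a tree $t$ is in $\Pa(R)$ iff $\pi(\{t\})$ lies in $\pi(\sum_{\rho\in R'}\La(\rho))$ for some nonempty $R'\subseteq R$. With this choice, (A), (B), and (D) are indeed easy, and your arguments for them are correct. But you have simply relocated the entire difficulty of the lemma into (C), and your argument for (C) is not a proof.

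The problematic step is the claim that ``the innermost factor computes bottom-up a finite summary of the pathsets beneath each node, sufficient to track which value-classes are achievable below.'' This is exactly what fails in general. For your $\Pa(\{\rho\})$ to be recognized at a bounded wreath-product level, you would need a finite distributive algebra that, from the pathset of a subforest, determines (or at least suitably constrains) the set $\{h\in H:\exists f,\ \phi(f)=h,\ \pi(f)=d\}$. The paper's Example~\ref{ex:langs} shows precisely that no finite distributive algebra can do this: $\La_1$ and $\La_2$ have disjoint $\pi$-images but cannot be separated by any finite distributive algebra. So the ``finite summary'' you posit does not exist at the first layer, and you give no argument for why two more layers repair this. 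Your appeal to Proposition~\ref{prop:distr-char} at the end is circular: you have not exhibited the path-regular conditions.

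The paper's construction is organized in the opposite way. It defines $\Pa(R)$ via a trail/trace condition using carefully engineered distributive approximators $\Delta(\rho)$ (Definition~\ref{defin:approximators}), so that (C) holds essentially by construction (two distributive layers for $\Pa(R)$, one more for the sum). The price is that (B) becomes hard: the equality $\Pi(Z_n\Pa(R)\cap Z_n\Pa(S))=\Pi Z_n\langle R,S\rangle$ is no longer automatic, and the paper devotes several sections (Lemma~\ref{lemma:absorption}, Lemma~\ref{lemma:existsWitnessTree}, Lemma~\ref{main-lemma-final}) to an induction on height, with the constant $N$ from Lemma~\ref{lemma:absorption} controlling when the coarse approximation $\Delta(\rho)$ becomes adequate. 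Your proposal gives no analogue of this work; it assumes it away.
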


To obtain Lemma~\ref{prop:main-lemma}, we take $\Pa(\rho)$ from that version to be the map $\Pa$ given here applied to the singleton $\{\rho\}$. 
Conditions (A), (C), and (D) immediately follow.
For condition (B), note that $\Pa(\{\rho\}) \cap \Pa(\{\rho'\})$ is empty if and only if $\Rn\Pa(\{\rho\}) \cap \Rn\Pa(\{\rho'\})$ is empty for all $n$.
Now, since even empty forests have a nonempty pathset (consisting of the empty path), $\Rn\Pa(\{\rho\}) \cap \Rn\Pa(\{\rho'\})$ is empty if and only if $\Pi(\Rn\Pa(\{\rho\}) \cap \Rn\Pa(\{\rho'\}))$ is.
Now, by (B) from Lemma~\ref{prop:main-lemma-strong}, this is empty if and only if $\Pi \Rn \left\langle\{\rho\}, \{\rho'\}\right\rangle$ is empty.
By definition of $\langle\cdot,\cdot\rangle$, the term $\left\langle\{\rho\}, \{\rho'\}\right\rangle$ is actually equal to $\pi\La(\rho) \cap \pi\La(\rho')$.
Reversing the previous arguments from this paragraph, $\Pi \Rn (\pi\La(\rho) \cap \pi\La(\rho'))$ is empty for all $n$ if and only if $\pi\La(\rho) \cap \pi\La(\rho')$ is empty.
We have proven Condition (B) from Lemma~\ref{prop:main-lemma}.
This shows that Lemma~\ref{prop:main-lemma} follows once Lemma~\ref{prop:main-lemma-strong} is proven.

We first construct the map $\Pa$ and show (A), (C), and (D).
Proving (B) will be a bigger task.
We split this up into three sections: (1) the right-to-left inclusion, (2) the left-to-right inclusion in the case of small heights $n \leq N$, (3) the left-to-right inclusion on the case of large heights $n > N$.

\subsection{Distributive Approximators}

We first define a family of languages recognized by finite distributive forest algebras, acting as the right-most factors of the wreath products we want to build.
For each rule $\rho$, the language $\Delta(\rho)$ will be a distributive approximation of $\La(\rho)$, and should at least contain the trees belonging to $\La(\rho)$.
Intuitively, $\Delta(\rho)$ should be  the smallest language extending $\La(\rho)$ that is recognized by a finite distributive forest algebra.
Such a language would have to be a superset of $\pi^{-1}\pi\La(\rho)$.
However, such a smallest language does not in general exist.
For any fixed height limit $N$, we can find a finite distributive forest algebra which agrees with $\pi^{-1}\pi(\La(\rho))$ for forests of height up to $N$, but in general, no finite forest algebra can recognize $\pi^{-1}\pi(\La(\rho))$.
Our strategy will be to specify a height limit $N$ up to which $\Delta(\rho)$ will (in a certain sense) agree with $\pi^{-1}\pi(\La(\rho))$.
This height limit is specified by the following lemma:

\begin{llemma}[Constant]\label{lemma:absorption}

There is $N \in \N$ such that, for any choice of $s \in H_\Ff$, $S_1, S_2 \subseteq Rules(\Ff)$, 
and $n \geq N$, at least \textbf{one} of the following two statements holds: 

\begin{enumerate}
\item There is $f \in \La(s)$ such that $$\pi(f) \subseteq \Pi Z_n \left\langle S_1,S_2\right\rangle$$
\item For each $f \in \La(s)$, we have $$\pi(f) \cap (\Pi(\La(s)) - \Pi(\left\langle S_1,S_2\right\rangle)) \neq \emptyset$$
\end{enumerate}
\end{llemma}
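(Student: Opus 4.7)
My plan is a finite case analysis that exploits, since $\Ff$ is finite, the fact that there are only finitely many triples $(s, S_1, S_2)$ with $s \in H$ and $S_1, S_2 \subseteq Rules(\Ff)$. For each such triple I will define a finite $N_{(s, S_1, S_2)} \in \N$ such that the dichotomy holds for all $n \geq N_{(s, S_1, S_2)}$; the desired universal $N$ is then the maximum of these finitely many values.

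Fixing a triple $(s, S_1, S_2)$, I first observe that statement (2) does not depend on $n$, while statement (1) is monotone: the inclusion $Z_{n_0} \subseteq Z_n$ for $n \geq n_0$ gives $\Pi Z_{n_0} \left\langle S_1, S_2 \right\rangle \subseteq \Pi Z_n \left\langle S_1, S_2 \right\rangle$, so once (1) holds for $n_0$ it holds for every larger $n$. If (2) already holds, I will set $N_{(s, S_1, S_2)} := 0$. Otherwise (2) fails, which is equivalent to the existence of some $f_0 \in \La(s)$ with $\pi(f_0) \subseteq \Pi \left\langle S_1, S_2 \right\rangle$, because $\pi(f_0) \subseteq \Pi \La(s)$ automatically whenever $f_0 \in \La(s)$.

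Given such a witness $f_0$, each of its finitely many paths $p \in \pi(f_0)$ lies in some $d \in \left\langle S_1, S_2 \right\rangle$, so $n_p := \min \{\height(d) : d \in \left\langle S_1, S_2 \right\rangle,\, p \in d\}$ is a finite natural number by well-ordering of $\N$. Then $n_{f_0} := \max_{p \in \pi(f_0)} n_p$ is finite since $\pi(f_0)$ is finite, and I will set $N_{(s, S_1, S_2)} := \min_{f_0} n_{f_0}$, which is finite because the set of witnesses is nonempty and $\N$ is well-ordered. For any $n \geq N_{(s, S_1, S_2)}$, the minimizing $f_0$ satisfies $\pi(f_0) \subseteq \Pi Z_{n_{f_0}} \left\langle S_1, S_2 \right\rangle \subseteq \Pi Z_n \left\langle S_1, S_2 \right\rangle$, establishing (1) with $f := f_0$.

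Taking $N := \max_{(s, S_1, S_2)} N_{(s, S_1, S_2)}$ over the finitely many triples will complete the argument. No pumping is required for existence; the proof is essentially bookkeeping that relies on the finiteness of $H$ and $Rules(\Ff)$, the monotonicity of (1), the $n$-independence of (2), and the fact that any witness $f_0$ has a finite path set. The only real difficulty, not needed for mere existence, would be to replace the abstract minimization over witnesses with an explicit pumping argument on pathsets in $\left\langle S_1, S_2 \right\rangle$ so as to yield a quantitative bound on $N$ in terms of the sizes of $H$, $V$, and $\Sigma$; I expect this to be the main obstacle if one wishes to strengthen the lemma to a constructive bound rather than a pure existence result.
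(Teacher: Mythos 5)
Your proof is correct and follows essentially the same route as the paper's: show that failure of (2) yields a witness $f_0$ with $\pi(f_0)\subseteq\Pi\left\langle S_1,S_2\right\rangle$, extract from it a finite height bound, and take the maximum over the finitely many triples $(s,S_1,S_2)$. The only cosmetic difference is that the paper uses horizontal idempotency to combine the covering pathsets into a single $g\in\left\langle S_1,S_2\right\rangle$ with $\pi(f_0)\subseteq\pi(g)$, whereas you cover each path separately by a minimal-height element, which works equally well since $\Pi Z_n$ is a union.
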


\begin{proof}
We show that violation of (2) entails (1).

Fix $s, S_1, S_2$.
Assume $f \in \La(s)$ and $\pi(f) \cap (\Pi(\La(s)) - \Pi(\left\langle S_1,S_2\right\rangle)) = \emptyset$.
Since $\pi(f) \subseteq \Pi(\La(s))$, we also get $\pi(f) \subseteq \Pi(\left\langle S_1,S_2\right\rangle)$.
Due to horizontal imdepotency/commutativity, there is a forest $g \in \left\langle S_1,S_2\right\rangle$ such that $\pi(f) \subseteq \pi(g)$.
Let $N_{s,S_1,S_2} := \height(g)$.
Thus,  $(\dagger)$ $\pi(f) \subseteq \Pi Z_{N_{s,S_1,S_2}} \left\langle S_1,S_2\right\rangle$.

Since $H_\Ff$ is finite, we can take an integer $N := \max_{s,S_1,S_2} N_{s,S_1,S_2}$.
Furthermore, $Z_{N_{s,S_1,S_2}} \left\langle S_1,S_2\right\rangle \subseteq Z_n \La_{S_1,S_2}$ for any $s, S_1, S_2$.
In view of $(\dagger)$, if (2) is violated, then (1) holds.
\end{proof}


To define $\Delta$, we collect some properties expressible by distributive algebras that represent `minimal requirements' that any tree in $\La(\rho)$ would certainly fulfil.

\begin{defin}\label{defin:approximators}
\emph{(Distributive Approximators)} For $\rho \in Rules(A)$, define a set of trees $\Delta(\rho)$ as follows:

Let $t$ be a tree, with root symbol $\alpha$. Then $t \in \Delta(\rho)$ iff these conditions hold:

\begin{itemize}
\item The root symbol $\alpha$ is the symbol of the rule $\rho$. That is, $\rho$ has the form $\dots \leftarrow \alpha[\dots]$.

\item If $t$ has height $\leq N$, then there is a tree $t'$ of height $\leq N$ such that $\{t'\} \in \La(\rho)$ and $\pi t' \subseteq \pi t$

\item Take any $s \in H_\Ff$, $S_1, S_2 \subseteq Rules(\Ff)$, and assume that $$\pi(g) \cap \left(\alpha \circ \left(\Pi(\La(s)) -\Pi(\left\langle S_1, S_2\right\rangle)\right)\right)  \neq \emptyset$$ for any $g \in \La(\rho)$, where $\circ$ denotes concatenation: $\alpha \circ A := \{\alpha w : w \in A\}$.
Then, we have $$\pi(t) \cap \left(\alpha \circ \left(\Pi(\La(s)) -\Pi(\left\langle S_1, S_2\right\rangle)\right)\right)  \neq \emptyset$$

\end{itemize}
\end{defin}

At this point, let's point out the formal similarity between the third condition in this definition to the second condition in Lemma~\ref{lemma:absorption}.
We can think of the third condition as a coarse distributive approximation to $\La(\rho)$ that is as fine-grained as allowed by sets of the form $\Pi(\left\langle S_1, S_2\right\rangle)$. We will later use the definition of $\Pi(\left\langle S_1, S_2\right\rangle) $ and Lemma \ref{lemma:absorption} to see that this requirement, while quite weak, is useful at heights $> N$.

\subsection{Defining Approximators $\Pa$}

We now define the map $\Pa$. The idea is that languages $\Pa(R)$ approximate each $\pi\La(\rho)$ ($\rho \in Rules(\Ff_i)$) exactly up to depth $N$ given in Lemma \ref{lemma:absorption}, and up to the granularity provided by sets of the form $\Pi\left\langle R,S\right\rangle$ at greater depths.

\begin{defin}[Trails, Traces]
\emph{(Trails):}
Let $f$ a forest. A \emph{trail} in $f$ is a sequence of nodes starting at the root of one member tree, continuing towards the leaf until it ends. Trails do not have to be maximal.

\emph{(Traces):} Let $\Trace_\Ff \in Rules(\Ff,\phi)^*$ as follows: 
$\tau \in \Trace_\Ff$ iff, for all $1 \leq i < i+1 \leq |\tau|$, there is $s_i \in H_\Ff$ such that
$$\tau_i\ \text{has the form }\ \dots \leftarrow \dots \{\dots, s_i, \dots\}$$
$$\tau_{i+1}\ \text{ has the form }\ s_i \leftarrow \dots \{\dots\}$$

Given a path $w \in \Sigma^*$, we set $\Trace_\Ff(p)$ to be the set of traces $\tau$ where $|w| = |\tau|$ and the transition symbol of rule $\tau_i$ is equal to the symbol $w_i$, for all $i$.
\end{defin}

We're ready to define the approximators $\Pa$:

\begin{defin}[Approximators]\label{def:p}
Let $\rho \in Rules(\Ff)$. 
We say that a trail $p$ in a forest \emph{satisfies the conditions} for a rule $\rho$ if there is a trace $\zeta \in \Trace_{\Ff_i}$ such that (1) $|\zeta| = |p|$, (2) $\zeta_0 = \rho$, and (3) for each $j \in \{1, ..., |p|\}$, having the $j$-th node in $p$ as its root is in $\Delta(\zeta_j)$.

Now let $R \subseteq Rules(\Ff)$. Then $\Pa(R) \subseteq H_\Sigma$ is the language of nonempty forests where every trail satisfies the conditions for some $\rho \in R$. 

\end{defin}

Let us first note that Condition (D) of the Main Lemma is an instant consequence: As $\Pa$ is defined in terms of trails, it is enough to check membership of each singleton forest.

\subsection{Proving Condition (C) in the Main Lemma}
In order to prove Condition (C), which states recognition by wreath products of finite distributive algebras, we will need the operation $\Psi$ from Definition~\ref{def:psi}.
Using this construction, we can first establish the following result for $\Pa(R)$:

\begin{prop}
For each $R$, the language $\Pa(R)$ is recognized by a wreath product of two distributive finite forest algebras.
\end{prop}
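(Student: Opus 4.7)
The plan is to realize $\Pa(R)$ as a wreath-product language in the natural way: the right-hand factor $\Ff_2$ will classify each subtree according to which approximators $\Delta(\rho)$ contain it, and the left-hand factor $\Ff_1$ will check, on the resulting augmented forest, that every trail's sequence of annotated labels conforms to some valid trace starting at a rule in $R$.

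The key preliminary is that each tree language $\Delta(\rho)$ is itself recognized by a finite distributive forest algebra. I plan to argue this by inspecting the three clauses of Definition~\ref{defin:approximators}: the first is a local root-symbol condition; the third is literally a path-intersection requirement for each of finitely many tuples $(s, S_1, S_2)$; and the second---the subtle one---becomes a finite Boolean combination via the following observation: only finitely many pathsets of height $\leq N$ exist, so the existential quantifier over witness trees $t'$ ranges effectively over a finite set, and each clause ``$\pi(t') \subseteq \pi(t)$'' reduces to a finite conjunction of atoms of the form ``$w \in \pi(t)$''. Applying Proposition~\ref{prop:distr-char} and taking the direct product over the finite set $Rules(\Ff, \phi)$ then yields a single finite distributive algebra $\Ff_2$ together with a morphism $\phi_2 : \Sigma^\Delta \to \Ff_2$ that simultaneously recognizes every $\Delta(\rho)$.

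With $\Ff_2$ in hand, set $\Sigma' := \Sigma \times Pow(Rules(\Ff, \phi))$ and consider the relabeling that tags each node $n$ with the pair $(\alpha(n), \{\rho : \text{subtree at } n \in \Delta(\rho)\})$; this class can be read off locally from the $\Ff_2$-value of the subtree. By Definition~\ref{def:p}, a trail satisfies the conditions for some $\rho \in R$ iff the sequence of augmented labels along it belongs to a regular word language $L_R \subseteq (\Sigma')^*$---namely, the set of words $w_1 \cdots w_m$ for which some trace $\zeta_0 \in R, \zeta_1, \ldots, \zeta_m$ has each $\zeta_j$ contained in the class-component of $w_j$. Thus $f \in \Pa(R)$ iff no path of the relabeled forest lies in $(\Sigma')^* \setminus L_R$, which by a second application of Proposition~\ref{prop:distr-char} is recognized by some finite distributive forest algebra $\Ff_1$ via a morphism $\phi_1 : (\Sigma')^\Delta \to \Ff_1$. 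A standard wreath-product construction then assembles $\phi_1$ and $\phi_2$ into a morphism $\Sigma^\Delta \to \Ff_1 \wr \Ff_2$ whose second component is $\phi_2$ and whose first component at each node invokes $\phi_1$ on the $\Sigma'$-label produced by $\phi_2$; the preimage of the accepting set under this morphism is exactly $\Pa(R)$.

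The main obstacle will be the translation, in the preliminary step, of clause~(2) of Definition~\ref{defin:approximators} into a Boolean combination of path-intersection atoms. Once that finite-height witnessing is handled---through the observation that only finitely many pathsets arise below height $N$---the construction of $\Ff_2$ follows immediately from Proposition~\ref{prop:distr-char}, and everything downstream (the definition of the regular word language $L_R$, the second application of Proposition~\ref{prop:distr-char}, and the wreath-product assembly) reduces to standard bookkeeping.
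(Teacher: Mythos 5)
Your overall architecture is the same as the paper's: show each $\Delta(\rho)$ is recognized by a finite distributive algebra, observe that checking traces along trails is a regular (adjacent-pair) condition on a relabeled forest, and assemble the two stages via sequential composition into a wreath product of two finite distributive algebras. Your treatment of clause (2) of Definition~\ref{defin:approximators} is also fine: in the horizontally idempotent/commutative setting there are only finitely many trees of height $\leq N$, so the existential over witnesses is a finite disjunction and each inclusion $\pi(t')\subseteq\pi(t)$ is a finite conjunction of path atoms.

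However, you have misidentified where the real difficulty lies, and this leaves a genuine gap. Clause (3) of Definition~\ref{defin:approximators} is not ``literally a path-intersection requirement'' in the sense needed by Proposition~\ref{prop:distr-char}: that proposition only yields a \emph{finite} distributive algebra when the witness set $I$ is a \emph{regular} word language. Here $I = \alpha\circ\bigl(\Pi(\La(s)) - \Pi(\left\langle S_1,S_2\right\rangle)\bigr)$, and while $\Pi(\La(s))$ is regular by Proposition~\ref{prop:pi-reg}, the regularity of $\Pi(\left\langle S_1,S_2\right\rangle)$ is not at all immediate: $\left\langle S_1,S_2\right\rangle$ is defined by an existential over \emph{pairs} of forests with identical pathsets drawn from two different languages, and the map $\pi$ does not commute with intersection in general. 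This is exactly the point where the paper has to invoke the normal form $\Psi$: one rewrites $\left\langle S_1,S_2\right\rangle = \pi(\La(S_1))\cap\pi(\La(S_2)) = \pi\bigl(\Psi(\La(S_1))\cap\Psi(\La(S_2))\bigr)$ using Proposition~\ref{prop:psi}, and then uses Proposition~\ref{prop:psi-reg} (regularity and effective constructibility of $\Psi(\La)$, which itself requires leaving the idempotent setting) together with Proposition~\ref{prop:pi-reg} to conclude regularity. Without this step you cannot conclude that $\Delta(\rho)$ is recognized by a \emph{finite} distributive algebra, and the whole construction of $\Ff_2$ collapses. Everything downstream of that point in your proposal is sound.
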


\begin{proof}
We first want to show that, for any $\rho \in Rules(\Ff)$, the language $\Delta(\rho)$ is recognized by a finite distributive forest algebra.

Recall Proposition~\ref{prop:distr-char}.
In view of this, the first two conditions in the definition of $\Delta(\rho)$ can certainly be represented by a finite distributive forest algebra.

Now, let us consider the third condition.
$\Pi(s)$ is a regular language of words (Proposition~\ref{prop:pi-reg}).
To show that $\Pi(\left\langle S_1, S_2\right\rangle)$ is also regular, we can make use of $\Psi$ (Definition~\ref{def:psi}).
For $i = 1,2$, let $\La(S_i)$ be the language of forests $f_1 + ... + f_n$, where $f_j \in \La(\rho_j)$ with $\rho_j \in S_i$, for all $j$.
Given that $\Ff$ is finite, the languages $\La(S_1)$, $\La(S_2)$ are regular forest languages.
By definition of $\left\langle\cdot,\cdot\right\rangle$, we have $\left\langle S_1, S_2\right\rangle = \pi(\La(S_1)) \cap \pi(\La(S_2))$.
By Proposition~\ref{prop:psi}, we have $\pi(\Psi(\La(S_i))) = \pi(\La(S_i))$.
Thus, we can rewrite 
$\left\langle S_1, S_2\right\rangle = \pi(\Psi(\La(S_1))) \cap \pi(\Psi(\La(S_2)))$.
Now observe that, again due to Proposition~\ref{prop:psi}, we can permute $\pi$ outside of Boolean operations when they apply to images of $\Psi$. In our case,
$\pi(\Psi(\La(S_1))) \cap \pi(\Psi(\La(S_2))) = \pi(\Psi(\La(S_1)) \cap \Psi(\La(S_2))))$.
We can therefore write $\Pi(\left\langle S_1, S_2\right\rangle)$ as $\Pi(\Psi(\La(S_1)) \cap \Psi(\La(S_2)))$, which is a regular word language by Proposition~\ref{prop:pi-reg} and Proposition~\ref{prop:psi-reg}.
Thus, Proposition~\ref{prop:distr-char} also allows us to encode the third condition in a finite distributive forest algebra.

Since $\Ff$ is a finite forest algebra, there is a finite distributive algebra recognizing all languages $\Delta(\rho)$ ($\rho \in Rules(\Ff)$).
Checking whether a sequence of rules is a valid trace only requires testing pairs of adjacent symbols.
Therefore, $\Trace_\Ff$ is a regular word language.
Using the characterization of wreath products in terms of sequential composition (Theorem 4.2 of \cite{bojanczyk-wreath-2012}), it follows that $\Pa(R)$ is recognized by a wreath product of two finite distributive algebras.
\end{proof}

We can now prove condition (C) from the Main Lemma:
\begin{prop}
For any $r_1, ... r_k \in R$, the language $$\{f = f_1 + ... + f_k : f_i \in \Pa(\{r_i\})\}$$
is recognized by a wreath product of three finite distributive forest algebras.
\end{prop}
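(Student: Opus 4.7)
The plan is to extend the previous proposition by placing one more distributive layer on top. First, applying the proposition just proved to each $r_i$ and taking the direct product of the resulting algebras yields a single wreath product $\Aa_1 \wr \Aa_2$ of two finite distributive forest algebras, together with a morphism $\phi : \Sigma^\Delta \to \Aa_1 \wr \Aa_2$ that recognizes every $\Pa(\{r_i\})$ simultaneously. By property (D), for each tree $t$ the set $S(t) := \{i : \{t\} \in \Pa(\{r_i\})\}$ depends only on the image $\phi(\{t\})$, so I will write $S(h)$ for $h \in H_{\Aa_1 \wr \Aa_2}$.

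Because the horizontal monoid is commutative and idempotent, a forest $f$ admits a decomposition $f = f_1 + \ldots + f_k$ with $f_i \in \Pa(\{r_i\})$ if and only if (a) every tree $t$ occurring in $f$ satisfies $S(\phi(\{t\})) \neq \emptyset$, and (b) for each $i \in \{1, \ldots, k\}$, some tree $t$ occurring in $f$ satisfies $i \in S(\phi(\{t\}))$. Both conditions are Boolean combinations of statements of the form ``some tree $t$ of $f$ has $\phi(\{t\}) \in F$'' for suitable $F \subseteq H_{\Aa_1 \wr \Aa_2}$: condition (a) is the negation of such a statement with $F := \{h : S(h) = \emptyset\}$, and condition (b) is a conjunction of $k$ such statements with $F_i := \{h : i \in S(h)\}$. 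It therefore suffices to produce one more finite distributive forest algebra $\Bb$ so that each ``some tree'' language of the above form is recognized by $\Bb \wr \Aa_1 \wr \Aa_2$.

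For this final step I would use the sequential composition characterization of wreath products (Theorem 4.2 of \cite{bojanczyk-wreath-2012}), exactly as in the previous proposition. Under this characterization, recognition of a language by $\Bb \wr \Aa_1 \wr \Aa_2$ reduces to recognition by $\Bb$ of an annotated version of the forest in which each node carries information computed from $\phi$; in particular, the root of a tree $t$ can be annotated with data sufficient to reconstruct $\phi(\{t\})$, namely the root letter paired with the $\Aa_1 \wr \Aa_2$-value of the subforest below it. The statement ``some tree $t$ in $f$ has $\phi(\{t\}) \in F$'' then becomes a length-one pathset condition of the form ``some root carries an annotation encoding a value in $F$'', which is $\La_I$ for a suitable regular word language $I$ over the annotated alphabet. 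By Proposition~\ref{prop:distr-char} such languages are recognized by finite distributive forest algebras, supplying the desired $\Bb$. The main technical obstacle is verifying carefully that the annotation available at the root of a tree under $\Aa_1 \wr \Aa_2$ really does determine $\phi(\{t\})$ in the sense required by the sequential composition characterization; this is a routine but somewhat delicate unpacking of definitions, entirely parallel to the trace-checking step used in the previous proposition.
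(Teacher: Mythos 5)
Your proof is correct and takes essentially the same route as the paper: both reduce membership in $\{f_1+\dots+f_k : f_i \in \Pa(\{r_i\})\}$ to per-tree conditions using property (D) (the paper packages this as a general fact about languages closed under addition and under nonempty subforests) and then stack one more finite distributive factor on top of the two-layer product recognizing all the $\Pa(\{r_i\})$. The only difference is in how that top factor is realized: the paper constructs it explicitly as a forest algebra on $Pow(Pow(\{1,\dots,k\}))$ and verifies distributivity by hand, whereas you obtain it from Proposition~\ref{prop:distr-char} via length-one pathset conditions over the annotated alphabet supplied by the sequential-composition characterization; both realizations are valid.
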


We show this as an instance of a general fact: 

\begin{prop}
Let $\La_1, ..., \La_n$ be languages closed under addition ($f, f' \in \La_i$ implies $f+f' \in \La_i$) and under taking nonempty subsets of member forests -- that is, for any $f \in \La_i$ and $\emptyset \subsetneq f' \subseteq f$ we have $f' \in\La_i$.

Let $\Ff := (H,V)$ be a forest algebra recognizing $\La_1, ..., \La_n$ via morphism $\phi$.
Then there is a finite distributive algebra $\Gg$ such that $\Gg \wr \Ff$ recognizes the language $\La_1 + ... + \La_n$.
\end{prop}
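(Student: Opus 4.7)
The plan is to build a finite distributive forest algebra $\Gg$ and a morphism $\psi : \Sigma^\Delta \rightarrow \Gg \wr \Ff$ such that, for every forest $f$, the left component of $\psi(f)$ records the set of top-level tree-values $L(f) := \{\phi(\{t\}) : t \text{ a top-level tree of } f\}$. The closure hypotheses on the $\La_i$ will make membership in $\La_1 + \dots + \La_n$ depend only on $L(f)$, so this is the only additional information $\psi$ must carry beyond $\phi(f)$.

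For the left factor I take $H_\Gg := 2^{H}$ with union as $+$ (commutative, idempotent, identity $\emptyset$), and let $V_\Gg$ consist of the \emph{constant} maps $c_S : X \mapsto S$ and the \emph{insertion} maps $I_S : X \mapsto S \cup X$, one of each for every $S \subseteq H$. Composition stays inside this family ($I_S \circ c_T = c_{S \cup T}$, $c_S \circ v = c_S$, $I_S \circ I_T = I_{S \cup T}$), $I_\emptyset$ is the identity, the action is faithful, and the insertions $I_S$ supply the map $I_\cdot : H_\Gg \rightarrow V_\Gg$ required in Definition~\ref{def:forest-algebra}. Distributivity $v(h_1 + h_2) = v h_1 + v h_2$ falls out of idempotency and commutativity of union for both families. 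Hence $\Gg$ is a finite distributive forest algebra.

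Next I define $\psi$ as the unique forest-algebra morphism extending $\psi(\alpha[X]) := (f_\alpha, \phi(\alpha[X]))$ for $\alpha \in \Sigma$, where $f_\alpha(h) := c_{\{\phi(\alpha)h\}}$. An induction on forests shows $\psi(f) = (L(f), \phi(f))$: the addition case follows immediately from the direct-product structure on the forest component of the wreath product, and for a single tree $\alpha[g]$ the wreath-product action gives left component $f_\alpha(\phi(g)) \cdot L(g) = c_{\{\phi(\alpha[g])\}}(L(g)) = \{\phi(\alpha[g])\} = L(\alpha[g])$. The crucial point is that $f_\alpha(\phi(g))$ is a \emph{constant} map, so the interior data $L(g)$ is discarded and replaced by the new top-level value, which is exactly the behaviour the left factor is supposed to model.

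Finally, the closure hypotheses imply that a nonempty forest $g$ lies in $\La_i$ iff every top-level tree $t$ of $g$ satisfies $\{t\} \in \La_i$; combined with horizontal idempotency (which lets a single tree contribute to several summands of a decomposition), this yields that $f \in \La_1 + \dots + \La_n$ iff (a) every $h' \in L(f)$ lies in $\phi(\La_i)$ for some $i$, and (b) for each $i$ with $\emptyset \notin \La_i$, the set $L(f)$ intersects $\phi(\La_i)$. Both conditions depend only on $L(f)$, hence only on $\psi(f)$, so $\psi$ recognizes $\La_1 + \dots + \La_n$. I expect the main obstacle to be this last combinatorial characterization, since one must simultaneously use closure under nonempty subforests to make $\La_i$-membership a per-tree test, horizontal idempotency to handle the set-cover flexibility of the decomposition, and a separate clause for indices with $\emptyset \in \La_i$ to deal with the empty-forest case; once this reduction is in place, the rest is a direct wreath-product computation.
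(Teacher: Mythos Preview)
Your proof is correct and follows essentially the same approach as the paper: a finite distributive algebra whose horizontal monoid is a powerset under union, whose vertical monoid consists of constant maps and insertion maps, and a morphism making each letter act as a constant recording top-level tree data. The only differences are cosmetic---you record $\phi(\{t\})\in H$ for each top-level tree while the paper records the index set $\{i:\{t\}\in\La_i\}\subseteq\{1,\dots,n\}$ (equivalent since $\Ff$ recognizes all $\La_i$)---and your final characterization is slightly more careful than the paper's about indices $i$ with $\emptyset\in\La_i$.
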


By definition of $\Pa$, each $\Pa(R)$ is closed under taking nonempty subsets of member forests: Since $\Pa(R)$ is defined in terms of traces, a forest $f$ is in $\Pa(R)$ if and only if all the trees in $f$ are in $\Pa(R)$.
Therefore, the previous proposition is an instant consequence of this fact.

\begin{proof}
Let $H_1$ be $Pow(Pow(\{1,...,n\}))$ with union as the monoid operation, and let $V_1 \subseteq H_1^{H_1}$ consist of (1) the set of constant functions $H_1 \rightarrow H_1$, and (2) the set of functions $f_K : L \mapsto K \cup L$ ($K, L \subseteq Pow(\{1,...,n\})$).
$V_1$ is a monoid, with function composition as the operation, and $f_\emptyset$ as the identity element.
$V_1$ acts on $H_1$ by $v\cdot h := v(h)$.
For $h \in H_1$, we can set $I_h$ to be $f_h$.
Thus, $(H_1,V_1)$ is a finite forest algebra.

Let's verify that $(H_1, V_1)$ is distributive:
Consider $v[h_1+h_2]$. If $v$ is a constant function, this is certainly equal to $vh_1+vh_2$.
Now consider the case where $v = f_K$. Then
$v[h_1+h_2] = K\cup h_1 \cup h_2 = (K\cup h_1) \cup (K \cup h_2) = v[h_1] + v[h_2]$.
We have shown that $(H_1, V_1)$ is distributive.

We claim that $(H_1, V_1) \wr \Ff$ recognizes the language $\La_1 + ... + \La_n$.

For each context type of the form $v = \alpha[X] \in V_\Sigma$ -- that is, consisting only of a variable and a parent node labeled $\alpha$, define a function
$f_\alpha : H \rightarrow V_1$ given by 
$$f_\alpha(h)(h') \equiv \{\{i : \alpha[\phi_\Ff^{-1}(h)] \cap \La_i \neq \emptyset\}\}$$
(independent of $h'$, given that $f_\alpha(h)$ is a constant function).
We can set $\eta(\alpha[X]) := (f_\alpha, \phi(\alpha))$ and can extend this map to a morphism $\eta : \Sigma^\Delta \rightarrow (H_1, V_1) \wr \Ff$.

For a forest $f$, which we can write as a sum of trees $t_1 + ... + t_n$, we have
$\eta(f) = \{\{i : t_j \in \La_i\} : j = 1, ..., n\}$.
Therefore, $\La_1 + .. + \La_n$ is the set of trees $f$ where (1) each element of $\eta(f)$ is nonempty, (2) each $i = 1, ..., n$ occurs in some element of $\eta(f)$.

Thus, $(H_1, V_1) \wr \Ff$ recognizes $\La_1 + ... + \La_n$ via $\eta$.
\end{proof}

\subsection{Correctness of Approximators}
We now show Condition (A) of the Main Lemma, which is to establish that the distributive approximators $\Delta(\rho)$ and the  approximators $\Pa(R)$ are `big' enough to include the intended languages:

\begin{prop}
Let $\rho \in Rules(\Ff)$. Let $t$ be a tree such that $\{t\} \in \La(\rho)$. Then $t \in \Delta(\rho)$.
\end{prop}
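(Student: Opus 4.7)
The claim amounts to verifying each of the three conditions in Definition~\ref{defin:approximators} for the tree $t$. My plan is to check them in order, exploiting the fact that $t$ itself is the most natural witness for each condition.

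For the first condition, observe that $\rho$ has the form $h_0 \leftarrow \alpha[\{h_1,\dots,h_k\}]$, and by the definition of $\La(\rho)$ (Definition~\ref{def:algebras}), any tree belonging to $\La(\rho)$ (as a singleton forest) has root symbol $\alpha$. Hence the root symbol of $t$ matches the symbol of $\rho$.

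For the second condition, I would simply take $t' := t$. Then $\{t'\} = \{t\} \in \La(\rho)$ by hypothesis, $\height(t') = \height(t) \leq N$, and $\pi(t') = \pi(t)$, so in particular $\pi(t') \subseteq \pi(t)$.

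For the third condition, fix any $s \in H_\Ff$ and $S_1, S_2 \subseteq Rules(\Ff)$ such that, for every $g \in \La(\rho)$,
\[
\pi(g) \cap \bigl(\alpha \circ (\Pi(\La(s)) - \Pi(\langle S_1, S_2\rangle))\bigr) \neq \emptyset.
\]
Since $\{t\} \in \La(\rho)$, we may instantiate $g := \{t\}$. Because $\pi(\{t\}) = \pi(t)$, this immediately yields the required intersection property for $t$. The main ``obstacle'' is thus really just unpacking the definitions and noting that in each clause of the approximator definition, $t$ serves as its own witness; there is no genuine difficulty here, which is expected, since the whole point of the construction is that $\Delta(\rho)$ is meant to \emph{extend} $\La(\rho)$.
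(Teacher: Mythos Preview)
Your proof is correct and is essentially the same as the paper's, which simply says ``Directly from the definition of $\Delta(\rho)$.'' You have merely made explicit what the paper leaves implicit: in each clause of Definition~\ref{defin:approximators}, the tree $t$ itself serves as the required witness.
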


\begin{proof}
Directly from the definition of $\Delta(\rho)$.
\end{proof}

\begin{prop}[Condition A of the Main Lemma]\label{prop:correctness}
For $R \subseteq Rules(\Ff_i)$ and any $r_1, ..., r_n \in R$ ($n \geq 1$), we have 
$$\La(r_1) + ... + \La(r_n) \subseteq \Pa(R)$$
\end{prop}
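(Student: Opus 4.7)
The plan is to unfold the definitions and build, for any fixed trail in a forest $f \in \La(r_1) + \ldots + \La(r_n)$, an explicit trace $\zeta$ in $\Trace_\Ff$ which witnesses that the trail satisfies the conditions for one of the rules $r_i \in R$. Since this produces a witness for every trail in $f$, Definition~\ref{def:p} then gives $f \in \Pa(R)$.

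First, I would fix $f = f_1 + \ldots + f_n$ with $f_i \in \La(r_i)$ and a trail $p = (n_0, n_1, \ldots, n_k)$ in $f$. The root $n_0$ lies in some member tree of some $f_i$, and by definition of $\La(r_i)$ the subtree $t$ rooted at $n_0$ has root symbol matching $r_i$ and its set of children's forest types is exactly the right-hand side of $r_i$. I set $\zeta_0 := r_i$; note $\zeta_0 \in R$, so the conclusion we need is that the trail $p$ satisfies the conditions for this particular $\zeta_0$.

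Next, I would proceed inductively along the trail. Having fixed $\zeta_{j-1}$ as the unique rule $(h_{j-1}, \alpha_{j-1}, \{\ldots\}) \in Rules(\Ff,\phi)$ applied at $n_{j-1}$, the node $n_j$ is a child of $n_{j-1}$, so if $s_j \in H$ denotes $\phi$ of the subtree at $n_j$, then $s_j$ appears in the children set of $\zeta_{j-1}$. Define $\zeta_j$ to be the rule $(s_j, \alpha_j, \{\phi(c) : c \text{ child of } n_j\})$, which lies in $Rules(\Ff,\phi)$ because the subtree at $n_j$ realizes it. The pair $(\zeta_{j-1}, \zeta_j)$ then matches the definition of $\Trace_\Ff$: $\zeta_{j-1}$ has shape $\cdots \leftarrow \{\ldots, s_j, \ldots\}$, and $\zeta_j$ has shape $s_j \leftarrow \ldots$. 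Iterating gives the required trace $\zeta$ of length $|p|$.

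Finally, it remains to verify that the singleton forest consisting of the subtree rooted at $n_j$ lies in $\Delta(\zeta_j)$ for each $j$. By construction this singleton is in $\La(\zeta_j)$, so the immediately preceding proposition ($\{t\}\in\La(\rho) \Rightarrow t\in\Delta(\rho)$) supplies exactly this. Thus all three requirements in Definition~\ref{def:p} are met for the trail $p$ with the rule $r_i \in R$. Since $p$ was an arbitrary trail of $f$, we conclude $f \in \Pa(R)$. I do not foresee a real obstacle here: the argument is a straightforward unwinding of Definition~\ref{def:algebras}, Definition~\ref{def:p}, and the already-proved fact $\La(\rho) \subseteq \Delta(\rho)$; the only care needed is the mild bookkeeping of indices between the trail and its associated trace.
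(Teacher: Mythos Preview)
Your proposal is correct and follows essentially the same approach as the paper: pick a trail in $f$, take for $\zeta_j$ the rule realized by the subtree rooted at the $j$-th node of the trail, observe that the first rule lies in $R$ and that the sequence is a valid trace, and then invoke the preceding proposition $\La(\rho)\subseteq\Delta(\rho)$ to finish. The paper's proof is slightly terser (it defines the whole trace at once rather than building it inductively), but the content is identical.
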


\begin{proof}
We check this by going through the definition of $\Pa$.

Let $f \in \La(r_1) + ... + \La(r_n)$.
Let $p$ be a maximal trail (we only allow nonempty trails in the definition of `trail') in $f$.
Let $\tau \in (\operatorname{Rules}(\Ff))^{|p|}$ be the sequence of rules where $\tau_i$ is the rule that the tree rooted at the $i$-th element of $p$ evaluates to.
That is, $\tau \in \operatorname{Traces}_\Ff(p)$.
Certainly, $p$ must be running through some tree $t$ in $f$.
There is a rule $\rho = r_j \in R$ such that $t \in \La(\rho)$.
By choice of $\tau$, we have $\tau_0 = \rho \in R$.
Again, by choice of $\tau$, the tree rooted in the $i$-th element of $p$ is in $\La(\tau_i)$.
By the preceding proposition, it is also in $\Delta(\tau_i)$.

By the definition of $\Pa$, the claim follows.
\end{proof}

\subsection{Right-to-Left Inclusion for Condition (B)}

Up to this point, we have proven conditions (A), (C), (D) from the Main Lemma.
From Proposition~\ref{prop:correctness}, we can derive one of the two inclusions of condition (B) in the Main Lemma, which will be the goal of this section.
Recall $\Psi$ from Definition~\ref{def:psi}.

\begin{prop}\label{prop:transfer-psi}
Let $f$ be a forest.
If $f \in \Pa(R)$, then $\Psi f \in \Pa(R)$.
\end{prop}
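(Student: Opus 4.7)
The plan is trail-by-trail, exploiting that $\Psi$ preserves pathsets (Proposition~\ref{prop:psi}.2) together with a monotonicity property of the approximator family $\{\Delta(\rho)\}$.

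First I lift every trail of $\Psi(f)$ back to a trail of $f$. Fix a trail $p = n_1, \dots, n_k$ in $\Psi(f)$ with label sequence $w_1 \cdots w_k$. This sequence is a path in $\Psi(f)$ and, by Proposition~\ref{prop:psi}.2, also a path in $f$; hence there is a trail $p' = n'_1, \dots, n'_k$ in $f$ with the same label sequence, starting at the root of some $w_1$-labeled top-level tree of $f$. Since $f \in \Pa(R)$, Definition~\ref{def:p} yields $\rho \in R$ and a trace $\zeta$ with $\zeta_0 = \rho$ such that, for every $j$, the tree $T'_j$ rooted at $n'_j$ in $f$ lies in $\Delta(\zeta_j)$.

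The core geometric claim is that the tree $T_j$ rooted at $n_j$ in $\Psi(f)$ has the same root symbol $w_j$ as $T'_j$ and satisfies $\pi(T_j) \supseteq \pi(T'_j)$. By the recursive definition of $\Psi$, the subtree at position $w_1 \cdots w_j$ in $\Psi(f)$ is obtained by applying $\Psi$ to the sum of \emph{all} subforests sitting below nodes at position $w_1 \cdots w_j$ in $f$, so $\pi(T_j)$ equals the union of the pathsets of all such subtrees in $f$, one of which is $T'_j$. I would then verify that each clause of Definition~\ref{defin:approximators} is monotone under enlarging the pathset of a tree while keeping the root symbol fixed: whenever $T,T'$ share a root symbol, $\pi(T) \supseteq \pi(T')$, and $T' \in \Delta(\rho)$, then $T \in \Delta(\rho)$. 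Clause~1 is immediate. For clause~2, enlarging the pathset can only increase the height, so $\height(T) \le N$ implies $\height(T') \le N$, and a witness $t''$ of height $\le N$ for $T'$ with $\{t''\} \in \La(\rho)$ and $\pi(t'') \subseteq \pi(T') \subseteq \pi(T)$ transfers verbatim. Clause~3 asserts a nonempty intersection of $\pi(\cdot)$ with a fixed subset of $\Sigma^*$, which is preserved under enlarging $\pi$.

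Combining these, $T_j \in \Delta(\zeta_j)$ for each $j$. The trace $\zeta$ remains valid for $p$ since being a trace is purely combinatorial in the rule sequence, and $p, p'$ share their label sequence. Hence $p$ satisfies the conditions for $\rho \in R$ in $\Psi(f)$, and since $p$ was arbitrary, Definition~\ref{def:p} gives $\Psi(f) \in \Pa(R)$. The only subtle point is clause~2 of $\Delta$: the hypothesis ``height $\le N$'' is applied to the \emph{larger} tree $T$, but this is exactly the easy direction, since a larger pathset can only make the height larger, so whenever the hypothesis holds for $T$ it also holds for $T'$ and the existing witness transfers.
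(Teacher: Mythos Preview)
Your argument is correct and follows essentially the same route as the paper's proof: lift each trail of $\Psi(f)$ to a trail in $f$ with the same label sequence, observe that the subtree at each position in $\Psi(f)$ has a pathset containing that of the corresponding subtree in $f$, and use that each $\Delta(\rho)$ is upward-closed with respect to pathset inclusion (depending only on $\pi$ and the root label). You are in fact more explicit than the paper in checking the three clauses of Definition~\ref{defin:approximators} for monotonicity, including the correct handling of the height hypothesis in clause~2.
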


\begin{proof}
Consider a trail in $\Psi f$ and some arbitrary trail consisting of the same symbol sequence in $f$.
For each node $\nu$ in $\Psi f$ along the trail, look at the corresponding node $\nu'$ along the trail selected in $f$.
For each node $\nu$, let $f_\nu, f_{\nu'}$ be the forests below $\nu, \nu'$, respectively.
Then $f_\nu$ is in the upward closure of $f_{\nu'}$ -- that is, we can obtain $f_\nu$ from $f_{\nu'}$ by adding nodes below existing nodes.
From the definition of $\Delta$, one verifies easily that $\Delta(\rho)$ is upward-closed for each $\rho$.
Also, membership in $\Delta(\rho)$ only depends on the set of paths.
Therefore, $f_{\nu'} \in \Delta(\rho)$ implies $f_\nu \in \Delta(\rho)$.
Then the claim follows from the definition of $\Pa(R)$.
\end{proof}

Now, we prove one of the two directions of the main lemma:

\begin{prop}[Right-to-Left Direction of the Main Lemma]\label{prop:one-direction}
\begin{equation}
\Pi \left(\Rn\Pa(R) \cap \Rn\Pa(S)\right) \supset \Pi\left\langle R, S\right\rangle 
\end{equation}
\end{prop}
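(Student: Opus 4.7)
The plan is to take an arbitrary path on the right-hand side and exhibit a single forest in $Z_n\Pa(R) \cap Z_n\Pa(S)$ that contains it. The crucial move is to unify the natural $R$-witness and $S$-witness forests into one common forest via the normalization map $\Psi$ of Definition~\ref{def:psi}, which is justified precisely by the observation that $\Pa(R)$ is closed under $\Psi$ (Proposition~\ref{prop:transfer-psi}).

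Concretely, take a path $w \in \Pi Z_n \langle R,S\rangle$. By definition of $\Pi$ and $\langle R,S\rangle$, there is a pathset $d \in \langle R,S\rangle$ of height at most $n$ with $w \in d$, and forests $f_1, f_2$ such that $\pi(f_1) = \pi(f_2) = d$, with $f_1 \in \La(r_1)+\cdots+\La(r_k)$ for some $r_i \in R$ and $f_2 \in \La(s_1)+\cdots+\La(s_l)$ for some $s_j \in S$. By Condition~(A) of the Main Lemma, already proved as Proposition~\ref{prop:correctness}, we immediately have $f_1 \in \Pa(R)$ and $f_2 \in \Pa(S)$.

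Now apply $\Psi$ to unify them. Since $\pi(f_1) = \pi(f_2)$, Proposition~\ref{prop:psi}.3 yields $\Psi(f_1) = \Psi(f_2)$; call this common forest $f$. By Proposition~\ref{prop:transfer-psi} we have $f \in \Pa(R) \cap \Pa(S)$, while Proposition~\ref{prop:psi}.2 gives $\pi(f) = d$, so $\height(f) \le n$ and $w \in \pi(f)$. Hence $w \in \Pi\bigl(Z_n\Pa(R) \cap Z_n\Pa(S)\bigr)$, which is the desired inclusion.

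Because Propositions~\ref{prop:correctness}, \ref{prop:psi}, and \ref{prop:transfer-psi} have already been established, this direction of Condition~(B) is essentially routine; the only real content is that $\Psi$ collapses two pathset-equivalent witnesses into one forest that belongs to both approximators simultaneously. I do not expect this step to be the main obstacle: the substantive difficulty in Condition~(B) lies in the converse inclusion, where one must argue that every path appearing in a height-$n$ forest of $\Pa(R) \cap \Pa(S)$ really does arise from a common pathset of an $R$-forest and an $S$-forest. That is the direction in which the constant $N$ from Lemma~\ref{lemma:absorption} and the third, rule-sensitive clause of Definition~\ref{defin:approximators} will have to do their work.
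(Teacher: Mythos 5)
Your proposal is correct and is essentially the same argument as the paper's: take the two witness forests from the definition of $\left\langle R,S\right\rangle$, put them in $\Pa(R)$ and $\Pa(S)$ via Condition (A), and collapse them to a common forest with $\Psi$ using Propositions~\ref{prop:psi} and~\ref{prop:transfer-psi}. Your reading of the right-hand side as $\Pi\Rn\left\langle R,S\right\rangle$ (with the height restriction) matches what the paper's own proof actually uses.
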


\begin{proof}
Let $w \in \Pi \left\langle R, S\right\rangle$.
Then there is $f \in \left\langle R, S\right\rangle$ with $w \in f$.
Stated differently, there are $f_1, f_2$ such that $\pi(f_1) = \pi(f_2) = f$, $\height\ f_i \leq n$, $f_1 \in \La(r_1) + ... \La(r_k)$ and $f_2 \in \La(s_1) + ... + \La(s_j)$ ($r_m \in R, s_m \in S$), and $w \in \pi(f_i)$.

From Proposition \ref{prop:correctness}, we know $f_1 \in \Pa(R)$, $f_2 \in \Pa(S)$.
From Proposition \ref{prop:transfer-psi}, we conclude $\Psi f_1 \in \Pa(R)$, $\Psi f_2 \in \Pa(R)$.

Since $\pi(f_1) = \pi(f_2)$, we actually have $\Psi f_1 = \Psi f_2$ (Proposition~\ref{prop:psi}).
Thus, $\Psi f_i \in \Pa(R) \cap \Pa(S)$.
Given $\height\ f_i \leq n$, we have $\Psi f_i \in (\Rn\Pa(R) \cap \Rn\Pa(S))$.
In view of $w \in \pi(f) = \pi(\Psi f_i)$, we get $w \in \Pi (\Rn\Pa(R) \cap \Rn\Pa(S))$.
\end{proof}

\subsection{Left-to-Right Inclusion at Small Heights}

The goal of this section is to the prove the left-to-right inclusion in condition (B) in the Main Lemma in the case when $n \leq N$.

We say that a path $w \in \pi(f)$ \emph{belongs} to a trail $p$ if $|w|=|p|$ and the node $p_i$ is labeled $w_i$.

\begin{llemma}[Technical Lemma]\label{lemma:existsWitnessTree}
Let $n \leq N$, $\rho \in Rules(\Ff)$.
Take $t$ a tree.
Let $w \in \pi(t)$, belonging to trail $p$.
Assume there is a trace $\tau \in Traces_\Ff(w)$ such that the tree rooted in $\pi_i$ is in $\Delta(\tau_i)$, for each $i = 1, ..., |\pi|$.

Then there is a tree $g$ such that
\begin{enumerate}
\item $g \in \La(\tau_1)$
\item $w \in \pi(g) \subseteq \pi(t)$
\end{enumerate}
\end{llemma}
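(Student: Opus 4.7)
The plan is to induct on $|p|$, relying in both the base case and the inductive step on the second clause of Definition~\ref{defin:approximators} (the one that extracts, from any tree of height $\leq N$ in $\Delta(\rho)$, a witness in $\La(\rho)$ with a subset pathset). This clause is applicable throughout because all subtrees we feed into it sit at height $\leq n \leq N$.

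For the base case $|p|=1$, the path $w$ is the single symbol $w_1$ labelling $p_1$, which is the root of $t$. The tree rooted at $p_1$ is $t$ itself, so $t \in \Delta(\tau_1)$; the height-$\leq N$ clause of Definition~\ref{defin:approximators} then supplies a tree $g$ with $\{g\} \in \La(\tau_1)$ and $\pi(g) \subseteq \pi(t)$. By the trace condition, the root symbol of $\tau_1$ equals $w_1$, so $w \in \pi(g)$.

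For the inductive step $|p|>1$, I first apply the induction hypothesis to the subtree $t' := t|_{p_2}$ together with the suffix $w_2 \cdots w_{|w|}$, the tail trail $p_2,\ldots,p_{|p|}$, and the tail trace $\tau_2,\ldots,\tau_{|\tau|}$, all of which inherit the required $\Delta$-hypothesis. This produces a tree $g' \in \La(\tau_2)$ with $w_2\cdots w_{|w|} \in \pi(g') \subseteq \pi(t')$. Independently, I apply the same $\Delta$-clause to $t \in \Delta(\tau_1)$ itself to obtain $t_0$ with $\{t_0\} \in \La(\tau_1)$ and $\pi(t_0) \subseteq \pi(t)$. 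I then form $g$ by adjoining $g'$ as an additional child of the root of $t_0$.

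The verifications are routine. For $g \in \La(\tau_1)$: the root symbol is unchanged, and the only modification to the multiset of children-values is the duplication of the value $\phi(g') = \text{LHS}(\tau_2)$; the trace condition on $\tau$ forces this value to already belong to the right-hand side of $\tau_1$, so horizontal idempotency absorbs the duplication and the set of children-values still equals the RHS of $\tau_1$. For $w \in \pi(g)$: we read $w_1$ at the root of $g$ and the suffix $w_2 \cdots w_{|w|}$ down through $g'$. For $\pi(g) \subseteq \pi(t)$: paths through the $t_0$-part lie in $\pi(t_0) \subseteq \pi(t)$, while the new paths have the form $w_1 \cdot q$ with $q \in \pi(g') \subseteq \pi(t')$, and these lie in $\pi(t)$ because $t'$ is the subtree of $t$ rooted at the child $p_2$ of the root $p_1$. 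The only delicate step is this grafting move; it works precisely because horizontal idempotency lets us duplicate a $\phi$-value already present among the siblings without perturbing the top-level rule instance.
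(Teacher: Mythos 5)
Your proposal is correct and follows essentially the same route as the paper's proof: extract a height-$\leq N$ witness $t_0\in\La(\tau_1)$ from the second clause of Definition~\ref{defin:approximators}, recursively obtain $g'\in\La(\tau_2)$ for the subtree along the trail, and graft $g'$ onto $t_0$ as an extra child, with horizontal idempotency absorbing the duplicated child-value since $\phi(g')$ already occurs on the right-hand side of $\tau_1$. The only cosmetic difference is that you induct on $|p|$ while the paper phrases it as structural induction on $t$; the construction and verifications are the same.
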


\begin{proof}
We do induction over the structure of the tree $t$.

The tree $t$ has the form $\alpha[\{t_1, ..., t_k\}]$, where $k \geq 0$ indexes the children. (In the base case, $k=0$ and there are no children). By induction hypothesis, we can assume the lemma holds for each of the trees $t_i$.

By assumption, $t \in \Delta(\tau_1)$.
Since $t$ has height $\leq N$, there is a tree $t' \in Z_N(\La(\tau_1))$ such that $\Pi(t') \subseteq \pi(t)$ according to the definition of $\Delta(\tau_1)$. Note that the root of $t'$ must be $\alpha$, since there are no empty trees.


If $|\tau| = 1$, set $g := t'$, and we have $g = t' \in \La(\tau_1)$ and $w = \alpha \in \pi(t') \subseteq \pi(t)$.

Now assume $|\tau| > 1$.
Then one of the children $t_i$ is rooted in $\tau_1$, and $\alpha^{-1}w \in \pi(t_i)$, belonging to trail $p_{1\dots|p|}$.
Also, $\tau' := \tau_{1\dots|p|}$ is a trace and the tree rooted in $\tau'_j$ is in $\Delta(\tau'_j)$ for each $j = 1,...,|\tau'|$.

Thus, we can apply the induction hypothesis and obtain a tree $g'$ such that $g' \in \La(\tau_2)$ and $\alpha^{-1}w \in \pi(g') \subseteq \pi(t_i)$.

Let $t'_1, ..., t'_l$ be the children of $t'$ ($l \geq 0$, as the children set may be empty). Now set $g := \alpha[\{t'_1, ... t'_l, g'\}]$.
Then $w \in \alpha\pi(g') \subseteq \pi(g)$. This shows (2).

Recall $g' \in \La(\tau_2)$. The rule $\tau_2$ evaluates to a forest type $q$ that occurs on the right-hand side of $\tau_1$.
Thus, since $t' \in \La(\tau_1)$, there is a child $t'_r \in \La(q)$.
Due to horizonal idempotency, $(t'_r + g') \in \La(q)$.
Thus, $\{t'_1, ... t'_l, g'\}$ evaluates to the same set of forest types $\subseteq H_\Ff$ as $\{t'_1, ... t'_l\}$.
Therefore, $g \in \La(\tau_1)$.
This proves (1).
\end{proof}

\begin{llemma}\label{lemma:up-to-n-pre}
Let $n \leq N$ and $R \subseteq Rules(\Ff)$. Then
$$\pi \Rn (\Pa(R)) \subseteq \left\langle R, R\right\rangle$$

\end{llemma}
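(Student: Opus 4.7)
My plan is to show that every $f \in \Rn\Pa(R)$ admits a \emph{witness} forest $g$ that is a sum of trees, each living in $\La(\rho)$ for some $\rho \in R$, and satisfying $\pi(g) = \pi(f)$. Once such a $g$ is constructed, setting $f_1 := f_2 := g$ in the definition of $\langle R, R\rangle$ immediately yields $\pi(f) \in \langle R, R\rangle$, as required. Note that both the height bound on $\Rn$ and the restriction $n \leq N$ will enter only through the invocation of Lemma~\ref{lemma:existsWitnessTree}; the witness construction itself is path-by-path.

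To build $g$, I will process $\pi(f)$ one path at a time. For each path $w \in \pi(f)$, let $p_w$ be the trail in $f$ traversing $w$, and let $t_w$ be the tree of $f$ that contains $p_w$. By Definition~\ref{def:p}, since $f \in \Pa(R)$, the trail $p_w$ satisfies the conditions for some $\rho_w \in R$, which supplies a trace $\zeta^w \in \Trace_\Ff$ with $\zeta^w_1 = \rho_w$ such that the subtree rooted at the $i$-th node of $p_w$ lies in $\Delta(\zeta^w_i)$ for every $i$. This matches exactly the hypothesis of Lemma~\ref{lemma:existsWitnessTree} applied to $t_w$, $w$, $p_w$, and $\zeta^w$. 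The bound $n \leq N$ is essential here: the second clause of Definition~\ref{defin:approximators}, on which Lemma~\ref{lemma:existsWitnessTree} relies at the base case of its structural induction, is only available for trees of height $\leq N$, and every $t_w$ has height at most $n \leq N$. The technical lemma then hands us a tree $g_w \in \La(\rho_w)$ with $w \in \pi(g_w) \subseteq \pi(t_w) \subseteq \pi(f)$.

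With these witnesses in hand, I will set $g := \sum_{w \in \pi(f)} g_w$, a finite sum since $\pi(f)$ is finite and nonempty (the latter because $\Pa(R)$ contains only nonempty forests). Enumerating the distinct rules $\rho_1, \ldots, \rho_k \in R$ that arise as some $\rho_w$ expresses $g$ as an element of $\La(\rho_1) + \cdots + \La(\rho_k)$. Because $w \in \pi(g_w) \subseteq \pi(f)$ for each $w$, we obtain
\[
\pi(g) \;=\; \bigcup_{w \in \pi(f)} \pi(g_w) \;=\; \pi(f),
\]
so picking $f_1 := f_2 := g$ in the definition of $\langle R, R\rangle$ witnesses $\pi(f) \in \langle R, R\rangle$.

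I expect the only non-routine step to be verifying that the data supplied by Definition~\ref{def:p} line up exactly with the hypothesis of Lemma~\ref{lemma:existsWitnessTree} (including a minor bookkeeping conversion between the indexing of $\zeta^w$ in the trace and the trace $\tau$ in the lemma). Beyond this, everything is horizontal summation, which is harmless under horizontal idempotency and commutativity. The cutoff $n \leq N$ plays no role beyond enabling the $\Delta$-condition inside the technical lemma, which is why the complementary case $n > N$ needs a separate, genuinely different argument.
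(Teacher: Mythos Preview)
Your proposal is correct and follows essentially the same route as the paper's proof: for each path $w\in\pi(f)$, use the trace supplied by the definition of $\Pa(R)$ to invoke Lemma~\ref{lemma:existsWitnessTree} and obtain a witness tree $g_w\in\La(\rho_w)$ with $w\in\pi(g_w)\subseteq\pi(f)$, then sum these witnesses to get $g$ with $\pi(g)=\pi(f)$ and $g\in\La(r_1)+\cdots+\La(r_k)$ for $r_j\in R$. The only cosmetic difference is that the paper phrases the iteration over paths slightly more tersely; your remarks about the indexing conversion and the role of $n\le N$ are accurate.
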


\begin{proof}
Take $f \in Z_n \Pa(R)$.
Let $w \in \pi(f)$.
By definition of $\Pa$, we obtain a trail $\pi$ in $f$ belonging to $w$ and a trace $\tau$ such that $\tau_1 \in R$ and the tree rooted at $\pi_i$ is in $\Delta(\tau_i)$, for $i = 1, ..., |\pi|$.

Let $t_w$ be the trail rooted at $\pi_1$.
From Lemma~\ref{lemma:existsWitnessTree}, we obtain a tree $t'_w$ such that $t'_w \in \La(\tau_1)$ and $w \in \Pi(t'_w) \subseteq \pi(t)$.
Define a forest $g := \{t'_w : w \in \pi(f)\}$.
By construction, we have $\pi(f) = \pi(g)$, and $g \in \La(r_1) + ... + \La(r_k)$ ($r_j \in R$).
Thus, $\pi(f) = \pi(g) \in \left\langle R, R\right\rangle$.
\end{proof}

\begin{corollary}\label{lemma:up-to-n}
Let $n \leq N$ and $R,S \subseteq Rules(\Ff)$. Then
$$\Pi  (\Rn\Pa(R) \cap \Rn\Pa(S)) \subseteq \Pi \Rn \left\langle R, S\right\rangle$$
\end{corollary}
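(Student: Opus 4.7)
The plan is to derive this corollary almost immediately from Lemma~\ref{lemma:up-to-n-pre}, by applying it twice --- once with $R$ and once with $S$ --- to the same forest. The only thing to verify carefully is that the two resulting witnesses combine to give a pathset in $\langle R, S\rangle$ (rather than only in $\langle R,R\rangle \cap \langle S,S\rangle$), and that the height bound is preserved on the right-hand side.

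More concretely, I would start by taking a path $w \in \Pi(Z_n \Pa(R) \cap Z_n \Pa(S))$. By definition of $\Pi$, there is a forest $f \in Z_n \Pa(R) \cap Z_n \Pa(S)$ with $w \in \pi(f)$. Applying Lemma~\ref{lemma:up-to-n-pre} to $f$ (which requires $n \leq N$) first with $R$ and then with $S$ yields $\pi(f) \in \langle R,R\rangle$ and $\pi(f) \in \langle S,S\rangle$. The next step is to observe that the intersection $\langle R,R\rangle \cap \langle S,S\rangle$ is contained in $\langle R,S\rangle$: by definition, if $d \in \langle R,R\rangle$ then some $f_1 \in \La(r_1) + \cdots + \La(r_k)$ (with $r_i \in R$) satisfies $\pi(f_1) = d$, and similarly if $d \in \langle S,S\rangle$ then some $f_2 \in \La(s_1) + \cdots + \La(s_j)$ (with $s_i \in S$) satisfies $\pi(f_2) = d$; together these witnesses show $d \in \langle R, S\rangle$. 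Hence $\pi(f) \in \langle R, S\rangle$.

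Finally, since the height of a pathset agrees with the height of any forest realising it, and $\height(f) \leq n$, we obtain $\pi(f) \in Z_n \langle R, S\rangle$, so $w \in \pi(f) \subseteq \Pi Z_n \langle R, S\rangle$. I do not expect any real obstacle here: the work has already been done in Lemma~\ref{lemma:up-to-n-pre} via Lemma~\ref{lemma:existsWitnessTree}, and the only subtle point is the set-theoretic observation $\langle R,R\rangle \cap \langle S,S\rangle \subseteq \langle R,S\rangle$, which follows immediately from unpacking the definition of $\langle\cdot,\cdot\rangle$. Together with Proposition~\ref{prop:one-direction}, this yields condition~(B) of Lemma~\ref{prop:main-lemma-strong} in the regime $n \leq N$; the genuinely non-trivial case $n > N$ is handled separately in the subsequent section using the constant from Lemma~\ref{lemma:absorption} and the third clause in Definition~\ref{defin:approximators}.
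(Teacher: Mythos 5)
Your proposal is correct and follows essentially the same route as the paper: take a forest $f$ witnessing the path, apply Lemma~\ref{lemma:up-to-n-pre} with $R$ and with $S$, and combine the two witnesses into membership of $\pi(f)$ in $\left\langle R, S\right\rangle$. You are in fact slightly more careful than the paper's own one-line argument, since you make explicit both the containment $\left\langle R,R\right\rangle \cap \left\langle S,S\right\rangle \subseteq \left\langle R,S\right\rangle$ and the preservation of the height bound $\Rn$ on the right-hand side.
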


\begin{proof}
Let $w \in \Pi  (\Rn\Pa(R) \cap \Rn\Pa(S))$,
so there is $f \in (\Rn\Pa(R) \cap \Rn\Pa(S))$ with $w \in \pi(f)$.
That is, $\pi(f) \in \pi(\Rn\Pa(R))$ and $\pi(f) \in \pi(\Rn\Pa(S))$.
From Lemma \ref{lemma:up-to-n-pre}, we have $\pi(f) \in \left\langle R, S\right\rangle$.
Since $w \in \pi(f)$, $w$ is in the path set of that language, which proves the claim.
\end{proof}

\subsection{Left-to-Right Inclusion at Large Heights}

To complete the Main Lemma, we prove the left-to-right inclusion of condition (B) for arbitrary heights $n$.

\begin{defin}[Quotients]\label{def:quotient}
If $f$ is a forest, we write $\alpha^{-1}f := \bigcup_{f' : \alpha[f'] \in f} f'$.
Thus, $\alpha^{-1}f$ is also a forest, consisting of all the trees occurring below roots labeled $\alpha$ in members of $f$.

If $\La$ is a forest language, we write $\alpha^{-1}\La := \{\alpha^{-1}f : f \in \La\}$.
\end{defin}

\begin{llemma}[Second Direction at Arbitrary Heights]\label{main-lemma-final}
Let $R,S \subseteq Rules(\Ff)$, and $n \in \N$. Then
$$\Pi \left(\Rn\Pa(R) \cap \Rn\Pa(S)\right) = \Pi \Rn\left\langle R, S\right\rangle$$
\end{llemma}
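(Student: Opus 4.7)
The inclusion $\supseteq$ was already established in Proposition~\ref{prop:one-direction} for every $n$, so only the inclusion $\subseteq$ remains. The plan is to prove this by induction on $n$. The base case $n \leq N$ is exactly Corollary~\ref{lemma:up-to-n}, so the real work is the inductive step $n > N$.

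For the inductive step, given $f \in \Rn\Pa(R) \cap \Rn\Pa(S)$ and $w \in \pi(f)$, I would construct witness forests $f_1 \in \La(r_1) + \dots + \La(r_k)$ with each $r_i \in R$, and $f_2 \in \La(s_1) + \dots + \La(s_l)$ with each $s_j \in S$, both of height $\leq n$ and with $\pi(f_1) = \pi(f_2) \ni w$; then $\pi(f_1) \in Z_n\langle R, S\rangle$ supplies the required pathset. The construction descends along a trail belonging to $w$ in $f$, producing $f_1$ and $f_2$ level-by-level. At each node of residual height exceeding $N$, Lemma~\ref{lemma:absorption} applied to the forest type at that node offers a dichotomy: in its first alternative, a forest in the relevant $\La(\cdot)$ already has its pathset contained in $\Pi Z_n\langle\cdot,\cdot\rangle$ and can be plugged in directly; in its second alternative, the third clause of Definition~\ref{defin:approximators} translates the forced occurrence of a path in $\alpha\circ(\Pi\La(s)-\Pi\langle S_1,S_2\rangle)$ in every $g \in \La(\rho)$ into the corresponding path actually present in $f$, which can then be followed recursively. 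Once the residual height drops to $\leq N$, Lemma~\ref{lemma:existsWitnessTree} completes each side of the construction.

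The main obstacle is coordinating the two descents — the one producing $f_1$ from $R$-rules and the one producing $f_2$ from $S$-rules — so that their pathsets remain equal at every stage. The choices made on the two sides are a priori independent, and the equality $\pi(f_1) = \pi(f_2)$ is a global condition that cannot be enforced one path at a time. The interplay of Lemma~\ref{lemma:absorption}'s dichotomy with condition~(3) of $\Delta$ is precisely what makes synchronization possible: whenever the two sides would diverge on whether a given path is present, one of the two alternatives of Lemma~\ref{lemma:absorption} is ruled out by the membership of $f$ in $\Rn\Pa(R) \cap \Rn\Pa(S)$, forcing the other side into a compatible choice. Setting up the induction hypothesis in a form sufficiently strong to carry this coordination through the recursion on children-forests of height $\leq n-1$ — most naturally by stating the inductive claim for arbitrary pairs $R',S' \subseteq Rules(\Ff,\phi)$ of rule sets simultaneously — is the delicate technical piece on which the whole argument hinges.
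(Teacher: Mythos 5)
You have the right skeleton --- only the $\subseteq$ inclusion needs proving, induction on $n$ with Corollary~\ref{lemma:up-to-n} as the base, and Lemmas~\ref{lemma:absorption} and~\ref{lemma:existsWitnessTree} together with the third clause of Definition~\ref{defin:approximators} as the key tools --- but the inductive step as you describe it has a genuine gap, and you have in fact located it yourself: the synchronization of the two witness forests. The definition of $\left\langle R, S\right\rangle$ demands two forests $f_1, f_2$ with \emph{identical} pathsets, one assembled from $R$-rules and one from $S$-rules. A descent that treats one path $w$ at a time, invoking the dichotomy of Lemma~\ref{lemma:absorption} locally at each node, produces at best, for each path separately, a pair of trees witnessing that path on each side; it gives no mechanism for making the two accumulated pathsets coincide as sets, and your claim that the dichotomy ``forces the other side into a compatible choice'' is an assertion, not an argument. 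This is exactly where a path-local construction breaks down and where the proof must do something structurally different.

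What the paper does instead is work globally, at the level of rule sets rather than individual paths. Writing $\phi := \Rn\Pa(R)\cap\Rn\Pa(S)$ and $g := \Pi\Rn\left\langle R,S\right\rangle$, it first reduces to the case where all rules in $R, S$ share a root symbol $\alpha$ (a reduction your sketch omits but which the quotient argument needs), then forms an over-approximation $\widehat{\phi} \supseteq \alpha^{-1}\phi$ built from all rules evaluating to forest types on the right-hand sides of rules in $R$ and $S$, and applies the induction hypothesis at height $n-1$ to identify $\Pi\widehat{\phi}$ with a set of the form $\Pi Z_{n-1}\left\langle\cdot,\cdot\right\rangle$ --- so the synchronization at the children level is \emph{inherited from the induction hypothesis}, not constructed by hand. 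It then iteratively shrinks $\widehat{\phi}$ by deleting any rule $\rho$ for which no forest in $\La(\rho)$ has its pathset inside $\alpha\Pi\widehat{\phi}_k \cup \{\epsilon\}$. The combination of Lemma~\ref{lemma:absorption} and clause (3) of Definition~\ref{defin:approximators} is used exactly once per deleted rule, in Proposition~\ref{b:first}, to show that deletion never loses $\alpha^{-1}\phi$; once the process stabilizes, every surviving rule is realizable by a forest whose pathset lies inside $\Pi\widehat{\phi}_k$, which yields $\alpha\Pi\widehat{\phi}_k \subseteq g$ by padding, and the chain $\alpha\Pi\widehat{\phi}_k \subseteq g \subseteq \Pi\phi \subseteq \{\epsilon\}\cup\alpha\Pi\widehat{\phi}_k$ closes the inductive step. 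If you try to salvage your descent, you will find you must strengthen the per-path claim into one about entire pathsets --- at which point you are led back to something like this fixed-point construction.
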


\begin{prop}
It suffices to prove Lemma~\ref{main-lemma-final} in the case where all rules in $R, S$ have the same transition symbol $\alpha$.
\end{prop}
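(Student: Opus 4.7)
The plan is to partition $R$ and $S$ by transition symbol and reduce both sides of the claimed identity to a union indexed by this partition. For each $\alpha \in \Sigma$, set
\[R_\alpha := \{\rho \in R : \rho\ \text{has transition symbol}\ \alpha\}\]
and define $S_\alpha$ analogously. I would then prove the two identities
\[\Pi(\Rn\Pa(R) \cap \Rn\Pa(S)) = \bigcup_{\alpha \in \Sigma} \Pi(\Rn\Pa(R_\alpha) \cap \Rn\Pa(S_\alpha))\]
and
\[\Pi \Rn\left\langle R, S\right\rangle = \bigcup_{\alpha \in \Sigma} \Pi \Rn\left\langle R_\alpha, S_\alpha\right\rangle,\]
so that establishing Lemma~\ref{main-lemma-final} on each $(R_\alpha, S_\alpha)$ -- where all rules share transition symbol $\alpha$ -- immediately yields it on $(R, S)$.

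For the first decomposition, I would invoke Condition~(D) of the Main Lemma (already established), which reduces membership in $\Pa(R)$ to a tree-by-tree check on singletons. By Definition~\ref{def:p}, every trail through $\{t\}$ starts at the root of $t$, so any witnessing trace must begin with a rule $\rho$ satisfying $\{t\} \in \Delta(\rho)$; the first clause of the definition of $\Delta(\rho)$ then forces $\rho$ to have transition symbol equal to the root label of $t$. Hence for $t$ with root $\alpha$, we have $\{t\} \in \Pa(R)$ iff $\{t\} \in \Pa(R_\alpha)$, and likewise for $S$. Since $\alpha$-prefixed paths in $\pi(f)$ come exactly from trees of $f$ with root $\alpha$, partitioning the trees of each forest by root symbol yields the first identity.

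For the second decomposition, given $d \in \Rn\left\langle R, S\right\rangle$ witnessed by $f_1, f_2$ (with $\pi(f_1)=\pi(f_2)=d$, height $\leq n$, and $f_i$ written as a sum over $\La(r)$'s for $r \in R$ or $S$), I split each $f_i$ into its subforest $f_i^\alpha$ consisting of the trees with root $\alpha$. Each tree in $f_1^\alpha$ lies in some $\La(r)$ with $r \in R$ whose transition symbol is $\alpha$, hence $r \in R_\alpha$; similarly for $f_2^\alpha$ and $S_\alpha$. The pathset of $f_i^\alpha$ equals $(\pi(f_i) \cap \alpha\Sigma^*) \cup \{\epsilon\}$, so whenever both $f_1^\alpha, f_2^\alpha$ are nonempty -- precisely when $d$ contains an $\alpha$-prefixed path -- we have $\pi(f_1^\alpha) = \pi(f_2^\alpha) \in \Rn\left\langle R_\alpha, S_\alpha\right\rangle$. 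Taking the union over $\alpha$ recovers the second identity.

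The argument is essentially bookkeeping, and the only technical point requiring care is that the restricted witnesses $f_i^\alpha$ genuinely lie in $\sum_{r \in R_\alpha}\La(r)$ rather than only in the larger sum allowed by $R$. This is immediate from Definition~\ref{def:algebras}, since each $\La(\rho)$ consists of forests whose top-level trees all carry the root symbol specified by $\rho$; restricting to $\alpha$-rooted trees therefore retains a valid rule-decomposition drawn from $R_\alpha$ (resp.\ $S_\alpha$). Once both decompositions are in hand, the reduction to the single-transition-symbol case is complete.
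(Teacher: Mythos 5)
Your argument is correct and matches the paper's proof in essence: both partition the rules by transition symbol, use Condition (D) together with the first clause of the definition of $\Delta(\rho)$ to localize membership in $\Pa(\cdot)$ to a single $\alpha$-rooted tree, and use monotonicity of $\left\langle\cdot,\cdot\right\rangle$ in $R,S$ to reassemble. The paper carries this out only for the one inclusion not already supplied by Proposition~\ref{prop:one-direction}, whereas you prove both decomposition identities in full, but the underlying mechanism is the same.
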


\begin{proof}
Let any sets $R, S \subseteq Rules(\Ff)$ be given.
We can partition the set $R$ into sets $R_\alpha$ ($\alpha \in \Sigma$) according to the transition symbol, and similarly for $S$.

Recall that our goal is to show the $\supseteq$ direction, 
as the $\subseteq$ direction is already known from Proposition~\ref{prop:one-direction}.

So let $w \in \Pirn \left(\left(\Rn\Pa\left(R\right) \cap \Rn\La\left(S\right)\right)\right)$.
Then there is a tree $t$ such that $\{t\} \in \left(\left(\Rn\Pa\left(R\right) \cap \Rn\La\left(S\right)\right)\right)$  and $w \in \pi(t)$.
Let $\alpha := w_1$ be the root symbol of $t$.
In view of the definition of $\Pa$ and $\Delta$, we can restrict the statement to rules with this symbol: $\{t\} \in \left(\Rn\Pa(R^\alpha) \cap \Rn\Pa\left(S^\alpha\right)\right)$.
Now let us presume that the lemma has already been shown for the pair $R^\alpha, S^\alpha$.
Applying this, we get
$$w \in \Pirn \left(\Pa\left(R^\alpha\right) \cap \Pa\left(S^\alpha\right)\right) = \Pirn \left\langle R^\alpha, S^\alpha \right\rangle \subseteq \Pirn \left\langle R, S \right\rangle$$
\end{proof}

The remainder of this section will be devoted to completing the proof of Lemma~\ref{main-lemma-final} under the assumption that all rules in $R, S$ have the same transition symbol $\alpha$. This will then complete condition (B) of the Main Lemma.

We only need to show $\subseteq$, the other direction following from Prop \ref{prop:one-direction}.
We do induction over $n$ and prove this for all sets of rules simultaneously. We have already shown this for $n \leq N$ in Lemma \ref{lemma:up-to-n}, which serves as the inductive base.
So, for the inductive step, let $n > N$.
During the inductive step, we will assume that sets $R, S$ of rules are given where all rules have the same transition symbol $\alpha$.
Applying the previous proposition then completes the inductive step for all sets $R, S$.

Giving names to the two sides in the Lemma~\ref{main-lemma-final}, the goal is to prove equality between the path set of the language
\begin{equation}
\phi :=  \left(\Rn\Pa\left(R\right) \cap \Rn\Pa\left(S\right)\right)
\end{equation}
and the (finite) path set
\begin{equation}\label{def:g}
g := \Pi \Rn \left\langle R, S\right\rangle 
\end{equation}

\emph{Note that $\phi$ is a finite (due to $\Rn$) forest language, while $g$ is a finite set of words.}
That is, the goal of this section is to prove
$$\Pi \phi = g$$
We have $\Pi \phi \supseteq g$ by Lemma \ref{prop:one-direction}.

We write $R = \{R_1, ..., R_{n_1}\}$, and $S = \{R_1, ..., R_{n_2}\}$.
We write (recall Definition~\ref{def:algebras})
\begin{equation}\label{eq:states-right}
\begin{split}
R_i &= \left\langle s_0^i, \alpha, \{s^i_l : l = 1, ..., n_{R_i}\} \right\rangle \\
S_i &= \left\langle t_0^i , \alpha ,\{t^i_l : l = 1, ..., n_{S_i}\}\right\rangle
\end{split}
\end{equation}

When $s \in H_\Ff$, define 
\begin{equation}\label{def:rho-state}
\rho(s) := \{ R \in Rules(\Ff) : \pi^{(1)}(R) = s\}
\end{equation}

To apply the induction hypothesis, we consider
\begin{equation}\label{def:hat-phi}
\widehat{\phi} := \Rnm\Pa\left(\bigcup_{i=1}^{n_1} \bigcup_l \rho(s^i_l)\right) \cap \Rnm\Pa\left( \bigcup_{j=1}^{n_2} \bigcup_l \rho(t^j_l)\right)
\end{equation}

Let us reflect what this means: Instead of the rules in $R$, we take all the rules that evaluate to a forest type that appears on the right-hand side of a rule in $R$. (The same for $S$.)
The language 
$\Pa\left(\bigcup_{i=1}^{n_1} \bigcup_l \rho(s^i_l)\right)$
is a basic  approximation to the quotient of $\Pa(R)$ by the root symbol $\alpha$.
Note that, as removing the root symbol reduces height, we're using height $n-1$ in the definition of $\widehat{\phi}$.
Therefore, the object $\widehat{\phi}$ is a basic  approximation to the quotient of $\phi$ by the root symbol $\alpha$. Since it has smaller height, we can apply the induction hypothesis.
However, it might be larger than the quotient of $\phi$ by $\alpha$, so we'll need to do extra work before we can use the induction hypothesis to say something about $f$. This is the motivating idea behind the remainder of the proof, where we `shrink' $\widehat{\phi}$ until its pathset is equal to that of the quotient of $\phi$ by the root symbol $\alpha$.

Here it might be useful to remark that $\Pa$ and $\cup$ do not commute. For instance, $\Pa(\{\rho_1\}) \cup \Pa(\{\rho_2\}) \subseteq \Pa(\{\rho_1, \rho_2\})$, but the converse usually does not hold. The reason is that $\Pa$ looks at each trail separately and cannot `keep track' of which tree `belongs' to which rule.

Returning to the proof, since $\height(\widehat{\phi}) = n-1 < n$, we can plug $\widehat{\phi}$ into the induction hypothesis, and obtain
\begin{equation}\label{ex:ih}
\Pi\widehat{\phi} =^{(IH)} \Pi  Z_{n-1} \left\langle \left(\bigcup_{i=1}^{n_1} \bigcup_l \rho(s^i_l)\right) ,\left(\bigcup_{j=1}^{n_2} \bigcup_l \rho(t^j_l)\right)\right\rangle 
\end{equation}


As the next proposition shows, we can view $\Pi \widehat{\phi}$ as an outer approximation to $\alpha^{-1} g$ (recall Definition~\ref{def:quotient}) and $\alpha^{-1} \Pi \phi$:

\begin{prop}\label{obs:phi-phi-hat}
$$\alpha^{-1} g \subseteq \alpha^{-1} \Pi \phi \subseteq \Pi \widehat{\phi}$$
\emph{Note that $\alpha^{-1}$ refers to quotients of word languages here, not forest languages.}
$$\alpha^{-1} \phi \subseteq \widehat{\phi}$$
\emph{Note that $\alpha^{-1}$ refers to the quotient of a forest language here.}
\end{prop}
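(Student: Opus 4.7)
The plan is to split the proposition into its three inclusions and dispatch each with the same trail-lifting idea, with the middle one carrying the real content.

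First, $\alpha^{-1}g \subseteq \alpha^{-1}\Pi\phi$ is immediate from Proposition~\ref{prop:one-direction}, which gives $g \subseteq \Pi\phi$, combined with monotonicity of the word-language quotient by $\alpha$.

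For the main inclusion $\alpha^{-1}\Pi\phi \subseteq \Pi\widehat{\phi}$, I would take $w \in \alpha^{-1}\Pi\phi$, pick a witness $f \in \phi$ with $\alpha w \in \pi(f)$, and argue that $\alpha^{-1}f$ is itself a witness placing $w$ into $\Pi\widehat{\phi}$. The height bound $\height(\alpha^{-1}f) \leq n-1$ is automatic, so the task reduces to verifying $\alpha^{-1}f \in \Pa(\bigcup_{i,l} \rho(s^i_l))$ and symmetrically for $S$. This is where the common transition symbol $\alpha$ of all rules in $R$ is used: any trail $p'$ in $\alpha^{-1}f$ starts at the root of a tree $t'$ which, by definition of the quotient, sits below some $\alpha$-root in $f$, and prepending that root yields a trail $p$ in $f$ with $|p| = |p'| + 1$. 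Since $f \in \Pa(R)$, $p$ satisfies the conditions for some $R_i \in R$ via a trace $\tau$ with $\tau_1 = R_i$. Because $R_i$ has the form $s_0^i \leftarrow \alpha\{s_1^i,\ldots,s_{n_{R_i}}^i\}$, the trace-adjacency constraint forces $\tau_2 \in \rho(s^i_l)$ for some $l$. Truncating $\tau$ after its first entry then produces a trace $\tau'$ of length $|p'|$ witnessing that $p'$ satisfies the conditions for $\tau'_1 \in \bigcup_{i,l} \rho(s^i_l)$, which is exactly the membership we wanted. The argument for $S$ is identical, giving $\alpha^{-1}f \in \widehat{\phi}$ and hence $w \in \pi(\alpha^{-1}f) \subseteq \Pi\widehat{\phi}$.

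The third inclusion $\alpha^{-1}\phi \subseteq \widehat{\phi}$ then follows immediately: the argument just outlined actually showed $\alpha^{-1}f \in \Pa(\bigcup_{i,l}\rho(s^i_l)) \cap \Pa(\bigcup_{j,l}\rho(t^j_l))$ for every $f \in \phi$, and the height bound places this intersection inside $\widehat{\phi}$.

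The main obstacle is the bookkeeping around the trail-lifting step: one has to verify that the prepended trail $p$ is a legitimate trail of $f$ starting at the root of a member tree (which in turn uses that every root of a tree in $f$ must be labeled $\alpha$, a consequence of $f \in \Pa(R)$ together with all rules in $R$ having transition symbol $\alpha$, via the root-symbol clause in the definition of $\Delta$), and that truncating $\tau$ after position $1$ preserves both the trace-adjacency property and the $\Delta$-membership of the subtrees rooted at the nodes of $p'$. Edge cases where $\alpha^{-1}f$ is empty or $w = \epsilon$ are absorbed by the convention $\pi(\emptyset) = \{\epsilon\}$ that is already in use elsewhere in the appendix.
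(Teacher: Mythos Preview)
Your proposal is correct and follows essentially the same approach as the paper: the core step is the trail-lifting argument (prepend the $\alpha$-root, use the trace witnessing $f\in\Pa(R)$, truncate its first entry to land in $\bigcup_{i,l}\rho(s^i_l)$). The only difference is presentational order: the paper first proves the forest-language inclusion $\alpha^{-1}\phi\subseteq\widehat\phi$ and then derives $\alpha^{-1}\Pi\phi\subseteq\Pi\widehat\phi$ from it, whereas you go the other way around and observe the forest-language inclusion as a byproduct.
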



\begin{proof}
We already know the first inequality of the first claim since we know $g \subseteq \Pi \phi$ from Proposition \ref{prop:one-direction}.
The second inclusion of the first claim follows immediately from the second claim. Let's now prove the second claim:

Let $f \in \alpha^{-1} \phi$.
By definition of quotients of forest languages (Definition~\ref{def:quotient}), there is $f' \in \phi$ and $t_1, ..., t_k \in f'$ such that $t_i = \alpha[f_i]$ ($i = 1, ..., k$) and $f_1 + ... + f_k = f$.
Now we want to appeal to the definition of $\Pa$ (Definition~\ref{def:p}).
So let $p$ be a trail in $f$, so it is also a trail in some $f_i$.
This can be extended to a trail $\nu p$ in $t_i$ (where $\nu$ is the root of $t'$).
Since $t_i \in \phi$, we know $\{t_i\} \in \Pa(R)$.
By definition of $\Pa$, there is a trace $\tau \in \Trace_\Ff$ such that (1) $|\tau| = |p|+1$, (2) $\tau_1 \in R$, say $\tau_1 = R_k$, (3) for each $j \in \{1, ..., |p|+1\}$, the tree rooted at $(\nu p)_j$ is in $\Delta(\tau_j)$.
Now by the definition of $\Trace_\Ff$ and since $\tau_1 = R_k$, we know $\tau_2 \in \bigcup_l \rho(s^k_l) \subseteq \bigcup_{i=1}^{n_1} \bigcup_l \rho(s^i_l)$.
So let $\tau' := \tau_{2...|\tau|} \in \Trace_\Ff$. Then we have (1) $|\tau'| = |p|$, (2) $\tau'_1 \in \bigcup_{i=1}^{n_1} \bigcup_l \rho(s^i_l)$, (3) for each $j \in \{1, ..., |p|\}$, the tree rooted at $p_j$ is in $\Delta(\tau'_j)$.
By definition of $\Pa$, we know $f \in \Pa\left(\bigcup_{i=1}^{n_1} \bigcup_l \rho(s^i_l)\right)$.

The proof for the $S$-part is analogous. Together, we get $f \in \widehat{\phi}$.
\end{proof}

We now want to prove that $\Pi \phi = g$, which as explained above is exactly the statement of the Main Lemma to be proven.
We construct monotonically decreasing sequences $\widehat{\phi}_k \subseteq \freedisth$ and $W^1_k \subseteq R$, $W^2_k \subseteq S$ ($k = 0, 1, ...$).
Our proof strategy will be to `sandwich' $\Pi\phi$ between this decreasing sequence and $g$.


To make notation nicer, we will write (recall (\ref{def:hat-phi})):
\begin{align*}
S_k^1 := \bigcup_{i \in W^1_k} \bigcup_l \rho(s^i_l) \\
S_k^2 := \bigcup_{i \in W^2_k} \bigcup_l \rho(t^i_l)
\end{align*}

\begin{defin}
Here is the inductive definition:
\begin{itemize}
\item $\widehat{\phi}_0 := \widehat{\phi}$, $W^1_0 := R$, $W^2_0 := s$ 

\item  We are in the $k+1$-th step.

\begin{enumerate}
\item \emph{Case I}: For each $\rho \in W^1_k \cup W^2_k$, there is a forest $h \in \La(\rho)$ such that $\pi(h) \subseteq \left(\alpha \Pi \widehat{\phi}_k \cup \{\epsilon\}\right)$.

We take $$W^s_{k+1} := W^s_k$$ for $s = 1,2$.

\item \emph{Case II}: Case I is not satisfied. That is, there is $\rho \in W^z_k$ ($z \in \{1,2\}$), such that, for each $h \in \La(\rho)$ we have $\pi(h) \not\subseteq \alpha \Pi \widehat{\phi}_k \cup \{\epsilon\}$.

For definiteness, we take $z = 1$. So $\rho \in W^1_k$. 
Set $$W^1_{k+1} := W^1_k - \{\rho\},\ \ W^2_{k+1} := W^2_k$$
The construction is symmetric if instead $z=2$.
\end{enumerate}
This concludes the inductive definition of $W^1_k, W^2_k$. In either of the two cases, we set $$\widehat{\phi}_{k+1} := \widehat{\phi}_{k} \cap \left(\Pa(S^1_{k}) \cap \Pa(S^2_{k})\right)$$

\end{itemize}
\end{defin}

Since $|W^1_k| + |W^2_k|$ gets strictly smaller whenever we are in Case II, we can reach Case II only finitely often (If the sets happen to become empty, Case I becomes trivially true.), and the sequences $W^1_k, W^2_k$ become stationary.

Having defined the sequences $\widehat{\phi}_k$, $W^i_k$, we first note the following: 

\begin{prop}\label{b:second}
\begin{enumerate}
\item  For all $k$, we have $$\widehat{\phi}_{k+1} = \left(\Rnm\Pa(S^1_k) \cap \Rnm\Pa(S^2_k)\right)$$
\item \label{b:seventh} $\Pi \widehat{\phi}_{k+1} = \Pi  \Rnm \left(\left\langle S^1_k, S^2_k    \right\rangle\right)$
\end{enumerate}
\end{prop}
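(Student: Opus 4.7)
The plan is to prove part (1) by induction on $k$ and to deduce part (2) as an immediate corollary, via the inductive hypothesis of the enclosing proof of Lemma~\ref{main-lemma-final} applied at height $n-1$. The only ingredient I need beyond the definitions is monotonicity of the map $R \mapsto \Pa(R)$: whenever $R \subseteq R'$, any trace witnessing membership in $\Pa(R)$ starts at some $\rho \in R \subseteq R'$ and hence also witnesses membership in $\Pa(R')$, which is immediate from Definition~\ref{def:p}. From the two-case inductive definition of $W^1_k, W^2_k$, we also have $W^i_{k+1} \subseteq W^i_k$, hence $S^i_{k+1} \subseteq S^i_k$, hence $\Pa(S^i_{k+1}) \subseteq \Pa(S^i_k)$.

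For part (1), the base case $k=0$ unfolds: by definition $\widehat{\phi}_1 = \widehat{\phi}_0 \cap \Pa(S^1_0) \cap \Pa(S^2_0)$, and since $\widehat{\phi}_0 = \widehat{\phi} = \Rnm\Pa(S^1_0) \cap \Rnm\Pa(S^2_0) \subseteq \Pa(S^1_0) \cap \Pa(S^2_0)$, the recursion collapses to $\widehat{\phi}_1 = \widehat{\phi}_0$, which matches the claim. For the inductive step, assume $\widehat{\phi}_k = \Rnm\Pa(S^1_{k-1}) \cap \Rnm\Pa(S^2_{k-1})$. Then
\[
\widehat{\phi}_{k+1} \;=\; \widehat{\phi}_k \cap \Pa(S^1_k) \cap \Pa(S^2_k) \;=\; \Rnm\Pa(S^1_{k-1}) \cap \Rnm\Pa(S^2_{k-1}) \cap \Pa(S^1_k) \cap \Pa(S^2_k),
\]
and since $\Rnm$ commutes with intersection and $\Pa(S^i_k) \subseteq \Pa(S^i_{k-1})$ forces $\Pa(S^i_{k-1}) \cap \Pa(S^i_k) = \Pa(S^i_k)$, the right-hand side collapses to $\Rnm\Pa(S^1_k) \cap \Rnm\Pa(S^2_k)$, completing the induction.

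For part (2), I apply $\Pi$ to the identity just established and invoke the induction hypothesis of Lemma~\ref{main-lemma-final} at height $n-1$. This is available because the enclosing argument inducts on $n$ and is explicitly stated to hold simultaneously for all pairs of rule sets, so in particular for $S^1_k, S^2_k$. The IH then gives $\Pi(\Rnm\Pa(S^1_k) \cap \Rnm\Pa(S^2_k)) = \Pi \Rnm \left\langle S^1_k, S^2_k \right\rangle$, which combined with part (1) yields exactly $\Pi \widehat{\phi}_{k+1} = \Pi \Rnm \left\langle S^1_k, S^2_k \right\rangle$.

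I do not anticipate a real obstacle: the proposition is a bookkeeping step that packages monotonicity of $\Pa$ with one invocation of the enclosing inductive hypothesis in preparation for the subsequent Case~I/Case~II analysis. The only subtlety worth flagging is that $S^1_k, S^2_k$ need not consist of rules sharing a common transition symbol the way $R, S$ do after the earlier reduction, but this is harmless because the same-symbol reduction was only applied at the outer top-level step while the IH of Lemma~\ref{main-lemma-final} itself carries no such restriction.
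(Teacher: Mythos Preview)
Your proof is correct and matches the paper's approach: the paper simply says part (1) is ``immediate from the definition'' and part (2) ``follows from (a) and the induction hypothesis,'' which is precisely what you have unpacked via the monotonicity of $\Pa$ and the decreasing sequences $S^i_k$. Your remark that the outer induction hypothesis carries no same-symbol restriction on $S^1_k, S^2_k$ is a worthwhile clarification that the paper leaves implicit.
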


\begin{proof}
\begin{enumerate}
\item is immediate from the definition.

\item follows from (a) and the induction hypothesis, analogous to (\ref{ex:ih}).

\end{enumerate}
\end{proof}

Recall that in Proposition~\ref{obs:phi-phi-hat} we established that $\widehat{\phi}$ extends $\alpha^{-1} \phi$.
We now show that this remains true for all $\widehat{\phi}_k$.
This is the place where the choice of $N$ and $\Delta(\rho)$ come in crucially.

\begin{prop}\label{b:first}
\begin{equation}
 \forall k : \alpha^{-1} \phi \subseteq \widehat{\phi}_k
\end{equation}

\end{prop}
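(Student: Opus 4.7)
The proof goes by induction on $k$. The base case $k = 0$ is exactly Proposition~\ref{obs:phi-phi-hat}. For the inductive step, Case~I of the iterative construction adds no new constraint (since $W^i_{k+1} = W^i_k$ forces $S^i_{k+1} = S^i_k$ and hence $\widehat{\phi}_{k+1} = \widehat{\phi}_k$), so we focus on Case~II, in which a rule $\rho$ has been removed from (say) $W^1_k$. As the $S^2$ side is unchanged, it suffices by the IH and Proposition~\ref{b:second}(1) to show $\alpha^{-1}\phi \subseteq \Pa(S^1_{k+1})$.

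Given $f \in \alpha^{-1}\phi$ and a trail $p$ in $f$, the trail lies in some $f_j$ with $t_j := \alpha[f_j]$ and $\{t_j\} \in \phi$ (by property~(D) of the Main Lemma, applied to $\phi = \Rn\Pa(R) \cap \Rn\Pa(S)$). From $\{t_j\} \in \Pa(R)$ we obtain a valid trace $\tau = (R_m, \tau_2, \dots)$ for the extended trail $\nu p$ in $t_j$, and the derived trace for $p$ starts at $\tau_2 \in \bigcup_l \rho(s^m_l)$. This rule lies in $S^1_{k+1}$ whenever $R_m \neq \rho$, so the inductive step reduces to the key claim that for every $f_j \in \widehat{\phi}_k$, the tree $\{\alpha[f_j]\}$ does not lie in $\Delta(\rho)$; then any valid trace at the root of $t_j$ must begin with some $R_m \neq \rho$.

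To prove the key claim I negate clause~(3) of Definition~\ref{defin:approximators} with $S_1 := S^1_{k-1}$, $S_2 := S^2_{k-1}$ (or $S^1_0, S^2_0$ for $k=0$). By Proposition~\ref{b:second}(2) we have $\pi(f_j) \subseteq \Pi\widehat{\phi}_k = \Pi \Rnm \langle S_1, S_2\rangle \subseteq \Pi\langle S_1, S_2\rangle$, so the conclusion $\pi(t_j) \cap \alpha \circ (\Pi\La(s) - \Pi\langle S_1, S_2\rangle) = \emptyset$ is automatic for every $s \in H_\Ff$. To secure the hypothesis I would apply Lemma~\ref{lemma:absorption} at $n-1 \geq N$ with $s := s^\rho_l$ running over the right-hand-side forest types of $\rho$. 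If alternative~(2) of the lemma holds for some $l$, then for every $g \in \La(\rho)$ the definition of $\rho$ forces the child-forest below an $\alpha$-root to contain a subtree $\{t\}$ with $\phi(\{t\}) = s^\rho_l$, and (2) applied to $\{t\}$ delivers $w \in \pi(\{t\}) \cap (\Pi\La(s^\rho_l) - \Pi\langle S_1, S_2\rangle)$; then $\alpha w \in \pi(g)$ is the required witness, which combined with the automatic vanishing of the conclusion contradicts $\{t_j\} \in \Delta(\rho)$.

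The main obstacle is the complementary situation in which alternative~(1) of the absorption lemma holds for every $l$, furnishing $f_l \in \La(s^\rho_l)$ with $\pi(f_l) \subseteq \Pi\widehat{\phi}_k$ for each $l$. Here I must derive a direct contradiction with the Case~II hypothesis that no $h \in \La(\rho)$ satisfies $\pi(h) \subseteq \alpha\Pi\widehat{\phi}_k \cup \{\epsilon\}$, by assembling an explicit child-forest with evaluation $h_{\mathrm{sum}} := \sum_l s^\rho_l$, with tree-evaluations equal to exactly $\{s^\rho_1, \dots, s^\rho_{n_\rho}\}$, and with path-set contained in $\Pi\widehat{\phi}_k$. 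The subtle point is that each $f_l$ is merely a forest evaluating to $s^\rho_l$, not necessarily a single tree, so the naive sum $\sum_l f_l$ may produce extraneous tree-evaluations that push it out of $\La^{\rho}(h_{\mathrm{sum}})$ and prevent the construction of a witness in $\La(\rho)$. I expect to resolve this by exploiting the fact that $\La(\rho) \neq \emptyset$ forces the existence of trees evaluating to each $s^\rho_l$, and then using horizontal idempotency together with the freedom in choosing the representatives $f_l$ to replace individual trees by single-tree substitutes with path-set still inside $\Pi\widehat{\phi}_k$, yielding the forbidden $h \in \La(\rho)$ and completing the key claim.
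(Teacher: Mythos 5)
Your proposal follows the paper's route: induction on $k$ with Proposition~\ref{obs:phi-phi-hat} as the base case, reduction of the inductive step to showing that a tree $\alpha[f_j]$ with $f_j$ inside $\widehat{\phi}_k$ cannot belong to $\Delta(\rho)$ for a removed rule $\rho$, and a verification of this via clause (3) of Definition~\ref{defin:approximators} together with Lemma~\ref{lemma:absorption} and Proposition~\ref{b:second}. Two remarks on the set-up before the main point. First, $W^1_{k+1}$ excludes \emph{all} rules removed at stages $\le k+1$, not only the most recent one, so your key claim must be established for every removed rule $\rho'$, each time relative to the stage $k'$ at which $\rho'$ was discarded and to $\widehat{\phi}_{k'}$ (the paper does exactly this: it locates the stage $k'<k-1$ of removal and works with $\widehat{\phi}_{k'}$ and $\left\langle S^1_{k'-1},S^2_{k'-1}\right\rangle$); your argument goes through with the monotonicity $\widehat{\phi}_k\subseteq\widehat{\phi}_{k'}$, but as written it only treats the latest removal. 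Second, your indexing of when Case~I trivializes the step is off by one ($S^i_{k+1}=S^i_k$ gives $\widehat{\phi}_{k+2}=\widehat{\phi}_{k+1}$, since $\widehat{\phi}_{k+1}$ is built from $S^i_k$); this is cosmetic.

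The substantive issue is the step you yourself flag as ``the main obstacle,'' and it is not closed by your sketch. When alternative~(1) of Lemma~\ref{lemma:absorption} holds for every $l$, you obtain forests $f_l\in\La(s^\rho_l)=\phi^{-1}(s^\rho_l)$ with $\pi(f_l)\subseteq\Pi\widehat{\phi}_{k'}$, but membership in $\La(\rho)$ is a constraint on the \emph{set of types of the individual child trees}, and the trees of $f_l$ need not individually evaluate to any $s^\rho_{l'}$; horizontal idempotency lets you merge siblings of equal type but does not convert a multi-tree decomposition $s^\rho_l=h_{l,1}+\dots+h_{l,m_l}$ into a single tree of type $s^\rho_l$ with the same small pathset, and the nonemptiness of $\La(\rho)$ only supplies trees of the right types with \emph{uncontrolled} pathsets. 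So the forbidden witness $h\in\La(\rho)$ is not produced. It is worth noting that the paper's own proof does not perform your case split at all: it asserts directly that the Case~II condition for $\rho$ yields a forest type $s^q_l$ on the right-hand side of $\rho$ such that \emph{every} $h\in\La(s^q_l)$ has $\pi(h)\not\subseteq\Pi\widehat{\phi}_{k'}$, and then feeds only that $s^q_l$ into Lemma~\ref{lemma:absorption} (whose alternative~(1) is then excluded by the choice of $s^q_l$). That assertion is precisely the contrapositive of the step you could not complete, so your reorganization has isolated rather than resolved the difficulty. To turn your proposal into a proof you would need either to justify that assertion (e.g., by strengthening Lemma~\ref{lemma:absorption} or the choice of witnesses so that alternative~(1) returns a forest all of whose trees come from $\La(r)$ for a single rule $r$ evaluating to $s^\rho_l$ --- each such tree then has type $s^\rho_l$ and the assembly $\alpha[\{c_1,\dots,c_{n_\rho}\}]$ lands in $\La(\rho)$) or to find a different contradiction in that branch.
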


\begin{proof} By induction over $k$. In the case of $k=0$, we established this in Proposition~\ref{obs:phi-phi-hat}.

In the inductive step, we want to show $\widehat{\phi}_k \supset \alpha^{-1} \phi$ and already know $\widehat{\phi}_{k-1} \supset \alpha^{-1} \phi$.
Assume $\widehat{\phi}_k \not\supset \alpha^{-1} \phi$.
So there is a forest
\begin{equation}
f \in \left(\alpha^{-1} \phi  - \widehat{\phi}_k\right) = \left((\alpha^{-1} \phi \cap \widehat{\phi}_{k-1}) - \widehat{\phi}_k\right)
\end{equation}

We know that $f$ is in $\widehat{\phi}_{k-1} - \widehat{\phi}_k$, which by Proposition~\ref{b:second} is equal to
$$Z_{n-1} \left[(\Pa(S^1_{k-2}) \cap \Pa(S^2_{k-2})) - (\Pa(S^1_{k-1}) \cap \Pa(S^2_{k-1})) \right]$$

We have $f \in \alpha^{-1} \phi$. This means there is a forest $f_0 \in \phi$ such that $f$ is the collection of the children of the root nodes in $f_0$ (all root nodes in $\phi$ have label $\alpha$).

Recall that, by definition of $\Pa(R)$ (Definition~\ref{def:p}), it is the language of forests where each trail satisfies the conditions for some $\rho \in R$.

From the definition of $\Pa$ and $\phi \in \Pa(R)$, we know that each maximal trail in $f_0$ satisfies the conditions for some rule in $R$.
Also, there is a trail $p'$ in $f$ that does not satisfy the condition for any forest type in $S^1_{k-1}$.
 This trail extends to a trail  $p$ in $f_0$ that does not satisfy the condition for any rule in $W^1_{k-1}$ (+).
The trail $p$ satisfies the conditions for some rule $\rho \in R$, since $p \in \phi$ and $\phi \in \Pa(R)$.
 So the trail $p$ satisfies the conditions for some rules in $R - W^1_{k-1}$ but for no rule in $W^1_{k-1}$.
From (+), we know $\rho \in R - W^1_{k-1}$.
So $\rho$ was removed at some prior stage $k' < k-1$ of the construction.

So, in stage $k'$, Case II was reached, and for every $h \in \La(\rho)$, we have $\pi(h) \not\subseteq \alpha\Pi\widehat{\phi}_{k'} \cup \{\epsilon\}$.
Then there is a forest type $s_l^q$ from the third component of $\rho$ such that, for any $h \in \La(s_l^q)$, we have $h \not\subseteq \Pi\widehat{\phi}_{k'}$.

From Proposition~\ref{b:second}.2, we have
\begin{equation}\label{eq:c-conc}
\Pi \widehat{\phi}_{{k'}} = \Pi \Rnm  \left\langle S^1_{{k'}-1}, S^2_{{k'}-1}\right\rangle \subseteq \Pi\left\langle S^1_{{k'}-1}, S^2_{{k'}-1}\right\rangle 
\end{equation}

Now take $I := \Pi(\La(s_l^q)) - \Pi\left\langle S^1_{{k'}-1}, S^2_{{k'}-1}\right\rangle $.
By (\ref{eq:c-conc}), (*) $I \cap \Pi \widehat{\phi}_{{k'}} = \emptyset$.
At this point, we can apply Lemma \ref{lemma:absorption}:
Since $n \geq N$, for each $h \in \La(s_l^q)$, we have $\pi(h) \cap I \neq \emptyset$.
Therefore, for any $h \in \La(\rho)$, we have $\pi(h) \cap (\alpha \circ I) \neq \emptyset$.
From the third point in the definition of $\Delta$ (Definition~\ref{defin:approximators}), we conclude: ($\dagger$) For each $h \in \Delta(\rho)$, we have $\pi(h) \cap (\alpha \circ I) \neq \emptyset$.

We know $p$ satisfies the conditions for $\rho$, so $\pi(f_0) \in \Delta(\rho)$.
Since $f_0 \in \phi$, we get $\pi(\phi) \cap (\alpha \circ I) \neq \emptyset$ from ($\dagger$).
As we have $\widehat{\phi}_{k-1} \supset \alpha^{-1} \phi$ from the inductive hypothesis, we obtain $\pi(\widehat{\phi}_{k-1}) \cap (\alpha \circ I) \neq \emptyset$.
On the other hand, $k' < k-1$ and thus $\widehat{\phi}_{k'} \supset \widehat{\phi}_{k-1}$.
This is a contradiction to $(*)$. Thus, the initial assumption was incorrect.
\end{proof}
Putting together what we have shown so far, we get a chain of inequalities:
\begin{equation}
\alpha^{-1} g \subseteq \alpha^{-1} \Pi \phi \subseteq \dots \subseteq \Pi\widehat{\phi}_{3} \subseteq \Pi\widehat{\phi}_{2} \subseteq \Pi\widehat{\phi}_{1} \subseteq \Pi\widehat{\phi}
\end{equation}
In order to sandwich $\alpha^{-1} \Pi \phi$ between $\alpha^{-1} g$ and the decreasing sequence $\Pi\widehat{\phi}_{k}$, we show that the sequence ultimately takes on the value $\alpha^{-1} g$.
First, we show

\begin{prop}\label{obs:large-k}
For $k$ such that Case II is never reached for $k' \geq k-1$, we have 
\begin{equation}
 \label{b:fk-g} 
\alpha (\Pi \widehat{\phi}_{k}) \subseteq g
\end{equation}
where $\alpha (\Pi \widehat{\phi}_k) := \{\alpha w : w \in \Pi\widehat{\phi}_k\}$

Since Case II is reached only finitely many often, there in particular is such a $k$.

\end{prop}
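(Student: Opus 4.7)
The plan is, for each $w \in \Pi\widehat{\phi}_k$, to exhibit explicit forests $g_1, g_2$ of height $\leq n$ with $\pi(g_1) = \pi(g_2) \ni \alpha w$, $g_1$ a sum of $\La(R')$'s for $R' \in R$, and $g_2$ a sum of $\La(S')$'s for $S' \in S$; this places $\alpha w$ in $\Pi Z_n\langle R, S\rangle = g$. In fact I will arrange that $\pi(g_1) = \pi(g_2) = \alpha\Pi\widehat{\phi}_k \cup \{\epsilon\}$, so the equality of pathsets is automatic.

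By Proposition~\ref{b:second}(2) we have $\Pi\widehat{\phi}_k = \Pi Z_{n-1}\langle S^1_{k-1}, S^2_{k-1}\rangle$, so for each $v \in \Pi\widehat{\phi}_k$ one can choose witnesses $f_1^v, f_2^v$ of height $\leq n-1$ with $\pi(f_1^v) = \pi(f_2^v) \ni v$, where $f_1^v$ is a sum of $\La(\rho)$'s with $\rho \in S^1_{k-1}$, and analogously for $f_2^v$. Setting $F_1 := \sum_v f_1^v$ and $F_2 := \sum_v f_2^v$ yields forests of height $\leq n-1$ with $\pi(F_1) = \pi(F_2) = \Pi\widehat{\phi}_k$, and each top-level tree of $F_1$ comes with a chosen $i \in W^1_{k-1}$ such that its forest type lies in $\{s^i_l : l\}$, with the symmetric statement for $F_2$.

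Next, by the hypothesis the Case~I condition holds with $W^s_k = W^s_{k-1}$ and $\widehat{\phi}_k$: for each $R_i \in W^1_k$ there is $h_{R_i} \in \La(R_i)$ with $\pi(h_{R_i}) \subseteq \alpha\Pi\widehat{\phi}_k \cup \{\epsilon\}$. From one top-level tree of $h_{R_i}$ I would extract, for each $l$, a child $e^i_l$ of forest type $s^i_l$, height $\leq n-1$, and satisfying $\pi(e^i_l) \subseteq \Pi\widehat{\phi}_k$. Analogous fillers $(e')^j_l$ of forest type $t^j_l$ are obtained for $S_j \in W^2_k$. Grouping the top-level trees of $F_1$ by their chosen index $i$ into sets $C_i$, with $L_i$ the set of $l$-values already occurring among forest types in $C_i$, I form
\[
\tau_i \;:=\; \alpha\bigl[\,C_i \,\cup\, \{e^i_l : l \in \{1, \dots, n_{R_i}\} \setminus L_i\}\,\bigr],
\]
whose root-children forest types are exactly $\{s^i_l : l\}$, so $\tau_i \in \La(R_i)$ with height $\leq n$. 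Then $g_1 := \sum_i \tau_i \in \sum_{R_i \in W^1_{k-1}} \La(R_i) \subseteq \sum_{R' \in R} \La(R')$, and $g_2$ is built symmetrically.

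Finally I would verify $\pi(g_1) = \alpha\Pi\widehat{\phi}_k \cup \{\epsilon\}$: the $\subseteq$ inclusion is immediate since each $\tau_i$ contributes only $\epsilon$ and words $\alpha v$ with $v \in \pi(C_i) \cup \bigcup_l \pi(e^i_l) \subseteq \Pi\widehat{\phi}_k$; the $\supseteq$ inclusion holds because every $v \in \Pi\widehat{\phi}_k = \pi(F_1)$ lies in some top-level tree of $F_1$, hence in some $C_i$, giving $\alpha v \in \pi(\tau_i) \subseteq \pi(g_1)$. The same argument yields $\pi(g_2) = \alpha\Pi\widehat{\phi}_k \cup \{\epsilon\}$, whence $\pi(g_1) = \pi(g_2)$, $\alpha w \in \pi(g_1)$, and both heights are $\leq n$; thus $\alpha w \in \Pi Z_n\langle R, S\rangle = g$. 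The main obstacle, and where the Case~II-free hypothesis is crucial, is precisely this pathset-matching: lifting $F_1$ by a root $\alpha$ forces padding each group $C_i$ to hit the exact right-hand side $\{s^i_l : l\}$ of $R_i$, and a priori those completions could leak paths outside $\Pi\widehat{\phi}_k$ and break the match with $\pi(g_2)$; Case~I is exactly what keeps the fillers inside $\Pi\widehat{\phi}_k$.
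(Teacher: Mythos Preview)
Your proposal is correct and follows essentially the same approach as the paper's proof. Both arguments hinge on exactly the idea you isolate at the end: use the Case~I witnesses $h_{R_i}$ (resp.\ $h_{S_j}$) to extract ``fillers'' $e^i_l \in \La(s^i_l)$ whose pathsets stay inside $\Pi\widehat{\phi}_k$, then combine these with the $Z_{n-1}\langle S^1_{k-1}, S^2_{k-1}\rangle$ witnesses (your $F_1,F_2$) to promote membership from $\langle S^1_{k-1},S^2_{k-1}\rangle$ up to $\langle W^1_k,W^2_k\rangle \subseteq \langle R,S\rangle$. The paper phrases this promotion at the level of pathsets (for each $d\in Z_{n-1}\langle S^1_k,S^2_k\rangle$ it enlarges $d$ to $d' = d \cup \pi(\text{fillers})$ and argues $d' \in \sum_\rho\sum_l \pi\La(s^\rho_l)\cap\sum_{\rho'}\sum_j\pi\La(t^{\rho'}_j)$, then lifts by $\alpha$), whereas you carry out the construction once, globally, by building explicit forests $g_1,g_2$ with $\pi(g_1)=\pi(g_2)=\alpha\Pi\widehat{\phi}_k\cup\{\epsilon\}$; this is a cosmetic difference only. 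One small point you leave implicit: the height bound $\operatorname{height}(e^i_l)\le n-1$ follows because $\pi(h_{R_i})\subseteq \alpha\Pi\widehat{\phi}_k\cup\{\epsilon\}$ and every word in $\Pi\widehat{\phi}_k=\Pi Z_{n-1}\langle S^1_{k-1},S^2_{k-1}\rangle$ has length $\le n-1$.
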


\begin{proof}

By assumption, for each $\rho \in W^1_k$, there is $\phi \in \La(\rho)$ such that $\Pi(\phi) \subseteq \Pi\widehat{\phi}_k$.
Also, for each $\rho' \in W^2_k$, there is $\phi \in \La(\rho')$ such that $\Pi(\phi) \subseteq \Pi\widehat{\phi}_k$.

Therefore, for each $\rho \in W^1_k$ and each $\rho'$, there is $\phi \in \La(s^\rho_l)$ such that $\Pi(\phi) \subseteq \alpha^{-1}\Pi\widehat{\phi}_k$.
And same for $W^2_k, t^i_l$. ($\dagger$)

By Proposition \ref{b:second} (\ref{b:seventh}), we know
\begin{equation}\label{eq:inter}
\alpha \Pi\widehat{\phi}_k = \alpha \Pi \Rnm \left\langle \left(\bigcup_{\rho \in W^1_k} \bigcup_l s^\rho_l \right),\left( \bigcup_{\rho' \in W^2_k} \bigcup_j t^{\rho'}_j\right)\right\rangle
\end{equation}

Now let $d \in \Rnm \left\langle \bigcup_{\rho \in W^1_k} \bigcup_l s^\rho_l , \bigcup_{\rho' \in W^2_k} \bigcup_j t^{\rho'}_j\right\rangle$,
so there are forest types $\sigma_j := s^{\rho}_l$ ($\rho \in W^1_k$) such that
$d = \pi(f_1 + ... + f_k)$
with $f_j \in \La(\sigma_j)$.
Using $(\dagger)$, for each $\rho \in W^1_k$ and each $j$, there is $\phi_{\rho,j} \in \La(s^\rho_l)$ such that $\Pi(\phi) \subseteq \alpha^{-1}\Pi\widehat{\phi}_k$.
If we set $d' := d \cup \pi(\sum_{\rho,j} \phi_{\rho,j})$, then we have $d \subseteq d'$, but also
$$d' \in\Rnm \left(\left( \sum_{\rho \in W^1_k} \sum_l\pi \La\left(s^\rho_l\right)\right) \cap  \left(\sum_{\rho' \in W^2_k} \sum_j\pi \La\left(t^{\rho'}_j\right)\right)\right)$$
Thus, $d \subseteq d' \subseteq \Pi\Rnm \left( \sum_{\rho \in W^1_k} \sum_l\pi \La\left(s^\rho_l\right) \cap  \sum_{\rho' \in W^2_k} \sum_j\pi \La\left(t^{\rho'}_j\right)\right)$

Putting this together with (\ref{eq:inter}), we obtain:
$$\alpha \Pi\widehat{\phi}_k = \alpha \Pi \left( \sum_{\rho \in W^1_k} \sum_j\pi\Rnm \La\left(s^\rho_j\right) \cap  \sum_{\rho' \in W^2_k} \sum_j\pi\Rnm \La\left(t^{\rho'}_j\right)\right)$$

Considering how paths of forests are built from paths of their children, we can rewrite this as
$$\left(\alpha \Pi\widehat{\phi}_k\right) \cup \{\epsilon\} =\Pi  \left(\sum_{\rho \in W^1_k} \pi\Rn\La(\rho) \cap \sum_{\rho' \in W^2_k} \pi\Rn\La(\rho')\right)$$

In view of the definition of $\left\langle \cdot,\cdot\right\rangle$, the right-hand side is a subset of $\Pi\Rn\left\langle W^1_k, W^2_k\right\rangle$. 
We therefore get
$$\alpha \Pi\widehat{\phi}_k \subseteq \Pi \Rn \left\langle W^1_k, W^2_k\right\rangle$$
Now $W^1_k \subseteq R$, $W^2_k \subseteq S$, and thus
$$\alpha \Pi\widehat{\phi}_k \subseteq \Pi \Rn \left\langle R, S\right\rangle$$
The expression on the right side is equal to $g$ by definition of $g$ (\ref{def:g}).

Taken together, we have shown $\alpha\Pi\widehat{\phi}_k \subseteq g$.
\end{proof}

As a converse to the last observation, we have:
\begin{prop}
$$g \subseteq \Pi \phi \subseteq \left(\{\epsilon\} \cup (\alpha \Pi \widehat{\phi}_k)\right)$$
\end{prop}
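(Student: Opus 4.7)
The first inclusion $g \subseteq \Pi\phi$ is exactly the content of Proposition~\ref{prop:one-direction}, so it requires no new argument. The substance of the claim is the second inclusion $\Pi\phi \subseteq \{\epsilon\} \cup \alpha\Pi\widehat{\phi}_k$, which I plan to reduce to the forest-level inclusion $\alpha^{-1}\phi \subseteq \widehat{\phi}_k$ already established in Proposition~\ref{b:first}.

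First I would observe that every forest $f \in \phi$ has only $\alpha$-labeled roots: since $\phi \subseteq \Pa(R)$, every trail in $f$ (in particular each length-one trail consisting of a single root node) must satisfy the conditions for some $\rho \in R$; by Definition~\ref{def:p} together with the first clause of Definition~\ref{defin:approximators}, the root symbol must equal the transition symbol of $\rho$, which by the reduction preceding the lemma is $\alpha$ for every $\rho \in R$. Consequently every $w \in \Pi\phi$ is either $\epsilon$ or of the form $\alpha w'$, and the task reduces to showing $w' \in \Pi\widehat{\phi}_k$ in the latter case.

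Next I would record the identity $\Pi(\alpha^{-1}\phi) = \alpha^{-1}\Pi\phi$, where the left-hand side uses the forest-language quotient of Definition~\ref{def:quotient} while the right-hand side is the usual word-language quotient. This is a direct unfolding: for each individual forest $f$, stripping an initial $\alpha$ from paths rooted at $\alpha$-nodes yields exactly the paths of the children-forest $\alpha^{-1}f$, so $\pi(\alpha^{-1}f) = \alpha^{-1}\pi(f)$; taking the union over $f \in \phi$ gives the stated identity. Applying $\Pi$ to both sides of Proposition~\ref{b:first} then yields $\alpha^{-1}\Pi\phi = \Pi(\alpha^{-1}\phi) \subseteq \Pi\widehat{\phi}_k$. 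Combining: for any $w \in \Pi\phi$, either $w = \epsilon$, or $w = \alpha w'$ with $w' \in \alpha^{-1}\Pi\phi \subseteq \Pi\widehat{\phi}_k$, so $w \in \alpha\Pi\widehat{\phi}_k$, as required.

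There is no real obstacle in the proof of this proposition; the substantive work was already done in Proposition~\ref{b:first}, where Definition~\ref{defin:approximators} and Lemma~\ref{lemma:absorption} were invoked to ensure that iteratively tightening $\widehat{\phi}_k$ never shrinks it below the forest-language quotient $\alpha^{-1}\phi$. The only care required here is to keep forest quotients and word quotients notationally distinct and to treat the empty path separately, since $\epsilon$ lies in every $\pi(f)$ but not in $\alpha\Pi\widehat{\phi}_k$.
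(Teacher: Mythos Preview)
Your proposal is correct and follows essentially the same approach as the paper: both reduce the second inclusion to Proposition~\ref{b:first} via the observation that all roots in forests of $\phi$ are $\alpha$-labeled, then push $\Pi$ through the forest quotient. Your version is slightly more explicit in justifying the $\alpha$-root fact and in separating the forest-language and word-language quotients, but the underlying argument is the same.
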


\begin{proof}
We have shown the first inclusion, $g \subseteq \Pi \phi$, previously (Proposition~\ref{prop:one-direction}).
Let's consider the second one, $\Pi \phi \subseteq \left(\{\epsilon\} \cup (\alpha \Pi \widehat{\phi}_k)\right)$.
We already know $\alpha^{-1}\phi \subseteq \widehat{\phi}_k$ (Lemma~\ref{b:first}).
This entails $\Pi(\alpha^{-1}\phi) \subseteq \Pi\widehat{\phi}_k$.
Since every tree occurring in elements of $\phi$ has $\alpha$ as its root node symbol, we have $\Pi(\phi) = \{\epsilon\} \cup \alpha\Pi(\alpha^{-1}\phi)$.
Thus, we conclude $\Pi \phi \subseteq \{\epsilon\} \cup (\alpha \Pi \widehat{\phi}_k)$.
\end{proof}

From the last two Propositions, for large $k$, we get a chain of inclusions
\begin{equation}\label{eq:chain}
\alpha\Pi\widehat{\phi}_k \subseteq g \subseteq \Pi \phi \subseteq \left(\{\epsilon\} \cup (\alpha \Pi \widehat{\phi}_k)\right)
\end{equation}
Considering $\epsilon \in g$ (any pathset must contain $\epsilon$), we deduce $$\left(\{\epsilon\} \cup (\alpha\Pi\widehat{\phi}_k)\right) = g$$
and, looking at the inclusion chain (\ref{eq:chain}) again $$g = \Pi \phi$$
This concludes, first, the inductive step, and thus the entire proof of the Main Lemma.

\newpage
\section{Proof of Proposition~\ref{prop:2-wreath}}\label{sec:proof-2-wr}

Let $\Ff$ be two-distributive, and let $\phi : \Sigma^\Delta \rightarrow \Ff$ be a morphism.
In Definition~\ref{def:pathsets}, we defined $\freedisth$ as the image of $\pi$ and thus a set of pathsets.
However, we can also view it as a monoid of forest types:
Let $\sim$ be the congruence on $H_\Sigma$ defined by $f \sim f' \Leftrightarrow \pi(f) = \pi(f')$.
This extends to a congruence on $V_\Sigma$.
The quotient of $\Sigma^\Delta = (H_\Sigma, V_\Sigma)$ by this congruence is an infinite forest algebra whose forest types are precisely the elements of $\freedisth$.
It is also not hard to show that this infinite forest algebra is equal to the quotient of $\Sigma^\Delta$ by the congruence induced by $v[h+h'] = vh+vh'$, and is therefore distributive.
We will call this infinite distributive forest algebra $\Sigma^\Delta_D = (\freedisth, V_\Sigma^D)$.
Its forest types are equivalence classes of forests that have the same path set, and can thus be represented as finite path sets.
We can extend $\pi$ to a forest algebra morphism $\Sigma^\Delta \rightarrow \Sigma^\Delta_D$, which is the canonical projection for the congruence $\sim$.

The proof of the Proposition now closely follows the proof of the main theorem.
Analogous to that proof, let us consider $D_{\phi, \pi}$, which has \emph{infinitely} many objects.
Note that the proof of the Derived Category Theorem (Theorem~\ref{thm:derived-cat}) given by \cite{straubing-forest-2018} applies even when the involved algebras and categories are infinite.
We want to show that $D_{\phi,\pi}$ is locally distributive.
Let $h, h'$ be objects, let $f_1, f_2 \in \HArr(h)$, and let $v \in \Arr(h,h')$.
By the definition of the Derived Category, we can write $f_1$ as $\carrow{}{h_1}h$ and $f_2$ as $\carrow{}{h_2}h$.
Also, we can write $v$ as $\carrow{h}{p}{h'}$, with $p \in V_\Ff$.
We want to prove the equality from Definition~\ref{def:loc-dist}.
Note that $h_1, h_2 \in H_\Ff$.
In view of the construction of the half-arrows in the derived category, there are forests $t_1, t_2 \in H_\Sigma$ such that $\phi(t_i) = h_i$ and $\pi(t_i) = h$ for $i=1,2$.
Let $\alpha' \in \phi^{-1}(p)$.
Since $\Ff$ is 2-distributive, we have $\phi(\alpha'[t_1+t_2]) = \phi(\alpha' t_1 + \alpha' t_2)$.
Applying $\phi$, this means $$p(h_1+h_2) = p(h_1) + p(h_2)$$
In the derived category, this translates to $$v (f_1 + f_2)) = v f_1 + v f_2$$
or, in arrow-based notation,
$$ \carrow{(\carrow{}{h_1}h + \carrow{}{h_2}h)}{p}{h'} =  \carrow{\carrow{}{h_1}h}{p}{h'} +  \carrow{\carrow{}{h_2}h}{p}{h'}$$
Thus, $D_{\phi,\pi}$ is locally distributive.

It is not finite, so we cannot directly apply the Local-Global Theorem~\ref{thm:loc-glob} here.
If we inspect the proof of the Local-Global Theorem, we see that the first part still applies here:
For any two distinct half-arrows in $D_{\phi, \pi}$, we get $$(\dagger)\ \pi_{\Sigma'}(\operatorname{Val}^{-1}(\carrow{}{c}h)) \cap \pi_{\Sigma'}(\operatorname{Val}^{-1}(\carrow{}{c'}{h'})) = \emptyset$$
where $\pi_{\Sigma'}$ maps forest-diagrams to sets of paths over the alphabet $\Sigma'$ consisting of arrows and half-arrows of $D_{\phi,\pi}$.
This alphabet is \emph{infinite}, but this needn't concern us:
$\Sigma'^\Delta$ is again a free forest algebra.
Furthermore, $\Sigma'^\Delta_D$, the quotient by the congruence induced by $\pi_{\Sigma'}$, is an infinite distributive algebra.
We now construct a division $D_{\phi,\pi} \prec \Sigma'^\Delta_D$ analogous to our reasoning in the proof of the Local-Global Theorem.
We assign to each half-arrow $\carrow{}{c}h$ in $D_{\phi,\pi}$ the set of half-arrows in $\Sigma'^\Delta_D$ that are in the image of $\operatorname{Val}^{-1}(\carrow{}{c}h)$ under $\pi_{\Sigma'}$, and similarly for arrows.
As in that proof, this mapping preserves operations, and, due to $(\dagger)$, is injective. 

In view of $D_{\phi,\pi} \prec \Sigma'^\Delta_D$, the Derived Category Theorem now implies that $\Ff$ divides the wreath product of the two infinite distributive algebras $\Sigma'^\Delta_D$ and $\Sigma^\Delta_D$.

\newpage
\section{Proof for Example~\ref{ex:langs}}\label{sec:proof-ex}

To show that the syntactic algebra is 2-distributive, we first note that $\pi$ separates $\La_1$ and $\La_2$:
We say that $f$ is \emph{compatible with $\La_1$} if \[\max_n \left(a^nb \in \pi(f)\right) = \max_n \left(a^nc \in \pi(f)\right)\], while $f$ is \emph{compatible with $\La_2$} if \[1+\max_n \left(a^nb \in \pi(f)\right) = \max_n \left(a^nc \in \pi(f)\right)\]
These two conditions define disjoint subsets of $H_\Sigma$, separate the two languages, and are definable through a morphism into an infinite distributive forest algebra.
Now consider the following statement

\begin{enumerate}
\item there is some nonempty path

\item each maximal path has the form $d^n a^m (b/c)$, where

\begin{enumerate}
\item the $d$'s alternatingly have children that are compatible (modulo $\pi$) with $\La_1$ and compatible with $\La_2$

\item In each path $d^n a^m b$, either (1) the $b$ node has a $c$ sibling and for each $a$ node and for the last $d$ node, what comes below it is compatible with $\La_1$, or (2) the $b$ node has an $ac$ sibling and for each $a$ node and for the last $d$ node, what comes below it is compatible with $\La_2$

\item In each path $d^n a^m c$, either (1) the $c$ node has a $b$ sibling and for each $a$ node and for the last $d$ node, what comes below it is compatible with $\La_1$, or (2) the last $a$ node exists and has a $b$ sibling and for each $a$ node before it and for the last $d$ node, what comes below it is compatible with $\La_2$
\end{enumerate}
\end{enumerate}

This can be expressed by a morphism into a wreath product of two infinite distributive algebras.
One can show that this defines $\La$:

\begin{proof}
First, it is straightforward that all forests in $\La$ satisfy this description.
For the converse, we do induction over the size of forests satisfying the description.
We want to show that, for any path where the forest below the last $d$ node is compatible with $\La_1$, those of its children whose roots are labeled $a$ actually form an element of $\La_1$. Similarly for $\La_2$.
As the base case, we take those forests that have only $d$ nodes.

Now let $f$ be a forest satisfying this description, and let $\pi = \nu_1 ... \nu_n$ a maximal sequence of nodes from root to leaf, with $\nu_1$ a root node and $\nu_n$ a leaf labeled $b$ or $c$.
We know there is at least one $d$ node, unless $f$ is an empty forest (which is covered by the base case).
Let's assume that what comes below the last $d$ node is compatible with $\La_1$.
If $\nu_n$ is labeled $b$, it has a $c$ sibling, otherwise it is labeled $c$ and has a $b$ sibling.
Thus, $\nu_n$ plus this sibling are in $\La_1$.

We remove these two nodes. We then remove any $a$ nodes that have only $a$ nodes below them.

\begin{itemize}
\item Case 1: The last $d$ node has no $a$ child. Then we can apply the induction hypothesis or go directly to the base case.

If this is not the case, what comes below the last $d$ node must be compatible with either $\La_1$ or $\La_2$.

\item Case 2: What comes below the last $d$ node is still compatible with $\La_1$. Then  we apply the induction hypothesis.

\item Case 3: What comes below the last $d$ node is now compatible with $\La_2$.
We know that there is a path $a^{l+1}c$ branching off somewhere from our path, and it is in total at most as long as our path was, and the longest $a^*b$ path is shorter by one.
Contradiction, since what comes below the node at which the two branches parted must have been compatible with $\La_1$, which contradicts what we get when considering the conditions placed on this $a^{l+1}c$ path.
\end{itemize}

We have concluded that the $a$-children of the last $d$-node must indeed have formed an element of $\La_1$.

So now let's assume that what comes below the last $d$ node is compatible with $\La_2$.
If $\nu_n$ is labeled $c$, $\nu_{n-1}$ is an $a$ node, and $\nu_{n-1}$ has a $b$ sibling.
Let's remove these two nodes, and then all $a$ nodes having only $a$ nodes below them.
Let's assume that what comes below the last $d$ node is now compatible with $\La_1$.
But then as previously we get a contradiction with conditions placed on the paths that are now longest.

In conclusion, we have shown that the description given above captures $\La$.
\end{proof}

Now let $\Ff$ be the syntactic algebra with morphism $\phi$ and let $\Gg$ be a finite distributive algebra with some morphism $\psi$.
Consider the derived category $D_{\phi,\psi}$.
Considering Proposition~\ref{prop:distr-char}, no language recognized by a finite distributive forest algebra can separate $\La_1$ from $\La_2$.
Thus, since $\Gg$ is finite and distributive, there are forests $f_1 \in \La_1$, $f_2 \in \La_2$ with $\psi(f_1) = \psi(f_2)$.
On the other hand, $d[f_1]+d[f_2] \in \La$, while $d[f_1+f_2] \not\in \La$. Thus,
$$\phi(d[f_1+f_2]) \neq \phi(d[f_1]+d[f_2])$$
Since $\psi(f_1) = \psi(f_2)$, the derived category is not locally distributive (recall Definition~\ref{def:loc-dist}).
So the derived category does not divide any distributive forest algebra.
Appealing to the second direction of the Derived Category Theorem as stated in Theorem~\ref{thm:full:deriv}, we see that, whenever $\Gg' \wr \Gg$ recognizes $\La$, the algebra $\Gg'$ cannot be distributive.

\end{document}